   \newcommand{\InDCGVer}[1]{}%
   \newcommand{\InNotDCGVer}[1]{#1}%
   \newcommand{\InDCGVer}[1]{#1}%
   \newcommand{\InNotDCGVer}[1]{}%
\numberwithin{figure}{section}
\numberwithin{table}{section}
\numberwithin{equation}{section}%
\newlength{\savedparindentx}
   \theoremstyle{plain}%
   \newtheorem{theorem}{Theorem}[section] 
   \newtheorem{lemma}[theorem]{Lemma}
   \theoremstyle{plain}%
   \newtheorem{definition}[theorem]{Definition}%
   \newtheorem{assumption}[theorem]{Assumption}%
   \newtheorem*{remark:unnumbered}{Remark}%
   \newtheorem{remark}[theorem]{Remark}%
   \newtheorem{remark:ext}[theorem]{Remark}%
\newcommand{\myqedsymbol}{\rule{2mm}{2mm}} 
\theoremstyle{nonumberplain}
\newtheorem{proof}{Proof:}%
   \spnewtheorem*{remark:unnumbered}{Remark}{\itshape}{\rmfamily}
   \spnewtheorem{assumption}{Assumption}{\itshape}{\rmfamily}
\newcommand{\cardin}[1]{\left| {#1} \right|}
\newcommand{\pbrc}[1]{\mleft[ {#1} \mright]}%
\newcommand{\vrt}{\Mh{v}}%
\newcommand{\SubCurve}[3]{#1\pbrc{#2, #3}}
\definecolor{blue25}{rgb}{0.0,0,0.85}%
\newcommand{\emphic}[2]{%
   \textcolor{blue25}{%
      \textbf{\emph{#1}}}%
   \index{#2}}
\newcommand{\emphi}[1]{\emphic{#1}{#1}}%
\newcommand{\Term}[1]{\textsf{#1}}
\newcommand{\NPHard}{\Term{NP-Hard}\xspace}
\newcommand{\NP}{\Term{NP}\xspace}
\providecommand{\DFS}{\Term{DFS}\xspace}
\newcommand{\leashChar}{\Mh{\ell}}
\newcommand{\leashX}[1]{\leashChar\pth{#1}}
\newcommand{\leashXM}[1]{\leashChar^-\pth{#1}}
\newcommand{\leashXP}[1]{\leashChar^+\pth{#1}}
\newcommand{\morphChar}{\Mh{\mu}}%
\newcommand{\morphX}[1]{\morphChar\pth{#1}}%
\newcommand{\height}{\mathsf{height}}
\newcommand{\heightX}[1]{\height\pth{#1}}
\newcommand{\hhX}[2]{\mathsf{h}\pth{#1, #2}}
\newcommand{\hhXX}[4]{\mathsf{h}\pth{#1, #2, #3, #4}}
\newcommand{\HHopt}{\mathsf{h}_{\mathrm{opt}}}
\newcommand{\Frechet}{Fr\'{e}chet\xspace}
\newcommand{\distChar}{\mathsf{d}}%
\newcommand{\distZ}[3]{\distChar_{#1}\pth{#2, #3}}
\newcommand{\dist}[2]{\distChar\pth{#1,#2}}%
\newcommand{\MakeBig}{\rule[-.2cm]{0cm}{0.4cm}}
\newcommand{\etal}{\textit{et~al.}\xspace}
\newcommand{\brc}[1]{\left\{ {#1} \right\}}
\newcommand{\Polytope}{\Mh{\mathcal{P}}}%
\newcommand{\PolySet}{\Mh{\EuScript{Q}}}%
\newcommand{\Polygon}{\Mh{{Q}}}%
\newcommand{\bpnt}{\Mh{b}}%
\newcommand{\distP}[1]{\Mh{d}_{\Polytope}\pth{#1}}%
\newcommand{\MAxis}{\Mh{M}}%
\newcommand{\spnt}{\Mh{s}}%
\newcommand{\pnt}{\Mh{p}}%
\newcommand{\pntA}{\Mh{q}}
\newcommand{\PntSet}{\Mh{P}}
\renewcommand{\th}{th\xspace}
   \newcommand{\HLinkShort}[2]{\hyperref[#2]{#1\ref*{#2}}}
   \newcommand{\HLink}[2]{\hyperref[#2]{#1~\ref*{#2}}}
   \newcommand{\HLinkPage}[2]{\hyperref[#2]{#1~\ref*{#2}%
         $_\text{p\pageref{#2}}$}}
   \newcommand{\HLinkPageOnly}[1]{\hyperref[#1]{Page~\refpage*{#1}%
         $_\text{p\pageref{#1}}$}}
   \newcommand{\HLinkSuffix}[3]{\hyperref[#2]{#1\ref*{#2}{#3}}}
   \newcommand{\HLinkPageSuffix}[3]{\hyperref[#2]{#1\ref*{#2}%
         #3$_\text{p\pageref{#2}}$}}%
\newcommand{\figlab}[1]{\label{fig:#1}}
\newcommand{\figref}[1]{\HLink{Figure}{fig:#1}}
\newcommand{\seclab}[1]{\label{sec:#1}}
\newcommand{\secref}[1]{\HLink{Section}{sec:#1}}
\providecommand{\deflab}[1]{\label{def:#1}}
\newcommand{\defref}[1]{\HLink{Definition}{def:#1}}
\newcommand{\apndlab}[1]{\label{apnd:#1}}
\newcommand{\apndref}[1]{\HLink{Appendix}{apnd:#1}}
\newcommand{\itemlab}[1]{\label{item:#1}}
\newcommand{\itemref}[1]{\HLinkSuffix{(}{item:#1}{)}}
\newcommand{\remlab}[1]{\label{rem:#1}}
\newcommand{\remref}[1]{\HLink{Remark}{rem:#1}}%
\providecommand{\eqlab}[1]{}%
\renewcommand{\eqlab}[1]{\label{equation:#1}}
\newcommand{\Eqref}[1]{\HLinkSuffix{Eq.~(}{equation:#1}{)}}
\newcommand{\asmlab}[1]{\label{assumption:#1}}
\newcommand{\asmref}[1]{\HLink{Assumption}{assumption:#1}}%
\newcommand{\lemlab}[1]{\label{lemma:#1}}
\newcommand{\lemref}[1]{\HLink{Lemma}{lemma:#1}}%
\newcommand{\thmlab}[1]{{\label{theo:#1}}}
\newcommand{\thmref}[1]{\HLink{Theorem}{theo:#1}}
\providecommand{\Mh}[1]{#1}%
\providecommand{\pth}[1]{\mleft({#1}\mright)}
\newcommand{\distFrHChar}{{\mathsf{d}_{\mathcal{H}}}}
\newcommand{\distFrH}[2]{\distFrHChar\pth{#1, #2}}
\newcommand{\distFrChar}{\mathsf{d}_{\mathcal{F}}}
\newcommand{\distFr}[2]{\distFrChar\pth{#1, #2}}
\newcommand{\vTop}{\Mh{\mathsf{t}}}
\newcommand{\vBot}{\Mh{\mathsf{b}}}
\newcommand{\vTopL}{\mathsf{t}_l}
\newcommand{\vTopR}{\mathsf{t}_r}
\newcommand{\vBotL}{\mathsf{b}_l}
\newcommand{\vBotR}{\mathsf{b}_r}
\newcommand{\CTop}{\Mh{\mathsf{T}}}%
\newcommand{\CBot}{\Mh{\mathsf{B}}}%
\newcommand{\CLeft}{\Mh{\mathsf{L}}}%
\newcommand{\CMid}{\Mh{\mathsf{M}}}%
\newcommand{\CRight}{\Mh{\mathsf{R}}}%
\newcommand{\EdgeSet}{\Mh{\mathcal{E}}}%
\newcommand{\EdgeSetS}{\Mh{S}}%
\newcommand{\Disk}{\Mh{\mathcal{D}}}%
\newcommand{\Pocket}{\mathsf{P}}%
\newcommand{\bd}{\partial}%
\newcommand{\face}{F}%
\newcommand{\TVertices}{\mathsf{M}}
\newcommand{\TLeftC}{\TVertices_l}
\newcommand{\TRightC}{\TVertices_r}
\newcommand{\TLeft}[1]{\TLeftC\pth{#1}}
\newcommand{\TRight}[1]{\TRightC\pth{#1}}
\newcommand{\pTop}{\mathsf{t}}
\newcommand{\pBot}{\mathsf{b}}
\newcommand{\walk}{\omega}
\newcommand{\walkA}{\chi}
\newcommand{\widthX}[1]{\mathrm{width}\pth{#1}}
\newcommand{\lenX}[1]{\left\|#1\right\|}
\newcommand{\edge}{\Mh{{e}}}%
\newcommand{\Vertices}[1]{\mathsf{V}\pth{#1}}
\newcommand{\BPath}{\sigma}
\newcommand{\PathL}{\sigma_l}
\newcommand{\PathA}{\Mh{\zeta}}%
\newcommand{\PathB}{\Mh{\psi}}%
\newcommand{\leftC}[1]{\pi_\mathrm{L,{#1}}}
\newcommand{\rightC}[1]{\pi_\mathrm{R,{#1}}}
\newcommand{\iClass}{\mathsf{h}}
\newcommand{\iClassOpt}{\mathsf{h}_{\mathrm{opt}}}
\newcommand{\signXL}[1]{\Mh{\mathrm{sg}}\pth{#1}}%
\newcommand{\Graph}{\mathsf{G}}
\renewcommand{\Re}{{\rm I\!\hspace{-0.025em} R}}
\newcommand{\eps}{\varepsilon}
\newcommand{\IncludeGraphics}[2][]{%
   \typeout{}%
   \typeout{Graphics: #2}%
   \typeout{\ includegraphics[#1]{#2}}%
   \includegraphics[#1]{#2}
   \typeout{}%
}
\newcommand{\concat}{\cdot}
\newcommand{\strip}{\Mh{C}}%
\newcommand{\chunk}{\Mh{C'}}%
\newlength{\savedparindent}
\newcommand{\SaveIndent}{\setlength{\savedparindent}{\parindent}}
\newcommand{\RestoreIndent}{\setlength{\parindent}{\savedparindent}}
\newcommand{\dLeft}{\mathsf{d}_{\CLeft}}
\newcommand{\dBoth}{\mathsf{d}}
\newcommand{\CostL}[1]{\mathrm{cost}\pth{#1}}
\newcommand{\CostSetL}[1]{\mathrm{Cost}\pth{#1}}
\newcommand{\tall}{\ensuremath{\tau}\xspace}
\newcommand{\ds}{\displaystyle}
\newcommand{\si}[1]{#1}
\newcommand*\circled[1]{\footnotesize\tikz[baseline=(char.base)]{
    \node[shape=circle,draw,inner sep=0.2pt] (char) {#1};}}
\newcommand{\Tree}{\Mh{\mathcal{T}}}%
\newcommand{\curveA}{\mathsf{B}}%
\newcommand{\curveB}{\mathsf{C}}%
\newcommand{\pntOnA}{\mathsf{b}}%
\newcommand{\pntOnB}{\mathsf{c}}%
\newcommand{\PathSet}{\Mh{\Pi}}%
\newcommand{\PSetX}[1]{\Mh{\Pi}_{#1}}%
\newcommand{\SarielComp}[1]{}
\newcommand{\NotSarielComp}[1]{#1}%
\newcommand{\SarielComp}[1]{#1}%
\newcommand{\NotSarielComp}[1]{}%
       \renewcommand{\Mh}[1]{{\textcolor{ColorMath}{#1}}}
\begin{document}

\title{How to Walk Your Dog in the Mountains with No Magic Leash%
   \thanks{A preliminary version of this paper appeared in SoCG 2012
      \cite{hnss-hwydm-12}.}%
}

\author{Sariel Har-Peled%
   \InNotDCGVer{%
      \thanks{Department of Computer Science, University of Illinois,
         Urbana-Champaign; \url{sariel@illinois.edu}. %
         Work on this paper was partially supported by NSF AF awards
         CCF-0915984, 
         CCF-1421231, and 
         CCF-1217462.  
      }%
   }%
   \and%
   Amir Nayyeri%
   \InNotDCGVer{%
      \thanks{School of Electrical Engineering \& Computer Science,
         Oregon State University; \url{nayyeria@eecs.oregonstate.edu}.
         Part of this work was done while visiting Toyota
         Technological Institute at Chicago; %
         \url{nayyeri2@illinois.edu}.%
      }%
   }%
   \and%
   Mohammad Salavatipour%
   \InNotDCGVer{%
      \thanks{Department of Computing Science, University of Alberta;
         Supported by NSERC and Alberta Innovates.  Part of this work
         was done while visiting Toyota Technological Institute at
         Chicago; \url{mreza@cs.ualberta.ca}.%
      }%
   }%
   \and%
   Anastasios Sidiropoulos%
   \InNotDCGVer{%
      \thanks{%
         Department of Computer Science \& Engineering, and Department
         of Mathematics, The Ohio State University;
         \url{sidiropoulos.1@osu.edu}.%
      }%
   }%
}%

\InDCGVer{%
   \institute{S. Har-Peled%
      \at%
      Department of Computer Science, University of Illinois,
      Urbana-Champaign; \url{sariel@illinois.edu}. Work on this paper
      was partially supported by NSF AF awards CCF-0915984,
      CCF-1421231, and 
      CCF-1217462.  
      \and%
      A. Nayyeri %
      \at School of Electrical Engineering \& Computer Science, Oregon
      State University; \url{nayyeria@eecs.oregonstate.edu}. %
      \and %
      M. Salavatipour %
      \at Department of Computing Science, University of Alberta;
      Supported by NSERC and Alberta Innovates.  Part of this work was
      done while visiting Toyota Technological Institute at Chicago;
      \url{mreza@cs.ualberta.ca}.%
      \and %
      A. Sidiropoulos %
      \at%
      Department of Computer Science \& Engineering, and Department of
      Mathematics, The Ohio State University;
      \url{sidiropoulos.1@osu.edu}.  Research supported in part by the
      NSF grants CCF 1423230 and CAREER 1453472. %
   }%
}

\maketitle

\setfnsymbol{stars}

\begin{abstract}
    We describe a $O(\log n )$-approximation algorithm for computing
    the homotopic \Frechet distance between two polygonal curves that
    lie on the boundary of a triangulated topological disk.  Prior to
    this work, algorithms were known only for curves on the Euclidean
    plane with polygonal obstacles.
    
    A key technical ingredient in our analysis is a
    $O(\log n)$-approximation algorithm for computing the minimum
    height of a homotopy between two curves.  No algorithms were
    previously known for approximating this parameter.  Surprisingly,
    it is not even known if computing either the homotopic \Frechet
    distance, or the minimum height of a homotopy, is in \NP. %
    \InDCGVer{%
       \keywords{\Frechet distance, approximation algorithms, homotopy
          height} %
       \subclass{68W25} %
    }%
\end{abstract}


\section{Introduction}

Comparing the shapes of curves -- or sequenced data in general -- is a
challenging task that arises in many different contexts. The
\emphi{\Frechet{} distance} and its variants (e.g. dynamic
time-warping \cite{kp-sudtw-99}) have been used as a similarity
measure in various applications such as matching of time series in
databases \cite{kks-osmut-05}, comparing melodies in music information
retrieval \cite{sgh-csi-08}, matching coastlines over time
\cite{mdbh-cmpdfd-06}, as well as in map-matching of vehicle tracking
data \cite{bpsw-mmvtd-05,wsp-anmms-06}, and moving objects analysis
\cite{bbg-dsfm-08,bbgll-dcpcs-08}.  See \cite{ab-scfds-05,ag-cfdbt-95}
for algorithms for computing the \Frechet distance.

Informally, for a pair of such curves $f,g:[0,1]\to \Disk$, for some
ambient metric space $(\Disk, \distChar)$, their \Frechet distance is
the minimum length of a leash needed to traverse both curves in sync.  To
this end, imagine a person traversing $f$ starting from $f(0)$, and a
dog traversing $g$ starting from $g(0)$, both traveling continuously
along these curves without ever moving backwards.  Then the \Frechet
distance is the infimum over all possible traversals, of the maximum
distance between the person and the dog.  This notion can be
formalized via a reparameterization: a continuous bijection
$\phi:[0,1]\to[0,1]$.  The \emphi{width} of $\phi$, i.e., the longest
leash needed by $\phi$, is
$\displaystyle \widthX{\phi} = \sup_{t\in [0,1]} \dist{ \MakeBig
   f(t)}{g(\phi(t))}$,
where $\dist{x}{y}$ is the distance between $x$ and $y$ on $\Disk$.
Consequently, the \Frechet distance between $f$ and $g$ is defined to
be
\begin{align*}
    \distFr{f}{g}= \inf_{\phi:[0,1]\to [0,1]} \widthX{\phi},
\end{align*}
where $\phi$ ranges over all orientation-preserving homeomorphisms.

While this measure captures similarities between two curves when the
underlying space is Euclidean, it is not as informative for more
complicated underlying spaces such as a surface.  For example, imagine
walking a dog in the woods. The leash might get tangled as the dog and
the person walk on two different sides of a tree.  Since the \Frechet
distance cares only about the distance between the two moving points,
the leash would ``magically'' jump over the tree. In reality, when
there is no ``magic'' leash that jumps over a tree, one has to take
into account the extra length needed (for the leash) to pass over such
obstacles.

\paragraph{Homotopic \Frechet distance.}

To address this shortcoming, \emphi{homotopic \Frechet distance}, a
natural extension of the above notion was introduced by Chambers \etal
\cite{cdellt-wydwpt-10}.  Informally, revisiting the above person-dog
analogy, we consider the infimum over all possible traversals of the
curves, but this time, we require that the person is connected to the
dog via a leash, i.e.,~a curve that moves continuously over time.
Furthermore, one keeps track of the leash during the motion, where the
purpose is to minimize the maximum leash length needed.

To this end, consider a homotopy $h:[0,1]^2 \to \Disk$, which can also
be viewed as a homeomorphism between the unit square and $\Disk$.  For
parameters $\sigma,\tau \in [0,1]$ consider the one dimensional
functions $\leashChar(\tau) = h(\tau, \cdot):[0,1]\to\Disk$ and
$\morphChar(\sigma) = h(\cdot, \sigma):[0,1]\to\Disk$.  These are
parameterized curves that are the natural restrictions of $h$ into one
dimension.  We require that $\morphX{0} = f$ and $\morphX{1} = g$.
The \emphi{homotopy width} of $h$ is
$\ds \widthX{h} = \sup_{\tau\in [0,1]} \lenX{ \leashX{\tau} }$, and
the \emphi{homotopic \Frechet distance} between $f$ and $g$ is
\begin{align*}
    \distFrH{f}{g} = \inf_{h:[0,1]^2 \to \Disk} \widthX{h},
\end{align*}
where the infimum is over all homeomorphisms $h$ between $[0,1]^2$ and
$\Disk$, and $\lenX{\cdot}$ denotes the length of a curve.  Note that
$\leashX{\cdot}$, in particular, specifies a reparametrization between
the curves $f$ and $g$.

Clearly, $\distFrH{f}{g} \geq \distFr{f}{g}$ and, furthermore,
$\distFrH{f}{g}$ can be arbitrarily larger than $\distFr{f}{g}$.  We
remark that $\distFrH{f}{g} = \distFr{f}{g}$ for any pair of curves in
the Euclidean plane, as we can always pick the leash to be a straight
line segment between the person and the dog.  In other words, the map
$h$ in the definition of $\distFrHChar$ can be obtained from the map
$\phi$ in the definition of $\distFrChar$ via an appropriate affine
extension.  However, this is not true for general ambient spaces,
where the leash might have to pass over obstacles, hills, or more
generally regions of positive or negative curvature, etc. In
particular, in the general settings, usually, the leash would not be a
geodesic (i.e., a shortest path) during the motion.

\begin{figure}
    \centerline{%
       \begin{tabular}{\si{ccc}}
         \IncludeGraphics[page=1]{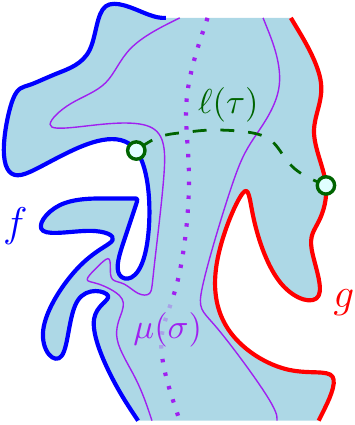}%
         &%
           \hspace{0.1\linewidth}
         &%
           \IncludeGraphics[page=2]{figs/homotopy}%
         \\
         (i) & & (ii)
       \end{tabular}%
    }
    \caption{(i) Two curves $f$ and $g$, and (ii) the parameterization
       of their homotopic \Frechet distance.}
    \figlab{morphing}
\end{figure}

The homotopic \Frechet distance is referred to as the \emphi{morph
   width} of $f$ and $g$, and it bounds how far a point on $f$ has to
travel to its corresponding point in $g$ under the morph of $h$
\cite{eghmm-nsmpa-02}. The length of $\morphX{\sigma}$ is the
\emph{height of the morph at time $\sigma$}, and the \emphi{height} of
such a morph is
$\heightX{h} = \sup_{\sigma\in[0,1] } \lenX{\morphX{\sigma}}$. The
\emphi{homotopy height} between $f$ and $g$, bounded by given starting
and ending leashes $\ell(0)$ and $\ell(1)$, is
\begin{align*}
    \hhXX{f}{g}{\Bigl.\ell(0)}{\ell(1)} = \inf_{h} \heightX{h},
\end{align*}
where $h$ varies over all possible morphs between $f$ and $g$, such
that each curve $\morphChar(\sigma)$ has one end on $\ell(0)$ and one
end on $\ell(1)$.  See \figref{morphing} for an example.  Note that if
we do not constrain the endpoints of the curves during the morph to
stay on $\ell(0)$ and $\ell(1)$, the problem of computing the minimum
height homotopy is trivial.  One can contract $f$ to a point, send it
to a point in $g$, and then expand it to $g$.  To keep the notation
simple, we use $\hhX{f}{g}$ when $f$ and $g$ have common endpoints.

Intuitively, the homotopy height measures how long the curve has to
become as it deforms from $f$ to $g$, and it was introduced by
Chambers and Letscher \cite{cl-hh-09,cl-ehh-10} and Brightwell and
Winkler~\cite{bw-sp-09}. Observe that if we are given the starting and
ending leashes $\leashX{0}$ and $\leashX{1}$ then the homotopy height
of $f$ and $g$ when restricted on homotopies that agree with
$\leashX{0}$ and $\leashX{1}$ is the homotopic \Frechet distance
between $\leashX{0}$ and $\leashX{1}$.

\medskip

Here, we are interested in the problems of computing the homotopic
\Frechet distance and the homotopy height between two simple polygonal
curves that lie on the boundary of an arbitrary triangulated
topological disk.

\paragraph{Why are these measures interesting?}
For the sake of the discussion here, assume that we know the starting
and ending leash of the homotopy between $f$ and $g$. The region
bounded by the two curves and these leashes, forms a topological disk,
and the mapping realizing the homotopic \Frechet distance specifies
how to sweep over $\Disk$ in a geometrically ``efficient'' way
(especially if the leash does not sweep over the same point more than
once), so that the leash (i.e., the sweeping curve) is never too long
\cite{eghmm-nsmpa-02}. As a concrete example, consider the two curves
as enclosing several mountains between them on the surface --
computing the homotopic \Frechet distance corresponds to deciding
which mountains to sweep first and in which order.

Furthermore, this mapping can be interpreted as surface
parameterization \cite{f-psast-97,ss-spmtf-00} and can thus be used in
applications such as texture mapping~\cite{bvig-psfnd-91,pb-stmss-00}.
In the texture mapping problem, we wish to find a continuous and
invertible mapping from the texture, usually a two-dimensional
rectangular image, to the surface.

Another interesting interpretation is when $f$ is a closed curve, and
$g$ is a point. Interpreting $f$ as a rubber band in a 3d model, the
homotopy height between $f$ and $g$ here is the minimum length the
rubber band has to have so that it can be collapsed to a point through
a continuous motion within the surface.  In particular, a short closed
curve with large homotopy height to any point in the surface is a
``neck'' in the 3d model.

To summarize, these measures seem to provide us with a fundamental
understanding of the structure of the given surface/model.

\paragraph{Continuous vs.~discrete.} 
Here we are interested in two possible models. In the
\emphi{continuous} settings, as described above, the leash moves
continuously in the interior of the domain. In the \emphi{discrete}
settings, the leash is restricted to the triangulation edges. As such,
a transition of the leash corresponds to the leash ``jumping'' over a
single face at each step. The two versions are similar in nature, but
technically require somewhat different tools and insights. This issue
is discussed more formally in \secref{prelims}.

\paragraph{Previous work.}

The problem of computing the (standard) \Frechet distance between two
polygonal curves in the plane has been considered by Alt and Godau
\cite{ag-cfdbt-95}, who gave a polynomial time algorithm. Eiter and
Mannila \cite{em-cdfd-94} studied the easier discrete version of this
problem.  Computing the \Frechet distance between surfaces
\cite{f-sdds-24}, appears to be a much more difficult task, and its
complexity is poorly understood.  The problem has been shown to be
\NPHard by Godau \cite{g-cmsbg-99}, while the best algorithmic result
is due to Alt and Buchin \cite{ab-scfds-05}, who showed that it is
upper semi-computable.

Efrat \etal \cite{eghmm-nsmpa-02} considered the \Frechet distance
inside a simple polygon as a way to facilitate sweeping it
efficiently. They also used the \Frechet distance with the underlying
geodesic metric as a way to obtain a morph between two curves. For
recent work on the \Frechet distance, see
\cite{cw-gfdis-10,cljl-ifd-11,%
   hr-fdre-11,cdhsw-cfdfp-11,dhw-afdrc-12,%
   cw-sppps-12} and references therein.

Chambers \etal \cite{cdellt-wydwpt-10} gave a polynomial time
algorithm to compute the homotopic \Frechet distance between two
polygonal curves on the Euclidean plane with polygonal obstacles.
Chambers and Letscher \cite{cl-hh-09,cl-ehh-10} and Brightwell and
Winkler \cite{bw-sp-09} considered the notion of minimum homotopy
height, and proved structural properties for the case of a pair of
paths on the boundary of a topological disk.  We remark that in
general, it is not known whether the optimum homotopy has polynomially
long description.  In particular, it is not known whether the problem
is in \NP.

Variants of the \Frechet distance for curves that are known to be
computationally hard, include (i) the problem of finding the most
similar \emph{simple} (i.e., no self crossings) curve to a given curve
on a surface \cite{sw-sce-13}, and (ii) computing the optimal \Frechet
distance when allowing shortcuts anywhere on one of the curves
\cite{bds-cfdsn-13}. Chambers and Wang \cite{cw-msbc2-13} study a
measure of similarity between curves that involves the minimum area
spanned by a homotopy.

For a Riemannian $2$-disk with boundary length $L$, diameter $d$ and
area $A \ll d$, Papasoglu \cite{p-ctd-13} showed that there is a
homotopy that contracts the disk to a point, such that the maximum
length of the homotopy curve is at most
\begin{math}
    L + 2d + O\bigl(\sqrt{A} \, \bigr).
\end{math}
Chambers and Rotman \cite{cr-clr2s-13} showed that given such a
homotopy with maximum length $L$ (i.e., any contraction of a disk to a
point), one can modify it into a homotopy using only the loops of a
base point $\pnt$, contracting the disk into $\pnt$, with maximum
length $L + 2d + \eps$, where $\eps >0$ is arbitrarily small.

\paragraph{Our results.}

In this paper, we consider the problems of computing the homotopic
\Frechet distance and the homotopy height between two simple polygonal
curves that lie on the boundary of a triangulated topological disk
$\Disk$ that is composed of $n$ triangles.  We give a polynomial time
$O(\log n)$-approximation algorithm for computing the homotopy height
between $f$ and $g$.  Our approach is based on a simple, yet delicate
divide and conquer approach.

We use the homotopy height algorithm as an ingredient for a
$O(\log n)$-approximation algorithm for the homotopic \Frechet
distance problem.  Here is a high-level description of our algorithm
for approximating the homotopic \Frechet distance: We first guess
(i.e., search over) the optimum (i.e.,~$\distFrH{f}{g}$).  Using this
guess, we classify parts of $\Disk$ as ``obstacles'', i.e., regions
over which a short leash cannot pass.  Consider the punctured disk
obtained from $\Disk$ after removing these obstacles.  Intuitively,
two leashes are homotopic if one can be continuously deformed to the
other within the punctured disk, while its endpoints remain on the
boundary during the deformation.  Observe that the leashes of the
optimum solution are homotopic.  We describe a greedy algorithm to
compute a ``small'' number of homotopy classes out of infinitely many
choices.  The homotopic \Frechet distance constrained to paths inside
one of these classes is a polynomial approximation to the homotopic
\Frechet distance in $\Disk$. We can then perform a binary search over
this interval to acquire a better approximation. An extended version
of the homotopy height algorithm is used in this algorithm in several
places.

The $O(\log n)$ factor shows up in the homotopic \Frechet distance
algorithm only because it uses the homotopy height as a subroutine.
Thus, any constant factor approximation algorithm for the homotopy
height problem implies a constant factor approximation algorithm for
the homotopic \Frechet distance.

We also shortly sketch, in \apndref{sweep:p}, an algorithm for
sweeping the boundary of a convex polytope in three dimensions, by a
guard that is connected by a continuously moving leash to a base
point on the boundary of the polytope. This algorithm is a warm-up
exercise for the more involved problem studied in this paper, and it
might be of independent interest.

\paragraph{Organization.}
We provide basic definitions in \secref{prelims}.  Then we consider
the discrete version of the homotopy height problem in
\secref{hh_discrete}. 
Later, in \secref{h:h:continuous}, we describe an
algorithm to approximately find the shortest homotopy in the
continuous settings.  In \secref{HFD}, we address the homotopic
\Frechet distance, for both the discrete and the continuous cases.  We
conclude in \secref{conclusions}.

\section{Background}
\seclab{prelims}

\subsection{Planar graphs}

Let $\Graph = (V,E)$ be a simple undirected graph with edge weights
$w:E \rightarrow \Re^+$.  For any $u,v\in V$ we denote by
$\distZ{\Graph}{x}{y}$ the shortest-path distance between $u$ and $v$
in $\Graph$, where every edge $e$ has length $w(e)$.  An
\emphi{embedding} of $\Graph$ in the plane maps the vertices of
$\Graph$ to distinct points in the plane and its edges to disjoint
paths except for the endpoints.  The faces of an embedding are maximal
connected subsets of the plane that are disjoint from the union of the
(images of the) edges of the graph.  The notation $\partial f$ refers
to the boundary of a single face $f$.  The term \emphi{plane graph}
refers to a graph together with its embedding in the plane.

The \emphi{dual} graph $\Graph^*$ of a plane graph $\Graph$ is the
(multi-)graph whose vertices correspond to the faces of $\Graph$,
where two faces are joined by a (dual) edge if and only if their
corresponding faces are separated by an edge of $\Graph$.  Thus, any
edge $e$ in $\Graph$ corresponds to a dual edge $e^*$ in $\Graph^*$,
any vertex $v$ in $\Graph$ corresponds to a face $v^*$ in $\Graph^*$
and any face $f$ in $\Graph^*$ corresponds to a vertex $f^*$ in
$\Graph^*$.

A \emphi{walk} $W$ in $\Graph$ is a sequence of vertices
$(v_1, v_2, \cdots ,v_k)$ such that each adjacent pair
$e_i = (v_{i},v_{i+1})$ is an edge in $\Graph$.  The length of $W$ is
$\lenX{W} = \sum_{i = 1}^{k-1}{w(e_i)}$.

Let $v_i$ and $v_j$ be two vertices that appear on $W$.  Here,
$W[v_i, v_j]$ denotes the sub-walk of $W$ that starts at the first
appearance of $v_i$ and ends at the first appearance of $v_j$ after
$v_i$ on $W$.  For two walks, $W_1 = (v_1, v_2, \dots, v_i)$ and
$W_2 = (v_i, v_{i+1}, \dots, v_j)$, their \emphi{concatenation} is
$W_1 \cdot W_2 = (v_1, v_2, \dots, v_i, v_{i+1}, \dots, v_j)$.

A walk with all the vertices being distinct is a \emphi{path}.  The
term $(u,v)$-walk refers to a walk that starts at $u$ and ends in $v$,
and $(u,v)$-path is defined similarly.  A walk is closed if its first
and last vertices are identical.  A closed path is a \emphi{cycle}.
Two walks \emph{cross} if and only if their images cross in the
plane. That is, no infinitesimal perturbation makes them disjoint.

\subsection{Piecewise linear surfaces and geodesics}

A \emphi{piecewise linear} surface is a $2$-dimensional manifold
composed of a finite number of Euclidean triangles by identifying
pairs of equal length edges.  In this paper, we work with piecewise
linear surfaces that can be embedded in three dimensional space such
that all triangles are flat and the surface does not cross itself.
This assumption lets us exploit existing shortest paths algorithms
for polyhedral surfaces\footnote{However, all of these algorithms
   should work verbatim even if the surface is not embedded in 3d,
   assuming it is an oriented and has the topology of a
   disk. Nevertheless, we keep this assumption to make the discussion
   more concrete and hopefully more intuitive.}.

A triangulated surface is \emphi{non-degenerate} if no interior vertex
has curvature 0, i.e.,~when for every non-boundary vertex $x$, the sum
of the angles of the triangles incident to $x$ is not equal to $2\pi$.

\begin{assumption}
    \asmlab{non:degenerate}%
    We assume that the input surface is always non-degenerate.  One
    can turn any triangulated surface into being non-degenerate by
    perturbing all edge lengths by a factor of at most $1+\eps$, for
    some $\eps=O(1/n^2)$ (alternatively, one can perturb the vertices
    and edges).  This changes the metric of the surface by at most a
    factor of $1+1/n$, and thus the minimum height of a homotopy.
    Such a factor will be negligible for our approximation guarantee.
\end{assumption}

A \emphi{path} $\gamma$ on the surface $\Disk$ is a function
$\gamma:[0,1]\rightarrow \Disk$; $\gamma(0)$ and $\gamma(1)$ are the
endpoints of the path, and $\lenX{\gamma}$ denotes the length of
$\gamma$.  The path $\gamma$ is \emph{simple} if and only if it is
bijective.  A path is a \emphi{geodesic} if and only if it is locally
a shortest path; i.e., it cannot be shortened by an infinitesimal
perturbation.  In particular, global shortest paths are geodesics. The
terms path or \emphi{curve} are used interchangeably, and mean the
same thing.  A path or a curve is polygonal if it is composed of a
finite number of line segments.

\subsubsection{Computing shortest paths on a polyhedral surface}
\seclab{s:p:polytope}%

Mitchell \etal \cite{mmp-dgp-87} describe an algorithm to compute the
shortest path distance from a single source to the whole surface in
$O(n^2 \log n)$ time.  Underlying this algorithm are the following two
observations:
\smallskip%
\begin{compactenum}[\quad(A)]
    \item Shortest paths from the source $\spnt$ can not intersect
    in their interior. 

    \item Consider two points $\pnt$ and $\pnt'$ on an edge $\edge$ of
    $\Disk$, and their two shortest paths $\PathA$ and $\PathA'$,
    respectively, to the source $\spnt$. Furthermore, assume that
    these shortest paths approach $\edge$ from the same side. Then,
    all the shortest paths from $\spnt$ to the points on the edge
    $\edge$ between $\pnt$ and $\pnt'$ (coming from the same side of
    $\edge$ as $\PathA$ and $\PathA'$), must lie inside the disk on
    $\Disk$ having the boundary
    \begin{math}
        \spnt \concat \PathA \concat \edge[\pnt, \pnt'] \concat
        \PathA'.
    \end{math}
    This property requires that the topology of the input surface to
    be either a disk or, more generally, a punctured disk.
\end{compactenum}%
\smallskip%
These two observations still hold when the source is an edge instead
of a point.  As such, the algorithm of Mitchell
\etal~\cite{mmp-dgp-87} can be adapted to compute the shortest path
distance from an edge to the whole surface (with the same running
time). This requires modifying the wavefront maintenance and
propagation to be the distance from an edge instead from a point.

As such, by running this modified algorithm $O(n)$ times, one can
compute, in $O(n^3\log n)$ time, the shortest path from a set of
$O(n)$ edges to the whole surface.

\paragraph{Signature \& medial points.} 

Any shortest path in $\Disk$ is a polygonal line that intersects every
edge at most once and passes through a face along a segment.
Moreover, the shortest path crossing an edge looks locally like a
straight line segment, if one rotates the adjacent faces so that they
are coplanar.  See \cite{mmp-dgp-87} for more details.

Let $\EdgeSetS$ be a set of edges of $\Disk$, and let $\PathA$ be a
shortest path from $\EdgeSetS$ to a point $\pnt \in \Disk$.  The
\emphi{signature} of $\PathA$ is defined to be the ordered set of edges and
vertices (crossed or used) by $\PathA$.  Since $\PathA$ is locally a
straight line segment, we can rotate all faces that intersect $\PathA$
one by one so that $\PathA$ becomes a straight line.  It follows that
two geodesics with the same signature from $\pnt$ are identical.  A
point $\pnt$ on the surface is a \emphi{medial} point with respect to
$\EdgeSetS$ if there are more than one shortest paths (with different
signatures) from $\pnt$ to $\EdgeSetS$.

Note that a shortest path has a vertex of $\Disk$ in its interior, if
and only if, the vertex is a boundary vertex, or the vertex has
negative curvature (i.e., the total sum of the angles of the triangles
adjacent to this vertex is larger than $2\pi$). In particular, a
vertex with positive curvature (i.e., total sum of angles $< 2\pi$),
which is not on the boundary of $\Disk$, can not be in a signature of
a shortest path, see also \asmref{non:degenerate}.

\subsection{Homotopy and leash function}
\seclab{homotopy:continuous}%

Let $\CLeft$ and $\CRight$ be two paths with the same endpoints $s$
and $t$ on a surface $\Disk$.  A homotopy
$h:[0,1]\times [0,1] \rightarrow \Disk$ is a continuous function, such
that $h(\cdot,0) = \CLeft$, $h(\cdot,1) = \CRight$, $h(0,\cdot) = s$
and $h(1,\cdot) = t$.  So, for each $\tau \in [0,1]$, $h(\cdot,\tau)$
is an $(s,t)$-path.  The \emph{height} of such a homotopy (as defined
previously) is defined to be
$\sup_{\tau\in [0,1]}{\lenX{h(\cdot,\tau)}}$.

Let $\curveA$ and $\curveB$ be two disjoint curves.  A curve
connecting a point in $\curveA$ to a point in $\curveB$ is a
$(\curveA,\curveB)$-\emphi{leash}.  A $(\curveA,\curveB)$-\emphi{leash
   function} is a function $f$ that maps every $\tau \in [0,1]$ to a
leash with endpoints $\pntOnA(\tau)\in \curveA$ and
$\pntOnB(\tau) \in \curveB$ such that $\pntOnA:[0,1]\to \curveA$ and
$\pntOnB:[0,1] \to \curveB$ are reparametrizations of $\curveA$ and
$\curveB$, respectively.  A $(\curveA,\curveB)$-leash function $f$ is
\emph{continuous} if the leash $f(\tau)$ varies continuously with
$\tau$.  The \emph{height} of a leash function $f$ is
$\sup_{\tau\in[0,1]}{\lenX{f(\tau)}}$.  Recall that the \emph{\Frechet
   distance} between $\curveA$ and $\curveB$ is the height of the
minimum height $(\curveA,\curveB)$-leash function.  The
\emph{homotopic \Frechet distance} between $\curveA$ and $\curveB$ is
the height of the minimum height continuous $(\curveA,\curveB)$-leash
function.

\begin{figure}[t]
    \IncludeGraphics[page=1,scale=0.67]{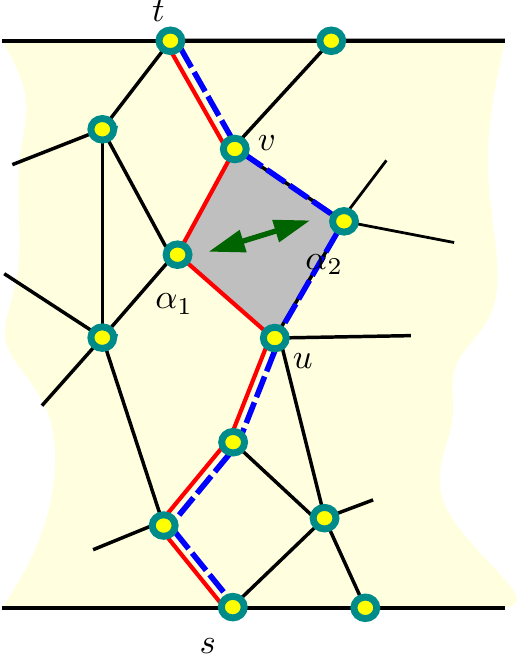} \hfill
    \IncludeGraphics[page=2,scale=0.67]{figs/face_flip} \hfill
    \IncludeGraphics[page=3,scale=0.67]{figs/face_flip} \hfill
    \IncludeGraphics[page=4,scale=0.67]{figs/face_flip}
    \caption{From left to right: face-flip, spike/reverse spike,
       person-move and dog-move.}
    \figlab{F:face-edge-flip}
\end{figure}

\subsubsection{The discrete version}

Let $W_1$ be an $(s,t)$-walk and $f$ be a face in an embedded planar
graph $\Graph$.  Assume that $\alpha_1$ is a subwalk of $W_1$ and
$\partial f = \alpha_1 \concat \alpha_2$, where $\alpha_1$ and
$\alpha_2$ are walks that share endpoints $u$ and $v$, such that $u$
is closer to $s$ on $W_1$.  The \emphi{face flip} operation is the
following: The walk $W_2 = W_1[s,u] \cdot \alpha_2 \cdot W_1[v,t]$ is
the result of flipping $W_1$ over $f$.  In this case, $W_1$ and $W_2$
are one face flip operation apart. See \figref{F:face-edge-flip}.

Now, let $W_1$ be an $(s,t)$-walk, and suppose
$W_1 = W'_1\cdot W''_1$.  Also, let $u$ be the common endpoint of
$W'_1$ and $W''_1$, and let $e=(u,v)$ be any edge in $\Graph$.  By
applying a \emphi{spike} to $W_1$ we obtain
$W_2 = W'_1\cdot(u,v)\cdot(v,u)\cdot W''_1$.  Equivalently, we can
obtain $W_1$ from $W_2$ via a \emphi{reverse spike}.  In this case,
$W_1$ and $W_2$ are one spike operation apart.

In general, $W_1$ and $W_2$ are one operation apart if one can
transform one to the other using a single face flip, spike, or reverse
spike.  Chambers and Letscher \cite{cl-hh-09,cl-ehh-10} introduce the
same set of operations with the names: face lengthening, face
shortening, spike and reverse spike.

Let $\CLeft$ and $\CRight$ be two $(s,t)$-walks on the outer face of
$\Graph$.  The sequence of walks $(L=W_0, W_1, \dots, W_m=R)$ is a
$(L, R)$-\emphi{discrete homotopy} if, for $i=1,\ldots, m$, $W_{i}$
and $W_{i-1}$ are one operation apart.  We may use the word homotopy
as a short form of discrete homotopy when it is clear from the
context.
The height of the homotopy is defined to be the length of the longest
walk in its sequence.  The homotopy height between $\CLeft$ and
$\CRight$, is the height of the shortest possible
$(\CLeft,\CRight)$-homotopy.

\begin{definition}
    \deflab{dog:person:move}%
    Let $\curveA = (\pntOnA_0, \pntOnA_1, \dots, \pntOnA_k)$ and
    $\curveB = (\pntOnB_0, \pntOnB_1, \dots, \pntOnB_{k'})$ be walks
    of $\Graph$.  The walk
    \begin{math}
        W_1 = \bigl( \pntOnA_i = w_1,w_2,\dots, w_k = \pntOnB_j \bigr)
    \end{math}
    changes to the walk
    \begin{math}
        W_2 = \bigl( \pntOnA_{i+1}, \pntOnA_i = w_1,w_2,\dots, w_k
        \bigr)
    \end{math}
    after a \emphi{person move}.  Similarly, the walk
    \begin{math}
        W_1 = \bigl( \pntOnA_i = w_1,w_2,\dots, w_k = \pntOnB_j \bigr)
    \end{math}
    changes to the walk
    \begin{math}
        W_2 = \bigl( w_1,w_2,\dots, w_k = \pntOnB_j, \pntOnB_{j+1}
        \bigr)
    \end{math}
    after a \emphi{dog move}.  A \emphi{leash operation} is a dog
    move, a person move, a face flip, a spike or a reverse spike.
\end{definition}

\begin{definition}
    \deflab{leash:sequence}%
    An $(\curveA,\curveB)$-walk is a walk that has one endpoint on
    $\curveA$ and one endpoint on $\curveB$.  A sequence of
    $(\curveA,\curveB)$-walks, $(W_1, W_2, \dots, W_q)$ is called an
    $(\curveA,\curveB)$-\emphi{leash sequence} if %
    \smallskip
    \begin{compactenum}[\qquad(i)]
        \item $W_1$ is a $(\pntOnA_0,\pntOnB_0)$-walk,
        \item $W_q$ is a $(\pntOnA_k, \pntOnB_{k'})$-walk, and
        \item we have that, for $i=1,\ldots, q-1$, $W_i$ changes to
        $W_{i+1}$ by performing a sequence of leash operations
        containing either at most one dog move and no person moves, or
        at most one person move and no dog moves.
    \end{compactenum}
    The height of a leash sequence is the length of its longest walk.
\end{definition}

\begin{definition}
    \deflab{d:f:distance}%
    The \emphi{discrete \Frechet distance} of $\curveA$ and $\curveB$
    is the height of the minimum height $(\curveA,\curveB)$-leash
    sequence\footnote{The discrete \Frechet distance defined here is
       different than the more standard definition, which is usually
       defined over sequences of points.}.
    The \emphi{homotopic discrete \Frechet distance} of $\curveA$ and
    $\curveB$ is the height of the minimum height
    $(\curveA,\curveB)$-leash sequence, where two consecutive walks
    differ by a single leash operation (and this is not required in
    the \emph{discrete \Frechet distance}, where two consecutive walks
    might ``jump'').
\end{definition}

\section{Approximating the height -- the discrete case}
\seclab{hh_discrete}

In this section, we give an approximation algorithm for finding a
discrete homotopy of minimum height in a topological disk $\Disk$,
whose boundary is defined by two walks $\CLeft$ and $\CRight$ that
share their endpoints $s$ and $t$.  The disk $\Disk$ is a triangulated
edge-weighted planar graph. The ideas developed here are used later
for the continuous case, see \secref{h:h:continuous}.

\subsection{Preliminaries}

We are given an edge-weighted plane graph $\Graph$ all of whose faces
(except possibly the outer face) are triangles. Let
$s,t \in \bd{\Graph}$ and $\CLeft$ and $\CRight$ be two non-crossing
$(s,t)$-walks on $\bd{\Graph}$ in counter-clockwise and clockwise
order, respectively.  We use $\Disk$ to denote the topological disk
enclosed by $\CLeft \concat \CRight$. The vertices of $\Graph$ (inside
or on the boundary of $\Disk$) are also the vertices of $\Disk$.  Our
goal is to find a minimum height homotopy from $\CLeft$ to $\CRight$
of non-crossing walks. Recall that a homotopy is a sequence of walks,
where every two consecutive walks differ by either a triangle, or an
edge (being traversed twice).

\begin{lemma}%
    \lemlab{diam}%
    Let $x$ and $y$ be vertices of $\Graph$ that are at graph distance
    $\rho$.  Then any discrete homotopy between $\CLeft$ and $\CRight$
    has height at least $\rho$.
\end{lemma}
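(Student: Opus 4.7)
The plan is a sweeping argument. For any discrete homotopy $\CLeft = W_0, W_1, \ldots, W_m = \CRight$ and each index $i$, let $R_i \subseteq \Disk$ denote the closed subregion of $\Disk$ bounded by $\CLeft$ together with $W_i$. A face-flip grows $R_i$ by exactly one triangle, while spikes and reverse spikes leave $R_i$ unchanged, so $R_0 = \emptyset$, $R_m = \Disk$, and the sequence of regions forms a monotone chain. Consequently, every vertex $v \in \Disk$ has a first index $\tau(v)$ at which $v \in R_i$, and at that moment $v$ necessarily lies on the walk $W_{\tau(v)}$, because the face flip that adjoins $v$ to $R_i$ simultaneously inserts $v$ into the bounding walk.

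Next, I would fix a shortest $(x,y)$-path $P = v_0 v_1 \cdots v_k$ in $\Graph$ of length $\rho$ with $v_0 = x$ and $v_k = y$, set $i^* = \max_j \tau(v_j)$, and let $v_{j^*}$ attain this maximum. At time $i^*$ the vertex $v_{j^*}$ has just been added to $W_{i^*}$ by a face flip, while every other vertex of $P$ already lies in $R_{i^*}$. The key step is to argue, using the planar embedding and the non-crossing property of the walks, that $W_{i^*}$ in fact passes through both endpoints $x$ and $y$---so that $W_{i^*}$ contains a subwalk from $x$ to $y$ whose length is at least $\distZ{\Graph}{x}{y} = \rho$, giving $\lenX{W_{i^*}} \geq \rho$ as desired.

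The main obstacle I expect is this planarity step: verifying that $x$ and $y$ really do lie on $W_{i^*}$ at the critical time, even though spikes and reverse spikes can cause a vertex to leave and re-enter the walk. To handle it, I would either refine $\tau$ to track only the final time a vertex becomes part of $R_i$, or first reduce to homotopies without spikes (which does not increase the optimal height). A Jordan-curve argument applied to $W_{i^*}$ inside $\Disk$ then shows that the two subwalks of $W_{i^*}$ from $v_{j^*}$ toward $s$ and toward $t$ must trace the boundary of $R_{i^*}$ and pass through $x$ and $y$ respectively, completing the lower bound.
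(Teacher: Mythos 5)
Your proposal hinges on the claim that some single intermediate walk $W_{i^*}$ passes through both $x$ and $y$, and this claim is false in general, so the argument has a genuine gap. Consider a long triangulated rectangle with $\CLeft$ its top side and $\CRight$ its bottom side, with $x$ an interior vertex near $s$ and $y$ an interior vertex near $t$. A homotopy may sweep column by column from $s$ to $t$: by the time the walk first reaches $y$, the vertex $x$ lies strictly in the interior of the swept region $R_{i^*}$ and is no longer on the walk. The Jordan-curve step you sketch cannot rescue this: $W_{i^*}$ (together with $\CLeft$) forms the boundary of $R_{i^*}$, but nothing forces $x$ to lie on that boundary at time $i^*$ --- it can have been swept over and left behind. (A second, smaller issue: face flips are allowed in both directions, so the regions $R_i$ need not form a monotone chain and a homotopy may sweep back and forth; this can be worked around by tracking only the first time each vertex is covered, but the main obstacle above remains.)

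The statement does not require both endpoints to lie on one walk. It suffices that \emph{some} walk $\walk$ of the homotopy contains $x$ and some possibly different walk $\walkA$ contains $y$ --- which does follow from your covering observation, since when a face incident to $x$ is flipped, $x$ lies on the walk either before or after that flip. Then concatenating the reverse of $\walk[s,x]$ with $\walkA[s,y]$ gives an $(x,y)$-walk, so $\rho \leq \lenX{\walk[s,x]} + \lenX{\walkA[s,y]}$, and symmetrically $\rho \leq \lenX{\walk[x,t]} + \lenX{\walkA[y,t]}$. Adding the two inequalities yields $2\rho \leq \lenX{\walk} + \lenX{\walkA} \leq 2\max\pth{\lenX{\walk},\lenX{\walkA}}$, which is at most twice the height of the homotopy. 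This averaging step, which lets the two endpoints be witnessed at different times, is what your proposal is missing; it is also how the paper proves the lemma.
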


\begin{proof}
    Fix a homotopy of height $\delta$.  This homotopy contains an
    $(s,t)$-walk $\walk$ that passes through $x$, and an $(s,t)$-walk
    $\walkA$ that passes through $y$.  We have, by the triangle
    inequality, that
    \begin{math}
        \rho = \distZ{\Graph}{x}{y} \leq \lenX{\walk[s,x]} +
        \lenX{\walkA[s,y]},
    \end{math}
    and, similarly,
    $\rho \leq \lenX{\walk[x,t]} + \lenX{\walkA[y,t]}$.  Therefore,
    \begin{math}
        \rho \leq (\lenX{\walk} + \lenX{\walkA})/2 \leq \max\pth{
           \lenX{\walk}, \lenX{\walkA}} \leq \delta,
    \end{math}
    as required.%
    \InDCGVer{\qed} %
\end{proof}

\begin{lemma}%
    \lemlab{max:d}%
    Suppose $d_1$ is the maximum distance of a vertex of $\Graph$ from
    $\CLeft$, $d_2$ is the largest edge weight, and let
    $\dLeft=\max\brc{d_1, d_2}$.  Furthermore, let $\Disk$, $\CLeft$,
    and $\CRight$ be defined as above.  Then any discrete homotopy
    between $\CLeft$ and $\CRight$ has height at least $\dLeft$.
\end{lemma}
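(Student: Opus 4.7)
The plan is to show that any discrete homotopy from $\CLeft$ to $\CRight$ must sweep every face of $\Disk$, so that every vertex and every edge of $\Disk$ appears on some walk in the homotopy. The bound for $d_1$ then follows in essentially the same way as \lemref{diam}, and the bound for $d_2$ follows from the length of a single edge.

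For the sweep step, I define, for each walk $W_i$ in the homotopy, the set $A_i \subseteq \Disk$ as the union of faces that have been flipped an odd number of times in the prefix $W_0, \ldots, W_i$. Spikes and reverse spikes leave $A_i$ unchanged, and each face flip toggles exactly one face in $A_i$, so $A_0 = \emptyset$. A mod-$2$ chain argument then forces $A_m = \Disk$: since $W_0 = \CLeft$ and $W_m = \CRight$ together bound all of $\Disk$, every face must be flipped an odd number of times, and in particular at least once.

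From this I deduce that every vertex and every edge of $\Disk$ lies on some walk. For an edge $e$ incident to a face $f$: when $f$ is flipped we have $\partial f = \alpha_1 \cdot \alpha_2$ with $e \in \alpha_1 \cup \alpha_2$, so $e$ lies on the walk just before or just after the flip. For an interior vertex $v$: the faces around $v$ are all on the right of $W_0$ and all on the left of $W_m$, and the transition in cyclic state around $v$ is impossible unless $v$ lies on some intermediate walk. Boundary vertices already lie on $W_0$ or $W_m$.

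Finally I derive the two bounds. Let $v^*$ be a vertex with $\distZ{\Graph}{v^*}{\CLeft} = d_1$; by the above, $v^*$ lies on some walk $W_i$. Since $s \in \CLeft$, by definition of $d_1$ we have $\distZ{\Graph}{s}{v^*} \geq d_1$, and therefore $\lenX{W_i} \geq \lenX{W_i[s,v^*]} \geq d_1$ (this is essentially \lemref{diam} applied to $v^*$ and $s$). Let $e^*$ be an edge of weight $d_2$; it lies on some walk $W_j$, so $\lenX{W_j} \geq d_2$. Combining, the height of the homotopy is at least $\max(d_1, d_2) = \dLeft$. The main obstacle is the sweep step, that is, proving $A_m = \Disk$ via a careful topological (mod-$2$ chain) argument; the incidence claims for vertices and edges then follow, and the final bound is immediate.
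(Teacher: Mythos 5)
Your proof is correct and follows the same skeleton as the paper's: reduce the $d_1$ bound to the triangle inequality along a walk through the farthest vertex (exactly \lemref{diam}) and the $d_2$ bound to the fact that some walk traverses the heaviest edge. The only difference is that you supply a rigorous mod-$2$ chain argument for the sweeping claim (every face is flipped, hence every vertex and edge appears on some walk), which the paper simply asserts without proof; that addition is sound and, if anything, strengthens the exposition.
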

\begin{proof}
    The height is at least $d_1$ by \lemref{diam}.  On the other hand,
    for every homotopy between $\CLeft$ and $\CRight$, and for every
    edge $\edge$, there exists a walk in the homotopy that passes
    through $\edge$.  Therefore, the height of the homotopy is at
    least $d_2$.%
    \InDCGVer{\qed} %
\end{proof}

\subsection{The algorithm}

\begin{theorem}%
    \thmlab{discrete}%
    Let $\Disk$ be an edge-weighted triangulated topological disk with
    $n$ faces such that its boundary is formed by two walks $\CLeft$
    and $\CRight$ that share endpoints $s$ and $t$.  Then one can
    compute, in $O( n \log n)$ time, a homotopy from $\CLeft$ to
    $\CRight$ of height at most
    $\lenX{\CLeft} + \lenX{\CRight} + O( \dLeft \log n)$, where
    $\dLeft$ is the largest among
    \begin{inparaenum}[(i)]
        \item the maximum distance of a vertex of $\Disk$ from
        $\CLeft$, and
        \item the maximum edge weight.
    \end{inparaenum}
    
    In particular, the generated homotopy has height
    $O\pth{ \HHopt \log n }$, where $\HHopt$ is the minimum homotopy
    height between $\CLeft$ and $\CRight$.
\end{theorem}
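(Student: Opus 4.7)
I would prove \thmref{discrete} by a divide and conquer algorithm that repeatedly cuts $\Disk$ along a short ``balanced'' path, recurses on the two halves, and stitches the resulting homotopies.

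\smallskip

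\textbf{Precomputation.} Compute, in $O(n \log n)$ time, a shortest path tree $T$ in $\Disk$ rooted at $\CLeft$ (treated as a super-source). By definition of $\dLeft$, every tree path of $T$ has length at most $\dLeft$. Using consistent tie-breaking in the plane graph, $T$ can be chosen so that no two tree paths of $T$ cross.

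\smallskip

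\textbf{Balanced cut.} For the recursive step we locate a tree path $\PathA$ of $T$ from some vertex $v$ (on $\CRight$, or, if necessary, on the boundary of an over-sized ``sleeve'' between two consecutive tree paths) to a vertex $u \in \CLeft$, such that $\PathA$ divides $\Disk$ into two sub-disks $\Disk_1,\Disk_2$, each with at most $2n/3$ faces. Such a path exists by a centroid-style argument: as $v$ walks along $\CRight$ from $s$ to $t$, the number of faces on one side of $\PathA(v)$ grows monotonically, so either some $v$ already yields a balanced split, or a single sleeve between consecutive tree paths contains more than $n/3$ faces, in which case that sleeve is split further by a tree path from a vertex interior to it. In all cases $\lenX{\PathA}\le \dLeft$.

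\smallskip

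\textbf{Recursion.} The sub-disk $\Disk_1$ is bounded by the two walks $\CLeft[s,u]$ and $\CRight[s,v]\concat \PathA$ (both from $s$ to $u$); $\Disk_2$ is bounded by $\PathA\concat \CLeft[u,t]$ and $\CRight[v,t]$ (both from $v$ to $t$). Because the tree paths of $T$ are pairwise non-crossing, any tree path from a vertex of $\Disk_i$ stays inside $\Disk_i$ and terminates on its $\CLeft$-side boundary. Hence the restriction of $T$ to $\Disk_i$ is itself a valid shortest path tree to the left boundary of $\Disk_i$ with every path of length at most $\dLeft$, and the recursive invariant is preserved.

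\smallskip

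\textbf{Stitching.} We first run the recursive homotopy $H_1$ inside $\Disk_1$, appending the unchanged suffix $\CLeft[u,t]$ to every walk; this transforms $\CLeft$ into $\CRight[s,v]\concat \PathA \concat \CLeft[u,t]$. We then run $H_2$ inside $\Disk_2$, prepending the unchanged prefix $\CRight[s,v]$ to every walk; this transforms the current walk into $\CRight$. Since every face flip and (reverse) spike operation is local, it remains a legal single-step operation when composed with the untouched prefix or suffix, so the concatenated sequence is a legal discrete homotopy from $\CLeft$ to $\CRight$.

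\smallskip

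\textbf{Height and running time.} Let $E(n)$ bound the extra height beyond $\lenX{\CLeft}+\lenX{\CRight}$ of the homotopy produced on an $n$-face disk. During phase~1, every walk has length at most
\begin{align*}
   \lenX{\CLeft[s,u]}+\lenX{\CRight[s,v]}+\lenX{\PathA}+E(2n/3)+\lenX{\CLeft[u,t]}
   \le \lenX{\CLeft}+\lenX{\CRight}+\dLeft+E(2n/3),
\end{align*}
and phase~2 is symmetric. Hence $E(n)\le \dLeft + E(2n/3)$, which solves to $E(n)=O(\dLeft\log n)$, giving the additive bound in the theorem. The shortest path tree costs $O(n\log n)$, while the balancing work is linear per level across $O(\log n)$ levels, yielding $O(n\log n)$ total. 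Finally, since $\HHopt\ge \dLeft$ by \lemref{max:d} and $\HHopt\ge \max(\lenX{\CLeft},\lenX{\CRight})$ because any homotopy contains both bounding walks, the produced height is $O(\HHopt\log n)$.

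\smallskip

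\textbf{Main obstacle.} The delicate step is the balanced cut: producing, at every recursive level, a tree path of length at most $\dLeft$ that cuts the current sub-disk into subproblems of geometrically decreasing size. Handling the case where a single sleeve between consecutive tree paths holds the majority of the faces, while preserving the non-crossing property of $T$ so that the tree (and thus the $\dLeft$ bound) can be reused across all recursion levels, is the main technical subtlety.
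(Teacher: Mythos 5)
Your overall architecture (shortest-path tree to $\CLeft$, balanced cut of length $O(\dLeft)$, recurse, stitch, recurrence $E(n)\le O(\dLeft)+E(2n/3)$) matches the paper's, and your stitching and height analysis would go through. But the balanced-cut step --- which you yourself flag as the crux --- has a genuine gap as proposed. Your fallback for an over-sized sleeve is to ``split [it] further by a tree path from a vertex interior to it.'' A tree path from an interior vertex $w$ has only one endpoint on the boundary of the sleeve (its terminus on $\CLeft$); its other endpoint is $w$ itself, in the interior. Cutting a disk along such a path produces a slit, not a separation: the complement remains a single connected disk, so you get no second subproblem and the recursion does not make progress. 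The same failure occurs for any candidate $w$ on the sleeve's boundary: if $w$ lies on one of the two bounding tree paths, its tree path to $\CLeft$ is a suffix of that bounding path and cutting along it is vacuous; if $w\in\CLeft$ the path is trivial; and the sleeve meets $\CRight$ in at most one edge, so there is no further $\CRight$-vertex to use. Hence whenever a single sleeve holds more than $2n/3$ faces --- which can certainly happen --- no single tree path yields a balanced cut.

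The paper's fix is that the separator is not a tree path but a composite curve. One cuts $\Disk$ open along the \emph{entire} set $\EdgeSet$ of tree edges (with $\CRight$ thickened into corridors so that the recursion always yields exactly two pieces); the result is a simple triangulated polygon whose dual graph on the diagonals is a tree of maximum degree three, which therefore admits a balanced \emph{edge} separator, i.e., a diagonal $uv$ with between $n/3$ and $2n/3$ real triangles on each side. The actual separating curve in $\Disk$ is then $\pi_u \concat uv \concat \pi_v$, where $\pi_u,\pi_v$ are the tree paths of $u$ and $v$; both of its endpoints lie on $\CLeft$, so it genuinely disconnects the disk. This changes the stitching: one must first spike $\CLeft$ out along $\pi_u$ and $\pi_v$ (an intermediate walk of length up to $\lenX{\CLeft}+2\dLeft$, and a separator of length up to $3\dLeft$ rather than $\dLeft$), and the two sub-disks have boundaries of the form $\CLeft[u',v']$ versus $\pi_u\concat uv\concat\pi_v$, not the single $\CRight$-to-$\CLeft$ split your stitching paragraph assumes. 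These constants still fit in the $O(\dLeft\log n)$ budget, so your recurrence survives once the separator is repaired, but the repair is exactly the missing idea.
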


\begin{proof}
    We present a recursive algorithm that reduces the problem to
    sub-problems with a smaller number of triangles.  The recursion
    might create instances where the boundary walks $\CLeft$ and
    $\CRight$ are not interior disjoint.  For such instances, it is
    immediate that one can obtain a solution by computing a homotopy
    independently for each maximal disk bounded by
    $\CLeft \concat \CRight$, and composing them to obtain the desired
    homotopy between $\CLeft$ and $\CRight$.  We may therefore focus
    on the case where $\CLeft$ and $\CRight$ are interior disjoint.
	
    Let $b(\delta, \dLeft, n)$ be the maximum possible height of a
    homotopy obtained by our algorithm for any disk of perimeter
    $\delta$ that is composed of $n$ faces and has $\dLeft$ as defined
    in the statement of the theorem.  We prove by induction that
    \begin{math}
        b(\delta,\dLeft,n) \leq \delta +c_0\dLeft\log n
    \end{math}
    for some constant $c_0$, implying the theorem statement. Note that
    the inductive argument implies that $b$ is linear in $\delta$.
    
    The base case $n=0$ is easy. Indeed, if we have two edges $(u,v)$
    and $(v,u)$ consecutive in $\CRight$ (or in $\CLeft$) we can
    retract these two edges. By repeating this we arrive at both
    $\CLeft$ and $\CRight$ being identical, and we are done.  The case
    $n=1$ is handled in a similar fashion. After one face flip, the
    problem reduces to the case $n=0$.  Hence,
    $b(\lenX{\CLeft} + \lenX{\CRight}, \dLeft, 1) \leq \lenX{\CLeft} +
    \lenX{\CRight} + \dLeft$.
    
    For $n > 1$, compute for each vertex of $\Graph$ its shortest path
    to $\CLeft$, and consider the set of edges $\EdgeSet$ used by all
    these shortest paths. Clearly, these shortest paths can be chosen
    so that $\CLeft \cup \EdgeSet$ form a tree. We consider each edge
    of $\CRight$ to be ``thick'' and have two sides (i.e., we think
    about these edges as being corridors -- this is done to guarantee
    that in the recursive scheme, done below, there are exactly two
    subproblems to each instance). If $\EdgeSet$ uses an edge of
    $\CRight$ then it uses the inner copy of this edge, while
    $\CRight$ uses the outer side. Similarly, we consider each
    original vertex of $\CRight$ to be two vertices (one inside and
    the other one on the boundary $\CRight$). The set $\EdgeSet$ would
    use only the inner vertices of $\CRight$, while $\CRight$ would
    use only the outer vertices.  To keep the graph triangulated we
    also arbitrarily triangulate inside each thick edge of $\CRight$
    by adding corridor edges.  Each corridor edge either connects two
    copies of a single vertex (thus has weight zero) or copies of two
    neighbors on $\CRight$ (and so has the same weight as the original
    edge).

    \begin{figure}[t]
        \centerline{%
           \begin{tabular}{ccc}
             \includegraphics{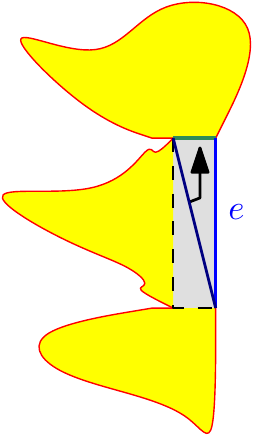}%
             &%
               \qquad
               \qquad
               \IncludeGraphics{figs/self}
               \qquad
               \qquad
             &%
               {\IncludeGraphics{figs/split}}%
             \\
             (I) 
             &
               (II)
             &
               (III)
           \end{tabular}
        }
        \caption{(I) Corridors edges. (II) Case (A). (III) Case (B). }
        \figlab{corr}
    \end{figure}
    
    Clearly, if we cut $\Disk$ along the edges of $\EdgeSet$, what
    remains is a simple triangulated polygon (it might have ``thin''
    corridors along the edges of $\CRight$). One can find a diagonal
    $uv$ such that each side of the diagonal contains at least
    $\lceil n/3 \rceil$ triangles and at most $\lfloor2n/3\rfloor$
    triangles of the \emph{original} $\Graph$. We emphasize that we
    count only the ``real'' triangles of $\Graph$.  This can be
    achieved as follows: We first assign weight zero to the faces
    within corridors and unit weight to all other faces.  Then we find
    a diagonal $uv$ such that each side contains faces with total
    weight at least $\lceil n/3 \rceil$.  Furthermore, because the
    faces inside corridors have weight zero, we can ensure that if the
    separating edge $uv$ is a corridor edge (i.e., corresponding to an
    edge $\edge$ of $\CRight$) then $u$ and $v$ are copies of the same
    vertex. Indeed, if not, we can change the separating edge so this
    property holds, and the new separating edge separates regions with
    the same weight, see \figref{corr} (I). We use this property in
    the following case analysis\footnote{Note, that the corridors were
       used only in generating this partition, and are an artifact
       that is not necessarily sent to the recursive subproblems. In
       particular, one can describe this partition scheme without
       using the corridors, but it seems somewhat messier and less
       intuitive.}.

    
    \begin{inparaenum}[\smallskip\noindent\bf(A)]
        \item Consider the case\footnote{Strictly speaking this case
           is not possible because of the corridor
           diagonals. Nevertheless, it provides a good warm-up
           exercise for the followup cases which are more involved.}
        that $u$ and $v$ are both vertices of $\CRight$. In this case,
        let $\CRight[u,v]$ be the portion of $\CRight$ in between $u$
        and $v$, and let $\Disk_2$ be the disk having
        $\CRight[u,v] \concat uv$ as its outer boundary. Let $\Disk_1$
        be the disk $\Disk \setminus \Disk_2$. Let
        \begin{math}
            \CMid = \CRight[s,u] \concat uv \concat \CRight[v,t],
        \end{math}
        see \figref{corr} (II).
        
        Clearly, the distance of any vertex of $\Disk_1$ from $\CLeft$
        is at most $\dLeft$. By induction, there is a homotopy of
        height
        \begin{math}
            b( \lenX{\CLeft} + \lenX{\CMid}, \dLeft, \lfloor 2n/3\rfloor
            )
        \end{math}        
        from $\CLeft$ to $\CMid$.  Similarly, the distance of any
        vertex of $\Disk_2$ from $uv$ is at most its distance to
        $\CLeft$. Therefore, by induction, there is a homotopy between
        $uv$ and $\CRight[u,v]$ of height at most
        $b\pth{ \lenX{\CRight[u,v]} + \dLeft, \dLeft, \lfloor
           2n/3\rfloor }$.
        Clearly, we can extend this to a homotopy of $\CMid$ to
        $\CRight$ of height
        $\lenX{\CRight[s,u]} + b\pth{ \lenX{\CRight[u,v]} + \dLeft,
           \dLeft, \lfloor 2n/3\rfloor } + \lenX{\CRight[v,t]}$
        which using induction hypothesis is at most
        $\lenX{\CRight} + \dLeft + c_0\dLeft\log \lfloor 2n/3\rfloor
        \leq \lenX{\CRight} + c_0\dLeft\log n$,
        for sufficiently large $c_0$.
        
        Putting these two homotopies together results in the desired
        homotopy from $\CLeft$ to $\CRight$.
        
        \smallskip
        
        \item Consider the case that $v$ is a vertex of $\EdgeSet$ and
        $u$ is a vertex of $\CRight$. So, $v$ is an inner vertex of
        $\CRight$ (that belongs to $\EdgeSet$) and $u$ is an outer
        vertex of $\CRight$.  Recall that we can assume that $v$ and
        $u$ are inner and outer copies of the same vertex of
        $\CRight$.  Let $\pi_v$ be the shortest path in $\Disk$ from
        $v$ to $\CLeft$, and let $v'$ be its endpoint on $\CLeft$.
                
        Consider the disk $\Disk_1$ having the ``left'' boundary
        $\CLeft_1 = \CLeft[s,v'] \concat \pi_v \concat v u$ and
        $\CRight_1 = \CRight[s, u]$ as its ``right'' boundary, see
        \figref{corr} (III). This disk contains at most
        $\lfloor 2n/3\rfloor$ triangles, and by induction, it has a
        homotopy of height
        \begin{math}
            b(\lenX{\CLeft_1} + \lenX{\CRight_1}, \dLeft, \lfloor
            2n/3\rfloor).
        \end{math}
        To see why we can apply the recursion, observe that $u$ and
        $v$ are copies of the same vertex of $\CRight$.  That is, all
        shortest paths of vertices inside $\Disk_1$ to $\CLeft$ are
        completely inside $\Disk_1$.  As such, the distance of all
        vertices in $\Disk_1$ to $\CLeft_1$ are at most $\dLeft$.
        
        Similarly, the topological disk $\Disk_2$ with the left
        boundary $\CLeft_2 = uv \concat \pi_v \concat \CLeft[v',t]$
        and the right boundary $\CRight_2 = \CRight[u,t]$ has a
        homotopy of height
        $b\bigl(\lenX{\CLeft_2} + \lenX{\CRight_2}, \dLeft, \lfloor
        2n/3\rfloor \bigr)$.

        \smallskip

        We combine these two homotopies as follows.  Let $\CLeft'$ be
        the walk obtained by concatenating $\CLeft_1$ and
        $\CLeft_2$. Note that $\CLeft'$ consists of a copy of $\CLeft$
        and two copies of $\pi_v$.  Clearly, there exists a homotopy
        between $\CLeft$ and $\CLeft'$ of height at most
        $\lenX{\CLeft} + 2\dLeft$, which is obtained by a sequence of
        spike moves along $\pi_v$.  We compose the resulting homotopy
        with the homotopy of $\Disk_1$ that moves $\CLeft_1$ to
        $\CRight_1$, followed by the homotopy of $\Disk_2$ that moves
        $\CLeft_2$ to $\CRight_2$.  The result is a homotopy between
        $\CLeft$ and $\CRight$ of height at most
        \begin{align*}
            \max\pth{%
               \begin{array}{c}
                 \lenX{\CLeft} + 2\dLeft, \\%
                 b\pth{ \MakeBig\! \lenX{\CLeft_1} +
                 \lenX{\CRight_1},
                 \dLeft, \lfloor 2n/3\rfloor}
                 + \lenX{\CLeft_2},
                 \\%
                 \lenX{\CRight_1} + b\pth{\MakeBig\! \lenX{\CLeft_2} +
                 \lenX{\CRight_2}, \dLeft, \lfloor 2n/3\rfloor}
               \end{array}
            }.
        \end{align*}
        If the first number is the maximum, we are done.  Otherwise,
        using the induction hypothesis, the above value is at most
        $\lenX{\CLeft} + \lenX{\CRight} + 2\dLeft + c_0\dLeft\log
        \lfloor 2n/3\rfloor$
        which is at most
        $\lenX{\CLeft}+\lenX{\CRight}+c_0\dLeft\log n$ for
        sufficiently large $c_0$.
        
        \begin{figure}[t]
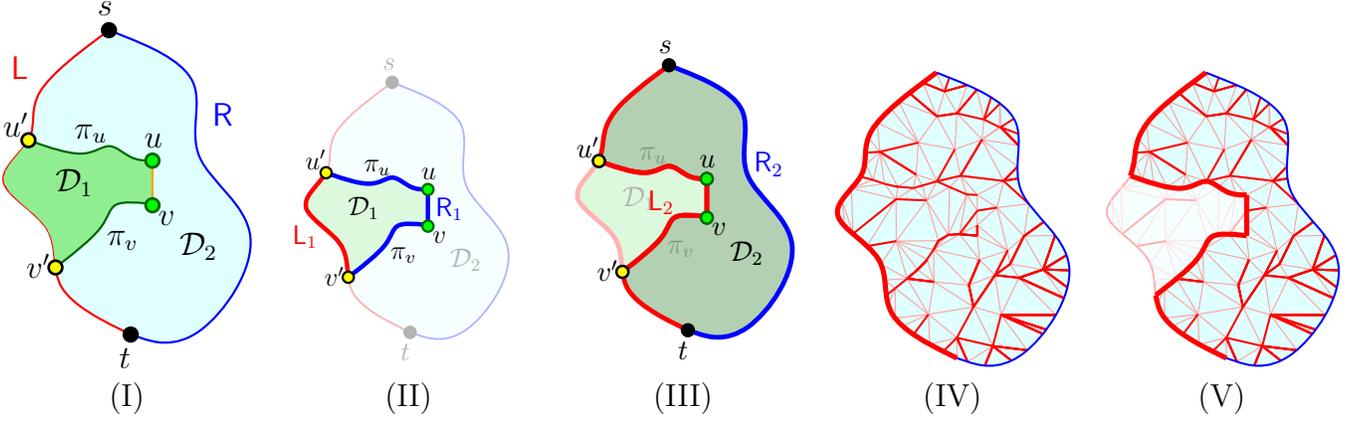

            \centerline{%
               \begin{tabular}{c%
                 cccc}
                 \IncludeGraphics[width=0.18\linewidth]%
                 {figs/general}%
                 &%
                   {\IncludeGraphics[page=2,width=0.17\linewidth]%
                   {figs/general}}
                 &%
                   {\IncludeGraphics[page=3,width=0.17\linewidth]%
                   {figs/general}}%
                 &
                   {\IncludeGraphics[page=4,width=0.17\linewidth]%
                   {figs/general}}%
                 &
                   \IncludeGraphics[page=5,,width=0.17\linewidth]%
                   {figs/general}%
                 \\
                 (I)
                 &
                   (II)%
                 & 
                   (III)%
                 &%
                   (IV)%
                 &%
                   (V)
               \end{tabular}%
            }
            \caption{(I--III) Illustration of case (C) in the proof
               \thmref{discrete}. (IV-V) How the shortest path tree
               get sent to the recursive subproblem.}
            \figlab{C:case}%
        \end{figure}

        \item Here we handle the case that $u$ and $v$ are both
        vertices of $\CLeft \cup \EdgeSet$. Then as before, let $u'$
        and $v'$ be the closest points on $\CLeft$ to $u$ and $v$,
        respectively. Now, let $\pi_u$ (resp. $\pi_v$) be the shortest
        path from $u$ (resp. $v$) to $u'$ (resp.  $v'$). Note that we
        might have $u'=v'$.
        
        Consider the disk $\Disk_1$ having $\CLeft_1 = \CLeft[u', v']$
        as left boundary, and
        $\CRight_1 = \pi_u \concat uv \concat \pi_v$ as right
        boundary, see \figref{C:case}.. This disk contains between
        $n/3$ and $2n/3$ triangles of the original surface. The
        distance of any vertex of $\Disk_1$ to $\CLeft_1$ (when
        restricted to $\Disk_1$) is at most $\dLeft$, and therefore by
        induction, there is a homotopy from $\CLeft_1$ to $\CRight_1$
        of height at most
        $\alpha = b\pth{ \lenX{\CLeft_1} + \lenX{\CRight_1},
           \dLeft,\lfloor 2n/3\rfloor} \leq \lenX{\CLeft[u',v']} +
        3\dLeft +c_0\dLeft\log \lfloor 2n/3\rfloor$.
        This yields a homotopy of height
        $\alpha_1 = \lenX{\CLeft[s,u']} + \alpha +
        \lenX{\CLeft[v',t]}$, from $\CLeft$ to
        \begin{math}
            \CLeft_2 = \CLeft[s,u'] \concat \pi_{u} \concat uv \concat
            \pi_v \concat \CLeft[v',t].
        \end{math}
        It is straightforward to check that
        $\alpha_1 \leq \lenX{L}+3\dLeft +c_0 \dLeft\log \lfloor
        2n/3\rfloor)$.
        
        Next, let $\Disk_2$ be the disk with its left boundary being
        $\CLeft_2$ and its right boundary being $\CRight_2 =
        R$.
        Observe, that as before, the maximum distance of any vertex of
        $\Disk_2$ to $\CLeft_2$ is at most $\dLeft$. As before, by
        induction, there is a homotopy from $\CLeft_2$ to $\CRight_2$
        of height
        $\alpha_2 = b(\lenX{\CLeft_2} + \lenX{\CRight_2}, \dLeft,
        \lfloor 2n/3\rfloor)$.
        Since $\lenX{\CLeft_2} \leq \lenX{\CLeft} + 3d$, we have
        $\alpha_2 \leq b(\lenX{\CLeft}+\lenX{\CRight}+3\dLeft, \dLeft,
        \lfloor 2n/3\rfloor)$.
    \end{inparaenum}
    
    \bigskip%
    \noindent%
    In all cases the length of the homotopy is at most
    \begin{align*}
        \lenX{\CLeft} + \lenX{\CRight}+3\dLeft+c_0\dLeft\log \lfloor
        2n/3\rfloor \leq \lenX{\CLeft} + \lenX{\CRight} +
        c_0\dLeft\log n,
    \end{align*}
    if we choose $c_0$ sufficiently large.  The final guarantee of
    approximation follows as $\dLeft \leq \HHopt$, by \lemref{max:d}.
    
    We can compute the shortest path tree in linear time using the
    algorithm of Henzinger \etal \cite{hkrs-fspap-97}.  The separating
    edge can also be found in linear time using \DFS.  So, the running
    time for a graph with $n$ faces is
    $T(n) = T(n_1) + T(n_2) + O(n)$, where $n_1 + n_2 = n$ and
    $n_1, n_2 \leq (2/3)n$.  It follows that $T(n) = O(n\log n)$. ~%
    \InDCGVer{\qed} %
\end{proof}

\begin{remark}%
    \remlab{no:recompute}%
    \begin{inparaenum}[(A)]
        \item In the algorithm of \thmref{discrete}, it is not
        necessary that we have the shortest paths from $\CLeft$ to all
        the vertices of $\Disk$.  Instead, it is sufficient if we have
        a tree structure that provides paths from any vertex of
        $\Disk$ to $\CLeft$ of distance at most $\dLeft$ in this tree,
        and we send the relevant portions of the tree into the
        recursive subproblems. We will use this property in the
        continuous case, where recomputing the shortest path tree is
        relatively expensive. This is demonstrated in \figref{C:case}
        (IV--V).%

        \item A more careful analysis shows that the height of the
        homotopy generated by \thmref{discrete} is at most
        $\max\pth{\lenX{\CLeft}, \lenX{\CRight}} + O\pth{ \dLeft \log
           n}$.
        
        \item Note, that if
        $\dLeft = O \pth{ \max \pth{ \lenX{\CLeft}, \lenX{\CRight} } /
           \log n }$
        then \thmref{discrete} provides a constant factor
        approximation. This is the situation when $\CLeft$ and
        $\CRight$ are close to each other compared to their relative
        length.
        
        \item Note, that the $O(n\log n)$ running time algorithm
        cannot explicitly output the list of paths in the homotopy.
        Indeed, that list requires $O(n^2)$ space to be stored and so
        $O(n^2)$ time to output.  The output of the algorithm of the
        above lemma is a shortest path tree $T$ together with an
        ordered list of edges.  Each edge $e = (u,v)$ in the list
        represents an $(s,t)$-walk $T[s,u]\cdot (u,v) \cdot T[v,t]$,
        where $T[s,u]$ and $T[v,t]$ are the unique $(s,u)$-path and
        $(v,t)$-path in $T$, respectively.
    \end{inparaenum}
\end{remark}

\section{Approximating the height -- %
   the continuous case}
\seclab{h:h:continuous}

In this section we extend the algorithm from \secref{hh_discrete} to
the continuous case.  The continuous case is somewhat similar to the
solution to the problem of sweeping over the boundary of a convex
polytope in three dimensions from a base point. Since this is
tangential for our main trust, we delegated describing this algorithm
to \apndref{sweep:p}, but the reader might still benefit from reading
it first.

\subsection{Preliminaries}
\seclab{h:prelim}

We are given a piecewise linear triangulated topological disk,
$\Disk$, with $n$ triangles, and we consider the underlying metric to
be the geodesic distance on this surface\footnote{Formally, for two
   points $\pnt, \pntA \in \Disk$, their \emph{geodesic distance} is
   the length of the shortest path inside $\Disk$ connecting $\pnt$
   with $\pntA$.}.  The boundary of $\Disk$ is composed of two paths
$\CLeft$ and $\CRight$ with shared endpoints $s$ and $t$, and the task
at hand is to compute a morphing from $\CLeft$ to $\CRight$ that
minimizes the distance traversed by each point of $\CLeft$ during this
motion. See \secref{homotopy:continuous} for the formal definition.

Here, we build a homotopy of height at most
$\lenX{\CLeft} + \lenX{\CRight} + O(d\log n)$, where $d$ is the
maximum distance of any point in $\Disk$ from either $\CLeft$ or
$\CRight$.  We use the following observations (see
\secref{s:p:polytope} for details):
\begin{compactenum}[(A)]
    \item The shortest path from a vertex to the whole surface can be
    computed in $O(n^2\log n)$ time.
    \item The shortest path from a set of $O(n)$ edges to the whole
    surface can be computed in $O(n^3\log n)$ time.
    \item A shortest path (i.e., a geodesic) intersects a face along a
    segment and it locally looks like a segment if the adjacent faces
    are rotated to be coplanar.
\end{compactenum}
See \secref{s:p:polytope} for more details.

\subsection{Homotopy height if edges are short}
\seclab{short}

Similar to the discrete case, $d_1$ is the maximum distance for any
point of $\Disk$ from $\CLeft$, $d_2$ is the maximum length of any
edge, and $\dLeft = \max(d_1, d_2)$.  Here, we assume $d_2 \leq 2d_1$.
In this case we can obtain the desired approximation algorithm via an
argument that is similar to the one used in the discrete case.

As in the discrete case, let $\EdgeSet$ be the union of all the
shortest paths from the vertices of $\Disk$ to $\CLeft$ (as before, we
treat the edges and vertices of $\CRight$ as having infinitesimal
thickness). For a vertex $v$ of $\Disk$, its shortest path $\pi_v$ is
a polygonal path that crosses between faces (usually) in the middle of
edges (it might also go to a vertex, merge with some other shortest
paths and then follow a common shortest path back to $\CLeft$). In
particular, each such shortest path might intersect a face of $\Disk$
along a single segment. Thus, the polygon resulting from cutting
$\Disk$ along $\EdgeSet$, call it $P$, is a polygon that has
complexity $O(n^2)$. A face of $P$ is a hexagon, a pentagon, a
quadrilateral, or a triangle.  However, each such face has at most
three edges that are portions of the edges of $\Disk$.  The degree of
a face is $i$ if it has $i$ edges that are portions of the edges of
$\Disk$.  Observe that, each triangle of $\Disk$ is now decomposed
into a set of faces.  Obviously, each triangle of $\Disk$ contains at
most one face of degree $3$ in $P$.  Overall, there are $O(n)$ faces
of degree $3$ in $P$.

Now consider $C^*$, the dual of the graph that is inside the polygon
(ignore the edges on the boundary).  More precisely, $C^*$ has a
vertex corresponding to each face inside the polygon $P$, let $n_p$ be
number of vertices of $C^*$.  Two vertices of $C^*$ are adjacent if
and only if their corresponding faces share a portion of an edge of
$\Disk$ (this shared edge is a diagonal of $P$).  Note that because
$P$ is simply connected $C^*$ is a tree.  Since the maximum degree of
the tree $C^*$ is $3$, there is an edge that is a good separator
(i.e., a separator that has at most $2/3$ of the faces on one side)%
\footnote{The existence of such a tree edge separator is folklore --
   its proof is provided by Lewis \etal \cite{lsh-mbrcf-65}.}.  Since
the length of the edge is at most $2d_1$ it can be used in a similar
fashion as the proof of \thmref{discrete}.  However, in the recursion
of the continuous case we avoid recomputing the shortest paths (i.e.,
we use the old shortest paths and distances computed in the original
disk), see \remref{no:recompute}.  So, we compute the shortest paths
once in the beginning in $O(n^3 \log n)$ time.  Then in each step we
can find the separator in $O\pth{n^2}$ time. Namely, the total time
spent on computing the separators is
$T(n_p) = T(n_1) + T(n_2) + O(n^2)$, where $n_1 + n_2 = n_p$ and
$n_1, n_2 \leq (2/3) (n_1 + n_2)$; since $n_p=O(n^2)$,
$T(n) = O(n^2\log n)$. As such, the total running time is dominated by
the computation of the shortest paths.  The output is a list of
$O(n^2)$ paths each of complexity $O(n)$, and so it can be explicitly
presented in $O(n^3)$ time and space.  Note that, we need a continuous
deformation between any two consecutive paths in the list, which can
be implicitly presented by a collection of functions in linear time
and space (this is similar to what we describe below in the beginning
of \secref{long:edges}).

The proof of \thmref{discrete} then goes through literally in this
case.  Since all the edges have length at most $2d_1$, by assumption,
we obtain the following.

\begin{lemma}%
    \lemlab{continuous}%
    Let $\Disk$ be a topological disk with $n$ faces where every face
    is a triangle (here, the distance between any two points on the
    triangle is their Euclidean distance).  Furthermore, the boundary
    of $\Disk$ is formed by two walks $\CLeft$ and $\CRight$ (that
    share two endpoints $s,t$).  Let $d_1$ be the maximum distance of
    any point of $\Disk$ from $\CLeft$. Finally, assume that all edges
    of $\Disk$ have length at most $2d_1$.  Then one can compute, in
    $O(n^3 \log n)$ time, a continuous homotopy from $\CLeft$ to
    $\CRight$ of height at most   
    \begin{math}
        \lenX{\CLeft} + \lenX{\CRight} + O\pth{ d_1 \log n}.
    \end{math}
\end{lemma}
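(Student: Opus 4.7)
The plan is to mimic the recursive divide-and-conquer strategy of \thmref{discrete}, adapted to the continuous geodesic setting. First, I would compute for every vertex of $\Disk$ its geodesic shortest path to $\CLeft$, using the adapted Mitchell \etal algorithm from \secref{s:p:polytope}, in $O(n^3 \log n)$ total time. Let $\EdgeSet$ be the union of these shortest paths, chosen consistently so that $\CLeft \cup \EdgeSet$ is a geodesic tree; as in \thmref{discrete} I would treat $\CRight$ as having infinitesimal thickness so every recursive decomposition cleanly produces two sub-disks.

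Next, cutting $\Disk$ along $\EdgeSet$ yields a simply-connected region $P$ of complexity $O(n^2)$: each geodesic meets any triangle of $\Disk$ in at most one segment, so each triangle is subdivided into hexagons, pentagons, quadrilaterals, or triangles, with only $O(n)$ of these faces having three sides on the original triangulation. The dual of the interior faces of $P$ is then a tree $C^*$ of maximum degree $3$. By the folklore tree-edge-separator result of Lewis \etal \cite{lsh-mbrcf-65}, I can find a diagonal $uv$ of $P$ whose removal splits the face set into two parts of comparable size (each between $n_p/3$ and $2n_p/3$). Crucially, $uv$ lies along a triangle edge of $\Disk$, so its length is at most $d_2 \le 2 d_1$.

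With this separator, the three-case analysis of \thmref{discrete} (depending on whether each of $u,v$ lies on $\CRight$ or on $\EdgeSet$) carries over almost verbatim, with one essential modification motivated by \remref{no:recompute}: rather than recomputing shortest paths inside each sub-disk, I pass down the restriction of the globally-computed shortest path tree, so the inductive hypothesis still bounds every in-disk geodesic distance to the new left boundary by $d_1$. Each recursive step then adds at most $O(d_1)$ to the height --- the separator itself plus, in the spike-extension steps, a constant number of copies of geodesics of length $\le d_1$ --- and the recursion has depth $O(\log n)$, giving the claimed bound $\lenX{\CLeft} + \lenX{\CRight} + O(d_1 \log n)$.

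The main obstacle I expect is producing a genuinely continuous homotopy (rather than a discrete sequence) and bookkeeping the output efficiently. The transition across the separator $uv$ must be realized as a continuous family of leashes sweeping the shared boundary between the two sub-disks, and the spike moves along the shortest paths $\pi_u, \pi_v$ become continuous extensions along a geodesic rather than edge-by-edge pushes; each such sub-motion can be represented implicitly by a small piecewise-linear function per face, keeping the output size at $O(n^3)$. The running time is dominated by the initial shortest-path computation, since the subsequent separator step satisfies $T(n_p) = T(n_1) + T(n_2) + O(n^2)$ with $n_p = O(n^2)$, which solves to $O(n^2 \log n)$, yielding the claimed $O(n^3 \log n)$ bound overall.
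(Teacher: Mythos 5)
Your proposal follows the paper's proof essentially verbatim: the same cutting of $\Disk$ along the geodesic tree $\EdgeSet$ into a polygon $P$ of complexity $O(n^2)$, the same degree-$3$ dual tree $C^*$ with a balanced edge separator of length at most $2d_1$, the same reuse of the globally computed shortest paths in the recursion (per \remref{no:recompute}), and the same running-time recurrence $T(n_p)=T(n_1)+T(n_2)+O(n^2)$ dominated by the initial $O(n^3\log n)$ shortest-path computation. The argument is correct and complete to the same level of detail as the paper's.
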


\subsection{Homotopy height if there are long edges}
\seclab{long:edges}

\subsubsection{Breaking the disk into strips, pockets and %
   chunks}
\seclab{decomposition}


For any two points in $\Disk$ consider a shortest path $\pi$
connecting them.  The signature of $\pi$ is the ordered sequence of
edges (crossed or used) and vertices used by $\pi$, see
\secref{prelims}.  For a point $\pnt \in \CRight$, let $\signXL{\pnt}$
denote the signature of the shortest path from $\pnt$ to $\CLeft$. The
signature $\signXL{\pnt}$ is well defined in $\CRight$ except for a
finite set of \emphi{medial} points, where there are two (or more)
distinct shortest paths from $\CLeft$ to $\pnt$. In particular, let
$\PSetX{\CRight}$ be the set of all shortest paths from any medial
point on $\CRight$ to $\CLeft$.  Observe that, the medial points are
the only points (on $\CRight$) where the signature of the shortest
path from $\CRight$ to $\CLeft$ changes in any non-degenerate
triangulation.

Cutting $\Disk$ along the paths of $\PSetX{\CRight}$ breaks $\Disk$
into {corridors}.  If the intersection of a corridor with $\CRight$ is
a point (resp.~segment) then it is a \emphi{delta}
(resp.~\emphi{strip}), see \figref{corr:ch}~(I).  In a {strip}
$\strip$, all the shortest paths to $\CLeft$ from the points in the
interior of the segment $\strip \cap \CRight$ have the same signature.
Intuitively, strips have a natural way to morph from one side to the
other. We further break each delta into chunks and pockets, as
follows.

\begin{figure}[t]
    \centerline{%
       \begin{tabular}{cc}%
         {\IncludeGraphics{figs/corridor}}
         \qquad
         \qquad
         &%
           \qquad
           \qquad
           \IncludeGraphics{figs/chunk}%
         \\
         (I) & (II)
       \end{tabular}%
    }%
    \caption{}
    \figlab{corr:ch}
\end{figure}

Consider a delta $\strip$ with an apex $c$ (i.e., the point of
$\CRight$ on the boundary of $\strip$). For a point
$x \in \CLeft \cap \strip$, its signature (in relation to $\strip$),
is the signature of the shortest path from $x$ to $c$ (restricted to
lie inside $\strip$). Again, we partition $\CLeft \cap \strip$ into
maximum intervals that have the same signature, and let $\PntSet$ be
the set of endpoints of these intervals. For each point
$\pnt \in \PntSet$, consider all the different shortest paths from $c$
to $\pnt$ inside the delta $\strip$, and cut $\strip$ along these
paths. This breaks $\strip$ into regions.  If a newly created region
has a single intersection point with both $\CLeft$ and $\CRight$, then
it is a \emphi{pocket}, otherwise, it is a \emphi{chunk}.  Clearly,
this process decomposes $\strip$ into pockets and chunks.  See
\figref{corr:ch}~(II).


Applying the above partition scheme to all the deltas results in a
decomposition of $\Disk$ into strips, chunks and pockets.

\paragraph{Analysis.}

Recall, that $d_1$ is the maximum distance of any point of $\Disk$ to
$\CLeft$, and let $d_3$ be the maximum distance of any point of
$\Disk$ to $\CRight$.  Also, let 
\begin{align}
    \dBoth = \max(d_1, d_3).
    \eqlab{d:both}
\end{align}
Now, consider a chunk $\chunk$. Its intersection with $\CLeft$ is a
segment, and its intersection with $\CRight$ is a point (i.e., the
apex $c$ of the delta).

\begin{lemma}
    \lemlab{no:vertex}%
    A strip $\strip$ cannot have any vertex of $\Disk$ in its interior.
\end{lemma}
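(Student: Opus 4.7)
The plan is a proof by contradiction. Suppose some vertex $v$ of $\Disk$ lies in the interior of a strip $\strip$. Since $\mathrm{int}(\strip)$ is disjoint from $\partial \Disk = \CLeft \cup \CRight$, $v$ is an interior vertex of $\Disk$, and so \asmref{non:degenerate} guarantees that its curvature is nonzero.

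The central structural fact I will use is that $\strip$ is foliated by the shortest paths $\gamma_p$ from $p \in \mathrm{int}(\strip \cap \CRight)$ to $\CLeft$, all of which share the common signature $\sigma$. To see this, note that by the non-crossing property of shortest paths to a common target (the edge-source version of property~(A) in \secref{s:p:polytope}), no $\gamma_p$ can cross any path in $\PSetX{\CRight}$, so each $\gamma_p$ is confined to $\strip$. As $p$ sweeps across $\mathrm{int}(\strip \cap \CRight)$, the family $\{\gamma_p\}$ varies continuously, and the limiting paths as $p$ approaches the two medial endpoints of $\strip \cap \CRight$ are exactly the two bounding paths $\pi_1, \pi_2 \in \PSetX{\CRight}$ of $\strip$; hence $\bigcup_p \gamma_p$ fills the interior of $\strip$, whose boundary is $\pi_1 \cup \pi_2$ together with arcs of $\CLeft$ and $\CRight$.

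Granted the foliation, since $v \in \mathrm{int}(\strip)$ some $\gamma_p$ passes through the point $v$. Because $v$ is a vertex of $\Disk$, and hence is neither in the interior of an edge nor in the interior of a face, the path $\gamma_p$ must actually use $v$, so $v$ belongs to the signature $\sigma$ of $\gamma_p$. The two curvature cases now collide. If $v$ has positive curvature, then the paragraph on signatures in \secref{s:p:polytope} states that $v$ cannot appear in the signature of any shortest path, an immediate contradiction. If instead $v$ has negative curvature, then $v \in \sigma$ forces every $\gamma_{p'}$ (all with signature $\sigma$) to pass through $v$; in particular the bounding paths $\pi_1, \pi_2$ pass through $v$, placing $v \in \pi_1 \subset \partial \strip$ and contradicting $v \in \mathrm{int}(\strip)$.

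The most delicate point in carrying out this plan is justifying the foliation claim cleanly — specifically, ruling out any ``gap'' inside $\strip$ that is not swept by some $\gamma_p$, and verifying that the bounding paths $\pi_1, \pi_2$ inherit the signature $\sigma$ as $p$ approaches the medial endpoints of $\strip \cap \CRight$; once this is in place the case analysis on the curvature of $v$ closes the argument immediately.
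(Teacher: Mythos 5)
Your proof is correct, but it takes a genuinely different route from the paper's. The paper never establishes that the paths $\gamma_p$ sweep every point of $\strip$; instead it fixes two interior points $\pnt,\pnt'$ of $\strip\cap\CRight$, notes that their shortest paths to $\CLeft$ share a signature and hence traverse the same sequence of triangles, and observes that inside each such triangle the region between the two paths is a quadrilateral bounded by two chords joining the same pair of edges, which contains no vertex of that triangle; the union of these quadrilaterals is the region between the two paths, and letting $\pnt,\pnt'$ tend to the endpoints of $\strip\cap\CRight$ exhausts the interior of the strip. That argument is local, purely planar, and independent of \asmref{non:degenerate}. You instead prove the stronger foliation statement---every interior point lies \emph{on} some leaf $\gamma_p$---and then kill a putative interior vertex by the curvature dichotomy. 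Your version is sound, and it has the pleasant side effect of explaining the ``pinched strip'' remark (the pinch vertices are exactly the negative-curvature vertices in the common signature, and your case~2 shows they land on the boundary), but note two refinements that make it cleaner: first, at the endpoints of $\strip\cap\CRight$ you only need the limit of $\gamma_p$ to be \emph{some} shortest path from the medial point, hence a member of $\PSetX{\CRight}$ (this follows from continuity of the distance to $\CLeft$), because any point of $\bigcup\PSetX{\CRight}$ is automatically excluded from the interior of a corridor---you do not need to identify the limit with a particular bounding path, nor do the bounding paths need to inherit the signature $\sigma$; second, in the negative-curvature case it suffices that the limits pass through $v$, which follows from pointwise convergence since every $\gamma_{p'}$ does. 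The trade-off between the two proofs: the paper's avoids both the non-degeneracy assumption and the foliation claim, while yours isolates the geometric reason a vertex cannot hide strictly inside a strip and generalizes more readily to statements about which vertices can appear on the strip's boundary.
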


\begin{proof}
    Let $\edge_\CLeft = \CLeft \cap \strip$ and
    $\edge^{}_\CRight = \CRight \cap \strip$ be two edges bounding a
    strip.  For two points $\pnt, \pnt'$ in the interior of
    $\edge^{}_\CRight$, consider their corresponding shortest paths
    $\PathA$ and $\PathA'$ to $\CLeft$. By definition, these two paths
    have the same signature $\signXL{\PathA} = \signXL{\PathA'}$.  If
    not, then by a limit argument, there must be a point
    $\pnt'' \in \edge^{}_\CRight$ in between these two points, which
    has two different shortest path with different signature arriving
    to it; that is, $\pnt''$ is a medial point, implying that
    $\edge^{}_\CRight$ is broken into (at least) two edges, and it can
    not be the right side of a strip.

    Now, for $i=1,\ldots, m$, let $\edge^{}_i$ be the $i$\th edge of
    $\Disk$ that intersect $\PathA$, as we move from $\CLeft$ to
    $\CRight$ along $\PathA$.  Observe that for any $i$, the edges
    $\edge_i,\edge_{i+1}$ belong to some triangle $\triangle_i$ of
    $\Disk$, which $\PathA$ and $\PathA'$ goes through. In particular,
    being shortest paths, $\PathA$ and $\PathA'$ each intersect
    $\triangle_i$ along a segment. In particular, let $B_i$ be the
    region of $\triangle_i$ bounded by $\PathA$ and $\PathA'$. The
    region $B_i$ does not contain any vertex of $\Disk$ in its
    interior, and it thus follows that the region $\strip$ enclosed
    between $\PathA$ and $\PathA'$ (i.e., $\bigcup_i B_i$) does not
    contain any vertex of $\Disk$.  Now, applying this argument to a
    sequence of points $(\pnt,\pnt')$ that converge to the endpoint of
    $\edge_\CRight$, implies the claim.%
    \InDCGVer{\qed} %
\end{proof}

\begin{remark:unnumbered}
    \lemref{no:vertex} testifies that no vertex of $\Disk$ can be
    interior to a strip. However, strangely enough, a strip might be
    pinched together by some middle vertices. To see that, visualize a
    terrain with saddle points (i.e., passes high in the mountains),
    and the strip is made out of two triangle like shapes (with
    $\edge^{}_\CLeft$ and $\edge^{}_\CRight$ as their respective
    bases), connected by the unique path between the two extreme
    saddle points\footnote{Thus, a strip might look like a dissected
       butterfly. Sad indeed.}.
\end{remark:unnumbered}

The somewhat more challenging case to handle is that of pockets. A pocket
is a topological disk such that its intersections with $\CLeft$ and
$\CRight$ are both single points, and the two boundary paths between
these intersections are of length at most $2\dBoth$. The overall
perimeter of a pocket is of length at most $4 \dBoth$, see
\figref{pocket} (I).  Pockets are handled by using the recursive
scheme developed for the discrete case.

\begin{figure}[t]
    \centerline{%
       \begin{tabular}{ccc}      
         \includegraphics{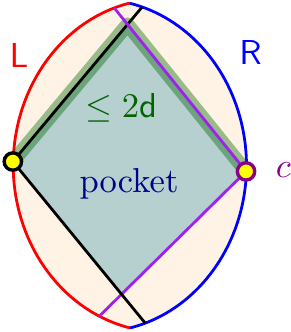}%
         \qquad
         &%
           \qquad{\includegraphics{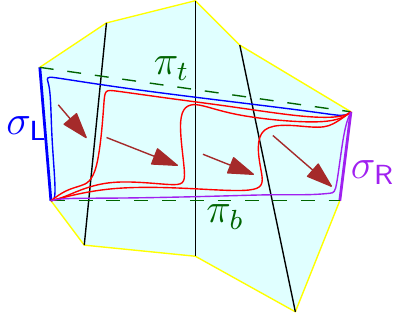}}%
         \\%
         (I) & (II)
       \end{tabular}%
    }
    \captionof{figure}{}%
    \figlab{pocket}%
\end{figure}

\subsubsection{The algorithm in detail}

We use the algorithm of \secref{decomposition} to break the given disk
$\Disk$ into strips, chunks and pockets (notice, that we assume
nothing on the length of the edges).  Next, order the resulting
regions according to their order along $\CLeft$, and transform each
one of them at time, such that starting with $\CLeft$ we end up with
$\CRight$. In each such chunk or strip, the homotopy has height
(roughly) proportional to its perimeter, while for a pocket the
situation is more involved.

\medskip%
\SaveIndent%
\begin{compactenum}[\quad(A)]
    \RestoreIndent
    \item \textbf{Morphing a chunk/strip $S$}: Let
    $\BPath_\CLeft = \CLeft \cap S$ and
    $\BPath_\CRight = \CRight \cap S$, and let $\pi_t$ and $\pi_b$ be
    the top and bottom paths forming the two sides of $S$. There is a
    natural homotopy from $\pi_t \concat \BPath_\CLeft$ to
    $\BPath_\CRight \concat \pi_b$.
    
    The strip/chunk $S$ has no vertex of $\Disk$ in its interior, and
    therefore it is formed by taking planar quadrilaterals and gluing
    them together along common edges.  Observe that by the triangle
    inequality, all such edges of any of these quadrilaterals are of
    length at most
    $\max\pth{\lenX{\BPath_\CLeft}, \lenX{\BPath_\CRight}} + 4\dBoth$.
    It is now easy to check that we can collapse each such
    quadrilateral in turn to obtain the required homotopy.  Since each
    of $\pi_t$ and $\pi_b$ is composed of two shortest paths
    (\figref{pocket} demonstrates why such a path can potentially be
    made of two shortest paths), there is a linear number of such
    quadrilaterals.  See \figref{pocket} (II) for an example.
    
    \medskip

    \item \textbf{Morphing a pocket}: We apply the algorithm of
    \lemref{continuous} recursively to a pocket.

\end{compactenum}
\medskip%
Specifically, the above decomposes $\Disk$ into $m$
chunk/strips/pockets: $\Disk_1, \ldots, \Disk_m$ ordered by their
intersection with $\CLeft$. Each such disks $\Disk_i$ has a left
(resp. right) subcurve $\CLeft_i = \CLeft \cap \Disk_i$,
(resp. $\CRight_i = \CLeft \cap \Disk_i$), and similarly, it has a top
curve $\CTop_i = \Disk_{i-1} \cap \Disk_i$ and a bottom curve
$\CBot_i = \Disk_{i} \cap \Disk_{i+1}$, for $i=1,\ldots, m$.  In the
end of the $i$\th iteration, of this morphing process, the current
curve is going to be
\begin{align*}
    \CMid_i = \CRight_1 \concat \ldots \concat \CRight_i \concat
    \CBot_i \concat \CLeft_{i+1} \concat \cdots \concat \CLeft_m.
\end{align*}
Specifically, at the $i$\th iteration, the algorithm morph
$\CMid_{i-1}$ to $\CMid_i$, as described above (depending on what kind
of region it is). In particular, initially $\CMid_0 = \CLeft$ and in
the end $\CMid_m = \CRight$.  As such, this results in the desired
homotopy.

\subsubsection{Analysis}

\paragraph{Why can we apply \lemref{continuous} to a pocket.} 
A pocket has perimeter at most $4\dBoth$, and there is a point on its
boundary, such that the distance of any point in it to this base point
is at most $2\dBoth$. Indeed, the boundary of $\dBoth$ in the worst
case, is made out of four shortest paths in the original disk, and as
such, its total length is at most $4\dBoth$, see
\Eqref{d:both}. Furthermore, the distance of any point in the pocket
to the apex on $\CRight$, is at most $2\dBoth$.

Now, by the triangle inequality, we have that if in a topological disk
$\Disk$ all the points of $\Disk$ are in distance at most $2\dBoth$
from some point $c$, then the longest edge in $\Disk$ has length at
most $4\dBoth$.  Therefore, all the edges inside a pocket cannot be
longer than $4\dBoth$.

\paragraph{Running time.}
The shortest paths from $\CRight$ to $\CLeft$ can be computed in
$O(n^3\log n)$ time.  The shortest paths inside a delta to its apex
can be computed in $O(n^2 \log n)$ time.  Since there is a linear
number of deltas, the total running time for building the strips is
$O(n^3\log n)$.

\begin{lemma}%
    \lemlab{medials}%
    The number of paths in $\PSetX{\CRight}$ is
    $O\pth{ \bigl. \cardin{\Vertices{\Disk}}}$, where
    $\Vertices{\Disk}$ is the set of vertices of $\Disk$.

    In particular, the total number of parts (i.e., strips, chunks,
    and pockets) generated by the above decomposition is
    $O\pth{\bigl.\cardin{\Vertices{\Disk}}}$.
\end{lemma}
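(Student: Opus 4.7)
The plan is to charge each medial point on $\CRight$ to a distinct vertex of $\Disk$, thereby bounding $\cardin{\PSetX{\CRight}}$ by $O\pth{\cardin{\Vertices{\Disk}}}$, and then to extend the same charging argument to the finer subdivision of deltas into chunks and pockets.

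First, I would analyze a single medial point $\pnt \in \CRight$. By definition there are two distinct shortest paths $\PathA_1, \PathA_2$ from $\pnt$ to $\CLeft$ with different signatures. By the non-crossing property for shortest paths out of a single source (observation (A) in \secref{s:p:polytope}), $\PathA_1$ and $\PathA_2$ are internally disjoint, and together with a sub-arc of $\CLeft$ they bound a subdisk $R_\pnt \subset \Disk$. If the interior of $R_\pnt$ contained no vertex of $\Disk$, then unfolding its triangles into the plane would yield a flat disk in which the two shortest paths between the common endpoints must coincide, contradicting $\signXL{\PathA_1} \neq \signXL{\PathA_2}$. Hence $R_\pnt$ contains a vertex $v_\pnt \in \Vertices{\Disk}$ in its interior, and by the discussion at the end of \secref{s:p:polytope} together with \asmref{non:degenerate}, this vertex must either have negative curvature or lie on $\CLeft$.

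Next, I would bound the number of medial points charged to any fixed vertex $v$. Suppose $\pnt \neq \pnt'$ are both witnessed by $v$. Because shortest paths from $\CLeft$ cannot properly cross, the bigons $R_\pnt$ and $R_{\pnt'}$ are forced into a nested configuration around $v$, and this nesting together with the fact that a geodesic through $v$ on a polyhedral surface can ``split'' only by going around $v$ on finitely many sides, forces at most a constant number of medial points per vertex. Combining with the first step yields $\cardin{\PSetX{\CRight}} = O\pth{\cardin{\Vertices{\Disk}}}$.

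For the second claim, each path of $\PSetX{\CRight}$ contributes at most one new corridor to the decomposition of $\Disk$, so the total number of strips and deltas is $O\pth{\cardin{\Vertices{\Disk}}}$. Inside each delta $\strip$ with apex $c$, the subdivision into pockets and chunks is produced by exactly the same kind of medial-point analysis, applied to shortest paths from $c$ to points of $\CLeft \cap \strip$ restricted to $\strip$; the vertices available to charge are the vertices of $\strip$, so the number of pockets and chunks inside $\strip$ is $O\pth{\cardin{\Vertices{\strip}}}$. Summing over all deltas and using that the $\Vertices{\strip}$ are essentially disjoint gives the claimed $O\pth{\cardin{\Vertices{\Disk}}}$ bound on the total number of parts.

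The main obstacle will be the second step, ruling out that a single vertex witnesses many medial points. This needs a careful local analysis of how the geodesic signature evolves as $\pnt$ moves along $\CRight$ and the bigon sweeps across a saddle-type vertex; I expect the cleanest route is either a small discharging argument around $v$, or invoking the known linear bound on the combinatorial complexity of the ``shortest path to a source curve'' function restricted to a polygonal curve on a polyhedral disk, which would reduce the entire lemma to a single application of an existing geodesic-complexity result.
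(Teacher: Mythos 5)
Your first step (every medial point's bigon must enclose some vertex of $\Disk$, since a vertex-free region unfolds flat and cannot support two distinct shortest paths) is essentially sound, modulo the small inconsistency that you first require the vertex to be \emph{interior} to $R_\pnt$ and then allow it to lie on $\CLeft$. The genuine gap is your second step: the claim that each vertex witnesses only $O(1)$ medial points is exactly where the lemma lives, and the argument you sketch for it does not work. The regions $R_\pnt$ for distinct medial points are indeed nested or disjoint, but nesting is the problem rather than the solution: a vertex lying in the innermost region of a nested family lies in \emph{all} of them, so the rule ``charge $\pnt$ to some vertex enclosed by $R_\pnt$'' can charge a single vertex $\Theta(n)$ times. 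Nothing in the local geometry of a saddle vertex prevents this; one negative-curvature vertex sitting near $\CLeft$ can be enclosed by every bigon. Your fallback of ``invoking a known linear bound on geodesic complexity'' is left entirely unspecified and so does not close the gap either.

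The fix is to charge each medial point not to an arbitrary enclosed vertex but to something that \emph{changes} between consecutive medial paths, and this is what the paper does. Sort the paths of $\PSetX{\CRight}$ by their endpoints along $\CRight$; the subdisks they cut off are then linearly nested, so once an edge or vertex of $\Disk$ leaves the signature of these paths it can never reappear in a later one. Consecutive (or every other) medial paths must differ in signature, and each of the $O\pth{\cardin{\Vertices{\Disk}}}$ edges and vertices accounts for at most one insertion and one deletion into the evolving signature, which bounds the number of signature changes, hence the number of paths, by $O\pth{\cardin{\Vertices{\Disk}}}$. This is a global monotonicity argument, not a per-vertex constant bound, and your proposal is missing it. Your treatment of the second claim (chunks and pockets inside each delta) is fine in outline but inherits the same unproven counting step, so it too needs the signature-monotonicity argument applied within each delta.
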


\begin{proof}
    Let $\brc{\BPath_1, \BPath_2, \dots, \BPath_k}$ be the paths in
    $\PSetX{\CRight}$ sorted by the order of their endpoints along
    $\CRight$. Observe that these paths are geodesics and so one can
    assume that they are interior disjoint, or share a suffix (or a
    prefix). Now, if $l_i \in \CLeft$ and $r_i \in \CRight$ are the
    endpoints of $\BPath_i$, for $i=1,\ldots, k$, then these endpoints
    are sorted along their respective curves. In particular, let
    $\Disk_i$ be the disk with boundary
    \begin{math}
        \CLeft[s, l_i] \concat \BPath_{i+1} \concat \CRight[s,r_i].
    \end{math}
    We have that
    \begin{math}
        \Disk_1 \subseteq \Disk_2 \subseteq \cdots \subseteq \Disk_k.
    \end{math}
    The signatures of $\BPath_i$ and $\BPath_{i+2}$ must be different
    as otherwise they would be consecutive. Furthermore, because of
    the inclusion property, if an edge or a vertex of $\Disk$
    intersects $\BPath_i$ but does not intersect $\BPath_{i+1}$ then
    it cannot intersect any later path. Therefore, every other path in
    $\PSetX{\CRight}$ can be charged to vertices or edges that are
    added or removed from the signature of the respective path.  Since
    an edge or a vertex can be added at most once, and deleted at most
    once, this implies the desired bound on the number of paths.

    The second claim follows readily by the above.%
    \InDCGVer{\qed} %
\end{proof}

The following bounds the quality of the morphing for a pocket or a
chunk.

\begin{lemma}%
    \lemlab{strip}%
    Consider a strip or a chunk $S$ generated by the above partition
    of $\Disk$. Let $\BPath_\CLeft = \CLeft \cap S$ and
    $\BPath_\CRight = \CRight \cap S$. Let $\pi_t$ and $\pi_b$ be the
    top and bottom paths forming the two sides of $S$ that do not lie
    on $\CRight$ or $\CLeft$.
    \begin{compactenum}[\quad(A)]
        \item We have $\lenX{\pi_b} \leq 2\dBoth$ and
        $\lenX{\pi_t} \leq 2\dBoth$.
        \item If $\,\lenX{\BPath_\CLeft} > 0 $ or
        $\lenX{\BPath_\CRight} >0 $ then there is no vertex of $\Disk$
        in the interior of $S$.
        \item If $\,\lenX{\BPath_\CLeft} > 0 $ or
        $\lenX{\BPath_\CRight} >0 $ then there is a homotopy from
        $\pi_t \concat \BPath_\CLeft$ to
        $\BPath_\CRight \concat \pi_b$ of height
        $\max\pth{\lenX{\BPath_\CLeft}, \lenX{\BPath_\CRight}} +
        4\dBoth$. We can compute such a homotopy in linear time.
    \end{compactenum}
\end{lemma}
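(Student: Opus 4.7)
My plan is to dispatch parts (A)--(C) by directly inspecting how $S$ was produced by the decomposition of \secref{decomposition}. For part (A), each side of $S$ is, by construction, a concatenation of at most two geodesic segments: for a strip, both sides are shortest paths in $\PSetX{\CRight}$ from a medial point on $\CRight$ to $\CLeft$; for a chunk, both sides are shortest paths from the apex $c \in \CRight$ to the endpoints of $\BPath_\CLeft$ (restricted to lie inside the enclosing delta). In either case each individual geodesic has length at most the maximum distance between $\CRight$ and $\CLeft$ (or vice versa), which is at most $\dBoth$ by \Eqref{d:both}. Hence $\lenX{\pi_t}, \lenX{\pi_b} \le 2\dBoth$.

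Part (B) for a strip is exactly \lemref{no:vertex}. For a chunk with $\lenX{\BPath_\CLeft} > 0$, I would rerun the same proof with the apex $c$ playing the role of the source edge $\edge_\CRight$ there. The two sides $\pi_t,\pi_b$ are shortest paths from $c$ to the two endpoints of $\BPath_\CLeft$, and by the very definition of chunks all shortest paths from $c$ to points of $\BPath_\CLeft$ share a common signature. Walking triangle-by-triangle from $\BPath_\CLeft$ back toward $c$ along these two geodesics and taking a limit, exactly as in \lemref{no:vertex}, shows that the region enclosed between them decomposes into sub-regions of individual triangles of $\Disk$, and therefore cannot contain any vertex of $\Disk$ in its interior.

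For part (C), the absence of interior vertices (by B) implies that each triangle of $\Disk$ meets $S$ in a convex piece, so $S$ is tiled by a sequence of planar quadrilaterals $\face_1,\dots,\face_m$ with $m = O(n)$, where consecutive quadrilaterals share a sub-segment $\edge_i$ of some edge of $\Disk$ whose endpoints lie on opposite parts of $\partial S$ (one on $\pi_t \concat \BPath_\CLeft$, the other on $\BPath_\CRight \concat \pi_b$). Applying the triangle inequality to $\partial S$ between those two endpoints, together with part (A), gives a bound of the form $\max\pth{\lenX{\BPath_\CLeft}, \lenX{\BPath_\CRight}} + O(\dBoth)$ on $\lenX{\edge_i}$. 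The homotopy is then obtained by continuously sweeping through $\face_1,\ldots,\face_m$ in order: after collapsing $\face_1,\ldots,\face_i$, the current leash is a prefix of $\pi_t \concat \BPath_\CLeft$, followed by $\edge_i$, followed by a suffix of $\BPath_\CRight \concat \pi_b$. Its length fits within the claimed height by combining the bound on $\lenX{\edge_i}$ with part (A). The sweep can be encoded implicitly by $O(n)$ piecewise-linear descriptions, so the whole construction runs in linear time.

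The main technical obstacle will be part (B) for chunks: unlike strips, the two bounding shortest paths emanate from a common vertex $c$ rather than from the interior of a common edge, so the limit argument of \lemref{no:vertex} has to be rerun ``toward'' $c$ instead of away from a pair of boundary points on $\CRight$. The signature-preservation property that comes out of the chunk construction is precisely what makes that adaptation go through, and everything else is a fairly mechanical repackaging of the discrete-case reasoning.
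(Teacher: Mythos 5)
Your overall route is the same as the paper's: (B) is a rerun of \lemref{no:vertex} with the apex as the source, and (C) is exactly the quadrilateral-by-quadrilateral collapse from the algorithm description, with the edge lengths bounded via the triangle inequality applied to $\bd S$. Those parts are fine.

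The one genuine gap is in part (A) for chunks. The side of a chunk is a shortest path from the apex $c$ to a point $x \in \CLeft$ \emph{restricted to lie inside the delta}, and you bound it by asserting that ``each individual geodesic has length at most the maximum distance between $\CRight$ and $\CLeft$,'' i.e., at most $\dBoth$. That assertion is exactly what fails a priori: the restricted geodesic distance inside the delta can exceed the unrestricted distance in $\Disk$, so $\lenX{\pi_t}$ is not automatically bounded by $\dBoth$, and the factor $2$ you then invoke (``at most two geodesic segments'') is not justified by the construction either, since the side of a chunk is defined as a single restricted shortest path. The paper closes this with a short rerouting argument that your write-up omits: take the \emph{unrestricted} shortest path $\pi_x$ from $x$ to $\CRight$ (length at most $\dBoth$); either it terminates at $c$, or it must cross one of the two bounding shortest paths of the delta emanating from $c$; in the latter case, replace the portion of $\pi_x$ after the first such crossing by the remainder of that bounding path back to $c$. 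The result is a path inside the delta of length at most $2\dBoth$, which upper-bounds the restricted shortest path and incidentally explains why $\pi_t$ and $\pi_b$ decompose into two geodesic pieces of length at most $\dBoth$ each. With that step inserted, your part (A) matches the paper's, and the rest of your argument goes through.
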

\begin{proof}
    (A) If the strip was generated by the first stage of partitioning
    then the claim is immediate.
    
    Otherwise, consider a delta $C$ with an apex $c$. For any point
    $x \in \CLeft \cap C$ we claim that there is a path of length at
    most $2\dBoth$ to $c$. Indeed, consider the shortest path $\pi_x$
    from $x$ to $\CRight$ in $\Disk$. If this path goes to $c$ the
    claim holds immediately. Otherwise, the shortest path (that has
    length at most $\dBoth$) must cross either the top or bottom
    shortest path forming the boundary of $C$ that are emanating from
    $c$. We can now modify $\pi_x$, so that after its intersection
    point with this shortest path, it follows it back to $c$. Clearly,
    the resulting path has length at most $2\dBoth$ and lies inside
    the resulting chunk.
    
    (B)
    Follows readily from the argument of \lemref{no:vertex}.

    
    (C) Immediate from the algorithm description.%
    \InDCGVer{\qed} %
\end{proof}

\subsection{The result}

\begin{theorem}%
    \thmlab{hh}%
    Suppose that we are given a triangulated piecewise linear surface
    with the topology of a disk, such that its boundary is formed by
    two walks $\CLeft$ and $\CRight$. Then there is a continuous
    homotopy from $\CLeft$ to $\CRight$ of height at most
    $\lenX{\CLeft} + \lenX{\CRight} + O\pth{ \dBoth \log n}$, where
    $\dBoth$ is the maximum geodesic distance of any point of $\Disk$
    from either $\CLeft$ or $\CRight$. This homotopy can be computed
    in $O\pth{ n^3 \log n }$ time, and it is a $O( \log n)$
    approximation to the optimal minimum height homotopy.
\end{theorem}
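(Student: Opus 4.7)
The plan is to apply the decomposition of \secref{decomposition}, handling strips and chunks via \lemref{strip} and pockets via \lemref{continuous}, and then concatenate the region-by-region morphings. First I would partition $\Disk$ into the $O(n)$ regions $\Disk_1, \ldots, \Disk_m$ (the bound follows from \lemref{medials}), ordered along $\CLeft$, and define the intermediate curves $\CMid_0 = \CLeft, \CMid_1, \ldots, \CMid_m = \CRight$, where $\CMid_i$ is obtained by replacing the $\CLeft$-sides of $\Disk_1, \ldots, \Disk_i$ by their $\CRight$-sides.

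Next, I would morph $\CMid_{i-1}$ into $\CMid_i$ one region at a time. For a strip or a chunk, \lemref{strip}(C) already supplies a homotopy of height $\max(\lenX{\BPath_\CLeft}, \lenX{\BPath_\CRight}) + 4\dBoth$, where the two side paths $\pi_t, \pi_b$ each have length at most $2\dBoth$ by \lemref{strip}(A). For a pocket, I would observe that its perimeter is at most $4\dBoth$ and every interior point lies within $2\dBoth$ of the apex on $\CRight$; a triangle inequality argument then forces every edge inside the pocket to have length at most $4\dBoth$. This places the pocket squarely within the hypotheses of \lemref{continuous} (with the role of $d_1$ played by $O(\dBoth)$), yielding a homotopy inside the pocket of height $\lenX{\BPath_\CLeft} + \lenX{\BPath_\CRight} + O(\dBoth \log n)$.

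The global height bound follows by noting that every intermediate curve $\CMid_i$ decomposes as a prefix of $\CRight$, followed by the shared boundary $\CBot_i$ between two adjacent regions (a short concatenation of geodesics of total length $O(\dBoth)$), followed by a suffix of $\CLeft$. Hence $\lenX{\CMid_i} \leq \lenX{\CLeft} + \lenX{\CRight} + O(\dBoth)$, and since the portion of the curve outside $\Disk_i$ is frozen while morphing $\Disk_i$, the peak length throughout the whole homotopy is at most $\lenX{\CLeft} + \lenX{\CRight} + O(\dBoth \log n)$. For the running time I would precompute the shortest paths from each of the $O(n)$ boundary edges to the whole surface in $O(n^3 \log n)$ time, and reuse them inside pockets following \remref{no:recompute}; the remaining work fits within the same bound. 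The $O(\log n)$ approximation guarantee is then immediate from the analogue of \lemref{max:d}, which gives $\HHopt \geq \dBoth$.

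The main obstacle I anticipate is verifying that the height cost does not accumulate as $i$ grows. Specifically, the $O(\dBoth)$ term contributed by the side paths of consecutive regions must be shown not to telescope: this works because $\CBot_i$ coincides with $\CTop_{i+1}$, so only the current region's interfaces appear in $\CMid_i$ at any given time. A secondary subtlety is justifying that a pocket, viewed in isolation, genuinely satisfies the hypotheses of \lemref{continuous}; in particular, one must argue that the ``local'' $d_1$ inside the pocket can be taken to be $O(\dBoth)$ rather than the original surface's distance to $\CLeft$, which amounts to showing that the pocket's two boundary arcs from its $\CLeft$-end to the apex can serve as the left boundary of the recursive instance.
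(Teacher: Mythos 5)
Your proposal follows the paper's proof essentially verbatim: the same decomposition into strips, chunks, and pockets, the same use of \lemref{strip} for strips/chunks and the recursive scheme of \lemref{continuous} for pockets (justified by the same perimeter-$4\dBoth$ and edge-length arguments), the same intermediate curves $\CMid_i$ with the same non-telescoping accounting of the $O(\dBoth)$ interface terms, and the same $O(n^3\log n)$ running-time breakdown. The only cosmetic difference is the lower bound you cite for the approximation guarantee ($\HHopt \geq \dBoth$ versus the paper's $\max(\dBoth/2, \lenX{\CLeft}, \lenX{\CRight})$), which does not affect the $O(\log n)$ factor.
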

\begin{proof}
    The algorithm is described above. The quality of approximation
    (i.e., $O( \log n)$) follows from plugging in the above into the
    analysis of \thmref{discrete}. Indeed, the intermediate curves
    $\CMid_0, \ldots, \CMid_m$ have length at most
    $\lenX{\CLeft} + \lenX{\CRight} + 2 \dBoth$.  The intermediate
    morphing of a strip or a chunk might result in a curve of length
    $\lenX{\CLeft} + \lenX{\CRight} + 4 \dBoth$, as can be verified
    easily.  As such, any further expansion in the length needed is a
    result of the recursive morphing of a pocket, thus accounting for the
    additional $O\pth{ \dBoth \log n}$ term.
    
    Note, that $\max(\dBoth/2, \CLeft, \CRight)$ is a lower bound on
    the height of the optimal homotopy.%
    \InDCGVer{\qed} %
\end{proof}


\section{Approximating the homotopic \Frechet %
   distance}
\seclab{HFD}

In this section, fix $\Disk$ to be a triangulated topological disk
with $n$ faces.  Let the boundary of $\Disk$ be composed of $\CTop$,
$\CRight$, $\CBot$, and $\CLeft$, four internally disjoint walks
appearing in clockwise order along the boundary.  Also, let
$\vTopL = \CLeft \cap \CTop$, $\vBotL = \CLeft \cap \CBot$,
$\vTopR = \CRight \cap \CTop$, and $\vBotR = \CRight \cap \CBot$%
\footnotemark.  See \figref{tall}.  \footnotetext{We use the same
   notation to argue about the discrete and continuous problems.} %

\begin{figure}[t]
    \centerline{\IncludeGraphics[page=2,scale=0.9]{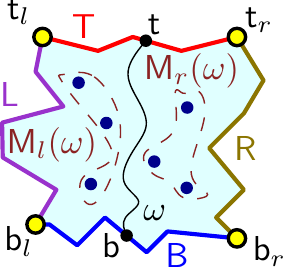}}%
    \caption{}
    \figlab{tall}
\end{figure}%

\subsection{Approximating the regular \Frechet distance}

\subsubsection{The continuous case}

Let $\distFr{\CTop}{\CBot}$ (resp. $\distFrH{\CTop}{\CBot}$) be the
regular (resp. homotopic) \Frechet distance between $\CTop$ and
$\CBot$.  Clearly,
$\distFr{\CTop}{\CBot} \leq \distFrH{\CTop}{\CBot}$.  The following
lemma implies that the \Frechet distance can be approximated within a
constant factor.

\begin{lemma}%
    \lemlab{FD:c}%
    Let $\Disk$, $n$, $\CLeft$, $\CTop$, $\CRight$, and $\CBot$ be as
    above.  Then, for the continuous case, one can compute, in
    $O\pth{n^3 \log n}$ time, reparametrizations of $\CTop$ and
    $\CBot$ of width at most $2\delta$, where
    $\delta = \distFr{\CTop}{\CBot}$.
\end{lemma}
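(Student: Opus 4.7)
The plan is to reduce the problem to searching a discretized free space over precomputed geodesic distances between the two curves. First, for every vertex $\vrt$ appearing on $\CTop \cup \CBot$, I would run the Mitchell--Mount--Papadimitriou single-source shortest-path algorithm recalled in \secref{s:p:polytope}, so that $\dist{\vrt}{\cdot}$ is known in combinatorial form on every face of $\Disk$. Each source takes $O(n^2 \log n)$ time, and there are $O(n)$ sources, giving $O(n^3 \log n)$ preprocessing time. Using this data I can extract, for any vertex $\vrt_i$ of $\CTop$ and any vertex $\vrtA_j$ of $\CBot$, the geodesic distance $\dist{\vrt_i}{\vrtA_j}$ in $O(1)$ time, and I can extract the minimum of $\dist{\vrtA_j}{\cdot}$ along any edge of $\CTop$ in $O(n)$ time.

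Given this information, I would perform a binary search over the $O(n^2)$ critical values, consisting of all pairwise vertex--vertex distances and all vertex--edge minimum distances recorded above. For a candidate value $r$, I would build the analogue of the Alt--Godau free-space grid: its rows are indexed by the edges of $\CTop$ and its columns by the edges of $\CBot$; for each cell corner $(i, j)$ I would record whether $\dist{\vrt_i}{\vrtA_j} \leq r$, and for each horizontal grid segment $(i, j)$--$(i{+}1, j)$ (resp.\ vertical segment $(i, j)$--$(i, j{+}1)$) I would record whether the vertex $\vrtA_j$ (resp.\ $\vrt_i$) lies within distance $r$ of the opposite edge. Checking for a monotone passable staircase from $(0, 0)$ to $(p, q)$ is then a standard reachability computation in $O(n^2)$ time. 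Binary searching over the critical values returns the smallest $r^\star$ for which such a staircase exists.

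The main obstacle is to prove that $r^\star \leq \delta$ and that the resulting staircase yields a reparametrization of width at most $2 r^\star$. The upper bound $r^\star \leq \delta$ follows from a snapping argument on an optimum reparametrization $\phi$: by recording the instants at which the person or the dog first reaches a vertex and merging them in chronological order, one obtains a monotone staircase whose grid points $(i, j)$ certify $\dist{\vrt_i}{\vrtA_j} \leq \delta$ via the triangle inequality on the geodesic metric of $\Disk$; the passability conditions are verified analogously along the segments, using the fact that between two consecutive vertex events only one of the two parties is moving, so its trajectory witnesses the vertex-to-edge distance bound. Conversely, given a passable staircase at value $r^\star$, I would traverse each grid segment with the moving party staying within distance $r^\star$ of the anchored vertex while the other stays anchored, so that at every instant both points lie within distance $r^\star$ of a common vertex; the triangle inequality then bounds their mutual distance by $2 r^\star$, which establishes the claimed factor-$2$ approximation and completes the proof.
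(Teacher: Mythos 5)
Your approach is a discretized Alt--Godau free-space search anchored only at the \emph{vertices} of $\CTop$ and $\CBot$, and this is where it breaks: both directions of your approximation argument incur an additive error proportional to the longest boundary edge, which is not bounded by $\delta = \distFr{\CTop}{\CBot}$ (think of two long, nearly parallel edges at tiny distance: $\delta$ is tiny, the edges are long). Concretely, for the direction $r^\star \leq \delta$, when the person first reaches $\vrt_i$ the optimum places the dog at some interior point $y$ of the edge $\vrtA_j\vrtA_{j+1}$, and your triangle inequality gives only $\dist{\vrt_i}{\vrtA_j} \leq \delta + \dist{y}{\vrtA_j}$, where the second term is controlled only by the edge length. (Also, your claim that between consecutive vertex events only one party moves is false for a continuous reparametrization, though that alone would be repairable.) In the converse direction, a horizontal staircase move makes the person traverse the \emph{entire} edge $e_i$ while the dog sits at $\vrtA_j$; your passability predicate only certifies that the \emph{minimum} distance from $\vrtA_j$ to $e_i$ is at most $r^\star$, so the leash in the middle of the edge can be longer by up to the edge length, and the ``both within $r^\star$ of a common vertex'' argument fails for the same reason. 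It is telling that the paper's discrete analogue (\lemref{FD:d}) explicitly defines $\delta$ there as the maximum of the \Frechet distance \emph{and the longest edge} -- precisely to absorb this error -- whereas \lemref{FD:c} cannot afford that term.

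The paper avoids the issue by working with the genuinely continuous shortest-path map rather than a vertex grid: it takes the shortest paths from \emph{all} points of $\CTop$ to $\CBot$, cuts along the finitely many medial-point paths $\PSetX{\CTop}$ (each of length at most $\delta$, and only linearly many by the argument of \lemref{medials}), then inside each delta adds shortest paths of length at most $2\delta$ from the base to the apex. The resulting non-crossing paths cover every point (not just every vertex) of both curves, decompose $\Disk$ into strips and pockets whose left and right sides have length at most $2\delta$, and sweeping these in order yields the width-$2\delta$ reparametrization directly, with no free-space search or binary search needed. To salvage your approach you would have to replace the vertex grid by the exact continuous free space of the geodesic metric (in the spirit of Cook and Wenk's geodesic \Frechet distance), which is a substantially different and harder computation than what you describe.
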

\begin{proof}
    In the following, consider $\Disk$ to be the region bounded by
    these four curves $\CLeft$, $\CTop$, $\CRight$, and $\CBot$.  We
    decompose $\Disk$ into strips, chunks and pockets using the
    algorithm of \secref{decomposition}. Let $\PathSet$ be the set of
    shortest paths from all points of $\CTop$ to the curve $\CBot$.
    As in the algorithm of \secref{decomposition}, let $\PSetX{\CTop}$
    be the set of all shortest paths from medial points on $\CTop$ to
    $\CBot$.  Arguing as in \lemref{medials}, we have that the set
    $\PSetX{\CTop}$ is composed of a linear number of paths.  The
    paths in $\PSetX{\CTop}$ do not cross and so partition $\Disk$
    into a set of regions.  Each region is bounded by a portion of
    $\CTop$, a portion of $\CBot$ and two paths in $\PSetX{\CTop}$.  A
    region is a \emphi{delta} if the two paths of $\PSetX{\CTop}$ in
    its boundary share a single endpoint (on $\CTop$), it is a
    \emphi{pocket} if they share two endpoints (one on $\CTop$ and one
    on $\CBot$), and it is \emphi{strip} if they share no endpoints.
    
    Obviously, the (endpoints of the) paths in $\PathSet$ cover all of
    the vertices of $\CTop$.  The paths in $\PathSet$ also cover all
    of $\CBot$ except for the bases of deltas.  Now, for each delta we
    compute the set of all shortest paths from the vertices of its
    base to its apex inside the delta.  Let $\PSetX{\CBot}$ be the set
    of all such paths in all deltas.  Clearly, the union of
    $\PSetX{\CBot}$ and $\PSetX{\CTop}$ is a set of non-crossing paths
    whose endpoints cover all the vertices of $\CTop$ and $\CBot$.
    
    The shortest path from any point of $\CTop$ to $\CBot$ is at most
    $\delta$.  So, all paths in $\PathSet$ have length at most
    $\delta$.  Similarly, the shortest path from a point of $\CBot$ to
    $\CTop$ is at most $\delta$.  Now, consider a delta $C$ with apex
    $c$.  Let $b$ be a point on the base of $C$ (and so on $\CBot$).
    The shortest path $\pi_b$ from $b$ to $\CTop$ has length at most
    $\delta$.  Let $x$ be the first point that $\pi_b$ intersects a
    boundary path of $C$, $\pi_C$.  Now,
    $\pi_b[b,x] \concat \pi_C[x,c]$ has length at most $2\delta$ and
    it is inside $C$.  We conclude that all paths in $\PSetX{\CBot}$
    have length at most $2\delta$.
    
    The paths in $\PSetX{\CBot} \cup \PSetX{\CTop}$ decompose $\Disk$
    into strips and corridors. The left and right portions of a strip
    is of length at most $2\delta$, and its top and bottom sides have
    as such \Frechet distance at most $2\delta$ from each
    other. Similarly, the leash can jump over a pocket from the left
    leash to the right leash. Doing this to all corridors and pockets,
    results in reparametrizations of $\CLeft$ and $\CRight$ such that
    their maximum length of a leash for these reparametrizations are
    at most $2\delta$.  This implies that the \Frechet distance is at
    most $2\delta$, and we have an explicit reparametrization that
    realizes this distance.
    
    As for the running time, in $O(n^3 \log n)$ time, one can compute
    all shortest paths from $\CTop$ to the whole surface.  Then one
    can, in $O(n^2\log n)$ time, compute the shortest paths inside
    each of the linear number of deltas.  It follows that the total
    running time is $O(n^3\log n)$.%
    \InDCGVer{\qed} %
\end{proof}

\subsubsection{The discrete case}

We can use a similar idea to the decomposition into deltas, pockets
and strips as done in the proof of \lemref{FD:c}.

\begin{lemma}%
    \lemlab{FD:d}%
    Let $\Disk$ be a triangulated topological disk with $n$ faces, and
    $\CTop$ and $\CBot$ be two internally disjoint walks on the
    boundary of $\Disk$.  Then, for the discrete case, one can
    compute, in $O\pth{ n }$ time, reparametrizations of $\CTop$ and
    $\CBot$ that approximate the discrete \Frechet distance between
    $\CTop$ and $\CBot$. The computed reparametrizations have width at
    most $3\delta$, where $\delta$ is the maximum of the \Frechet
    distance between $\CTop$ and $\CBot$, and the maximum length of an
    edge on
\end{lemma}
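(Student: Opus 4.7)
The plan is to mirror the proof of \lemref{FD:c} almost verbatim, replacing geodesic shortest paths on the surface by shortest walks in the graph $\Disk$. The key advantage in the discrete setting is that we can compute shortest walks from all of $\CTop$ to every vertex of $\Disk$ using a single BFS, treating the vertex set of $\CTop$ as a combined source at distance zero. Since $\Disk$ is a triangulated planar disk with $O(n)$ vertices and edges, this runs in $O(n)$ time, which is why the overall running time drops from $O(n^3 \log n)$ to $O(n)$.

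First, I would extract from the BFS tree, for each vertex of $\CBot$, the signature of its shortest walk to $\CTop$; identify the medial vertices on $\CBot$ as those where two walks of distinct signatures meet, and let $\PSetX{\CBot}$ be the set of corresponding shortest walks. These walks are pairwise non-crossing and partition $\Disk$ into strips, deltas, and pockets exactly as in the continuous case, and an argument analogous to \lemref{medials} yields that the total number of such regions is $O(n)$. Then, inside each delta with apex $c$, I would compute shortest walks staying inside the delta from each base vertex to $c$. The argument of \lemref{FD:c} carries over: either the global shortest walk from a base vertex $b$ stays inside the delta (length $\leq \delta$), or we splice its prefix with a suffix of one of the two boundary walks of the delta leading back to $c$, giving a walk of length at most $2\delta$.

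Next, I would produce the reparametrizations component by component. Inside a strip, the two side walks have length at most $2\delta$, and the obvious sweep between the top and bottom sides of the strip gives a leash never longer than $2\delta$. At pockets, the leash can jump across via the two sides, each of length at most $2\delta$. The extra $\delta$ (going from $2\delta$ to $3\delta$) comes from the discrete nature of the moves: to advance the leash endpoint along $\CTop$ or $\CBot$ via a person-move or dog-move, we append an edge of length at most $\delta$ (the bound on the maximum edge length, which is part of the definition of $\delta$) to the current walk, temporarily inflating it by at most $\delta$. Composing this with the $2\delta$-bound on the sides gives the $3\delta$ bound.

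The step I expect to require the most care is the bookkeeping of leash operations at the transitions between consecutive components of the decomposition, where one must ensure that each intermediate walk really does stay within the $3\delta$ budget — in particular, verifying that the one-edge extension used to slide the endpoint along $\CTop$ or $\CBot$ is always compatible with the side-walk bound, and that the sequence of operations satisfies \defref{leash:sequence}. Once this is checked, the BFS plus the linear-time combinatorial decomposition deliver the claimed $O(n)$ running time.
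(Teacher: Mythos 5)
Your overall strategy---cover the vertices of $\CTop$ and $\CBot$ by an ordered family of non-crossing shortest walks, splice any walk that leaves its region with a boundary walk to get the $2\delta$ bound, and pay one extra edge length (at most $\delta$) for the discrete person/dog moves---is the same as the paper's, and your $3\delta$ accounting matches. However, there is one concrete error: the graph is edge-weighted, so a \BFS from $\CTop$ does \emph{not} compute shortest walks, and Dijkstra would cost $O(n\log n)$. To get the claimed $O(n)$ bound you must invoke the linear-time single-source shortest path algorithm for planar graphs of Henzinger \etal \cite{hkrs-fspap-97} (applied after collapsing one of the curves to a single source vertex), which is exactly what the paper does. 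As stated, your running-time argument does not go through.

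Beyond that, your decomposition is genuinely different from, and more complicated than, the paper's. You try to transport the continuous machinery of signatures, medial points, and strips/deltas/pockets into the graph setting, but those notions do not transfer cleanly: a shortest walk in a graph has no ``signature'' in the sense of \secref{s:p:polytope}, and the region between two consecutive shortest-path-tree walks can contain arbitrarily many interior vertices, so there is no analogue of \lemref{no:vertex} and no ``obvious sweep'' of a strip with leash length $2\delta$. The paper sidesteps all of this: it takes one region $\Disk_i$ per \emph{edge} $\edge_i$ of $\CTop$, bounded by the two shortest walks from the endpoints of $\edge_i$ to $\CBot$, and computes shortest walks inside $\Disk_i$ from those two endpoints to the internal $\CBot$-vertices of that region. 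Moreover, for the plain (non-homotopic) discrete \Frechet distance the strip/pocket distinction is irrelevant, since consecutive walks in a leash sequence are allowed to ``jump'' over regions with interior vertices (see \remref{F:D:d}); all that is needed is the ordered family of short non-crossing walks itself. If you replace the \BFS by the planar linear-time shortest path algorithm and drop the continuous decomposition in favor of the per-edge regions (or simply argue directly that your family of walks covers all vertices in order), the proof closes.
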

\begin{proof}
    First, compute the set of shortest paths,
    $\PSetX{\CTop} = \brc{\PathA_1, \PathA_2, \cdots, \PathA_k}$, from
    vertices of $\CTop$ to the path $\CBot$. To this end, we
    (conceptually) collapse all the vertices of $\CBot$ into a single
    vertex, and compute the shortest path from this meta vertex to all
    the vertices in $\CTop$. Let $\Tree$ be the resulting shortest
    path tree.
    
    Next, for $i=1, \ldots, k-1$, let $\edge_i = \vTop_i\vTop_{i+1}$
    be the $i$\th edge of $\CTop$, and let let $\PathA_i$ be the
    shortest path from $\vTop_i$ to $\CBot$, with $\vBot_i$ being its
    endpoint on $\CBot$, and consider the region $\Disk_i$ bounded by
    the curve
    \begin{math}
        \PathA_i \concat \edge_i \concat \PathA_{i+1} \concat
        \SubCurve{\CBot}{\vBot_{i+1}}{\vBot_i}.
    \end{math}
    Now, compute the shortest path tree $\Tree_{i}$ inside $\Disk_i$,
    from the two vertices of $\edge_i$ to all the other vertices of
    $\Disk_i$. For each internal vertex $\vrt$ of
    $\SubCurve{\CBot}{\vBot_{i+1}}{\vBot_i}$, the shortest path to
    either $\vTop_i$ or $\vTop_{i+1}$ inside $\Disk_i$ can be
    retrieved from $\Tree_i$.  Let $\PSetX{i}$ be the set of all such
    shortest paths for internal vertices of
    $\SubCurve{\CBot}{\vBot_{i+1}}{\vBot_i}$, and let
    $\PathSet = \PSetX{\CTop} \cup \bigcup_i \PSetX{i}$.

    As for the length of the paths in $\PathSet$, observe that the
    shortest path $\PathB$, in $\Disk$, from such a vertex $\vrt$ to
    $\CTop$ has length at most $\delta$. If $\PathB$ wanders outside
    $\Disk_i$ then one can modify it to lie in $\Disk_i$.
    Specifically, if this path intersect, say, $\PathA_i$ then we can
    modify it into a path from $\vrt$ to $\vTop_i$, and the modified
    path has length
    \begin{math}
        \leq \lenX{\PathB} + \lenX{\PathA_i} \leq 2\delta.
    \end{math}

    Now, every edge of $\CTop$ or $\CBot$ must be used by a valid
    leash sequence, see \defref{leash:sequence}. As such, the height
    of any leash sequence is at least the length of the longest such
    edge.  Note, that two consecutive paths in $\PathSet$ might be
    either share an endpoint or adjacent, in either the top or bottom
    curve.  As such, the set $\PathSet$ can be turned into a valid
    leash sequence by adding at most two moves, in the worst case both
    a person and a dog move, between two such consecutive paths. Let
    $\PathSet'$ denote the resulting leash sequence. Now, every path
    in $\PathSet'$ has length at most $2\delta + \delta$, as the
    modified added paths are longer by at most the length of a single
    edge of $\CTop$ or $\CBot$, Thus, the leash sequence $\PathSet'$
    has height at most $3\delta$.
    
    Using the algorithm of Henzinger \etal \cite{hkrs-fspap-97} to
    compute the shortest paths from $\CBot$ takes linear time.  Since
    all the regions are disjoint, and every edge appears on the
    boundary of at most two regions, we can compute all the shortest
    paths inside all these regions to $\CTop$ in $O(n)$ time overall
    (this step requires careful implementation to achieve this running
    time).%
    \InDCGVer{\qed} %
\end{proof}

\begin{remark} %
    \remlab{F:D:d}%
    (A) The paths realizing the \Frechet distance computed by
    \lemref{FD:d} are stored using an implicit data-structure
    (essentially shortest path trees that are intertwined). This is
    why the space used is linear and why it can be constructed in
    linear time. Of course, an explicit representation of the sequence
    of walks realizing the \Frechet distance might require quadratic
    space in the worst case.

    (B) We emphasize that two consecutive paths of $\PathSet'$, from
    the proof of \lemref{FD:d}, might enclose a region that have
    (potentially) many interior vertices. Thus, the leash might
    ``jump'' over obstacles -- the remainder of this section deals
    with removing this drawback.
\end{remark}

\subsection{Minimum reparameterization width %
   if there are %
   no mountains}

The following lemma implies a $O\pth{\log n}$-approximation algorithm
for the case that all vertices in $\Disk$ are sufficiently close to
both of the two curves.

\begin{lemma}%
    \lemlab{no:tall:h:f:d}%
    Let $\Disk$ be a triangulated topological disk with $n$ faces, and
    $\CTop$ and $\CBot$ be two internally disjoint walks on the
    boundary of $\Disk$. Further, assume for all $\pnt \in \Disk$, the
    distance between $\pnt$ and $\CTop$ is at most $x$, and the
    distance between $\pnt$ and $\CBot$ is at most $x$.  Then one can
    compute reparametrization of $\CBot$ of width $O(x \log n)$.  The
    running time is $O\pth{ n^4 \log n}$ (resp. $O\pth{n^2}$) in the
    continuous (resp. discrete) case.
    
    In particular, if $x =O\pth{ \distFrH{\CTop}{\CBot} }$ then this
    is an $O( \log n)$-approximation to the optimal homotopic \Frechet
    distance.
\end{lemma}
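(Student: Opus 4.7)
The plan is to adapt the strip/chunk/pocket decomposition of \secref{decomposition} --- originally used for morphing between two boundary walks with \emph{fixed} endpoints --- to the \Frechet setting in which the leash endpoints slide monotonically along $\CTop$ and $\CBot$. The overall \Frechet morph will be obtained by splicing together a per-region sub-morph for each piece of the decomposition, where each sub-morph is either direct or is supplied by a single invocation of \thmref{hh} (respectively \thmref{discrete}).

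First, I would compute all shortest paths from $\CTop$ to $\CBot$ inside $\Disk$; by assumption each has length at most $x$. Applying the decomposition of \secref{decomposition} with $\CTop$ and $\CBot$ playing the roles of $\CLeft$ and $\CRight$, the shortest paths from medial points of $\CTop$ partition $\Disk$ into $O(n)$ strips, chunks, and pockets (by the analogue of \lemref{medials}); each such sub-region is bounded by two ``side'' geodesics of length $O(x)$ together with sub-arcs of $\CTop$ and $\CBot$. I would then handle the sub-regions in left-to-right order. Strips and chunks contain no interior vertex of $\Disk$ by \lemref{no:vertex}, so \lemref{strip}(C) gives a direct continuous morph of height $O(x)$ that sweeps a leash from one side to the other while the endpoints slide monotonically along the top and bottom arcs. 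For each pocket $P$, the perimeter is $O(x)$ by construction, so every interior point of $P$ lies within geodesic distance $O(x)$ of $\bd P$. I would invoke \thmref{hh} (continuous case) or \thmref{discrete} (discrete case) on $P$, with its two side geodesics as the two boundary walks; this yields a morph of height $O(x \log n)$ inside $P$. Finally, splice the per-region sub-morphs in order: adjacent sub-regions share a common side geodesic, which serves as both the terminal leash of one sub-morph and the initial leash of the next, so the concatenation is a continuous family of leashes. The maximum leash length over the whole motion is $O(x \log n)$, yielding the claimed reparametrizations.

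The main obstacle is pockets: a pocket may contain many interior vertices of $\Disk$, but its perimeter is $O(x)$ regardless of the vertex count, so $\dBoth$ for the pocket sub-problem is $O(x)$ and a \emph{single} invocation of \thmref{hh} (rather than an uncontrolled recursive cascade) suffices --- this is the only place the logarithmic factor arises. The ``in particular'' clause then follows because $x$ is, up to constants, a lower bound on $\distFrH{\CTop}{\CBot}$ under the stated hypothesis. For the running time, the continuous case is dominated by the shortest-path computations and the per-pocket invocations of \thmref{hh}, yielding $O(n^4 \log n)$; the discrete case relies on the linear-time planar shortest-path algorithm of Henzinger \etal and the $O(n \log n)$ bound of \thmref{discrete} per sub-region, totaling $O(n^2)$.
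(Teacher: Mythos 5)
Your proposal is correct and follows essentially the same route as the paper: the paper first invokes \lemref{FD:c} (whose proof is exactly the strip/chunk/pocket decomposition you describe) to get a width-$O(x)$ but possibly discontinuous leash function, and then patches each discontinuity --- which occurs precisely at a pocket --- by a fixed-endpoint homotopy of height $O(x\log n)$ via \thmref{hh}, exactly as you do. Your inlining of the decomposition versus the paper's use of \lemref{FD:c} as a black box is only a presentational difference, and your running-time accounting matches the paper's.
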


\begin{proof}
    Consider the continuous case.  Using the algorithm of
    \lemref{FD:c} we compute a reparametrization of $\CBot$ of width
    $\delta$, realizing approximately the regular \Frechet distance,
    where $\delta = O(x)$.  Let $\leashX{t}$ denote the leash at time
    $t$ that we obtain from the reparametrization mentioned above.
    Note that the leash $\leashX{\cdot}$ is not required to deform
    continuously in $t$.  In particular, for a given time
    $t \in [0,1]$, let
    $\leashXM{t} = \lim_{t'\rightarrow t^-} \leashX{t'}$ and
    $\leashXP{t} = \lim_{t'\rightarrow t^+} \leashX{t'}$, where
    $\lim_{t'\rightarrow t^-}$ and $\lim_{t'\rightarrow t^+}$ are the
    left-sided and right-sided limits, respectively.  By definition,
    the leash is discontinuous at $t$ if and only if
    $\leashXM{t} \neq \leashXP{t}$.
    
    Naturally, the above reparameterization can be used as long as it
    is continuous. Whenever the leash jumps over a gap (i.e., the
    leash is discontinuous at this point in time), say at time $t$, we
    are going to replace this jump by a
    $\pth{\leashXM{t}, \leashXP{t}}$-homotopy between the two
    leashes. Clearly, this would result in the desired continuous
    homotopy.
    
    To this end, observe that all the vertices inside the disk with
    boundary $\leashXM{t} \concat \leashXP{t}$ have distance $O(x)$ to
    $\CTop$ and $B$, and thus also to $\leashXM{t}$ and
    $\leashXP{t}$. Hence, using the algorithm of \thmref{hh}, compute
    an $\pth{\leashXM{t}, \leashXP{t}}$-homotopy with height
    $O\pth{x \log n}$. Since a gap must contain a vertex there are
    $O(n)$ gaps, so this filling in is done at most $O(n)$ times.
    Computing the initial reparameterization takes
    $O\pth{ n^3 \log n}$ time.  Each gap can be filled in
    $O(n^3 \log n)$ time.
    
    \medskip
    
    The discrete case is similar. The \Frechet distance here can be
    computed in linear time using the algorithm of \lemref{FD:d} (see
    also \remref{F:D:d}).  However, we can only obtain the value of
    the \Frechet distance as well as an implicit representation of the
    actual deformation in linear time. Indeed we can compute an
    explicit listing of the paths in $O(n^2)$ time.  Each path in the
    list can be charged to a single face or edge of $\Disk$.  It
    immediately follows that the number of paths is linear.  For any
    two consecutive paths, $\pi_i$ and $\pi_{i+1}$ in the list, we can
    fill in the possible gap and compute the explicit solution in
    $O(n_i^2)$ time, where $n_i$ is the number of faces between
    $\pi_i$ and $\pi_{i+1}$, see \thmref{discrete} and
    \remref{no:recompute} (D).  Since $\sum{n_i} = O(n)$ the total
    running time of the algorithm is $O(n^2)$.%
    \InDCGVer{\qed} %
\end{proof}

The above lemma demonstrates that if the starting and ending leashes
are known (i.e., the region of the disk $\Disk$ swept over by the
morph) then an approximation algorithm can be obtained. The challenge
is that a priori, we do not know these two leashes, as the input is a
topological disk $\Disk$ with the two curves $\CTop$ and $\CBot$ on
its boundary, and the start/end leashes might be curves that lie
somewhere in the interior of $\Disk$.

\subsection{A Decision Procedure for the Homotopic %
   \Frechet distance in the %
   presence of mountains}

Here, we are handling both the discrete and continuous cases together.

\begin{figure}[t]
    \centerline{%
    \begin{tabular}{ccc}%
      {\IncludeGraphics{figs/valley}}%
      &
        \qquad%
        \qquad%
       \IncludeGraphics{figs/tall}%
        \qquad%
        \qquad%
      &%
        \IncludeGraphics{figs/non_homotopic}%
      \\%
      (I) & (II) & (III)
    \end{tabular}%
 }
    \caption{}
    \figlab{vallery}%
\end{figure}

For a parameter $\tall \geq 0$, a vertex $v \in \Vertices{\Disk}$ is
\emphi{\tall{}-tall} if its distance to $\CTop$ or $\CBot$ is larger
than $\tall$ (intuitively $\tall$ is a guess for the value of
$\distFrH{\CTop}{\CBot}$).  Here, we consider the case where there are
\tall-tall vertices. Intuitively, one can think about tall vertices as
insurmountable mountains. Thus, to find a good homotopy between
$\CTop$ and $\CBot$, we have to choose which ``valleys'' to use (i.e.,
what homotopy class the solution we compute belongs to if we think
about tall vertices as punctures in the disk). As a concrete example,
consider \figref{vallery}~(I), where there are three tall vertices,
and two possible solutions are being shown.

In the discrete case, we subdivide each edge in the beginning so that
if an edge has length $> 2 \tall$, then the vertex inserted in the
middle of it is \tall-tall.  Observe that, if
$\tall > \distFrH{\CTop}{\CBot}$ then no leash of the optimum
homotopic motion can afford to contain a \tall-tall vertex.  We use
$\TVertices^\tall$ to denote the set of all \tall-tall vertices in
$\Vertices{\Disk}$.

Now, let $\walk$ and $\walk'$ be two walks connecting points on
$\CTop$ and $\CBot$.  The walks $\walk$ and $\walk'$ are
\emphi{homotopic} in $\Disk \setminus \TVertices^\tall$ if and only if
they are homotopic in $\Disk \setminus \TVertices^\tall$ after
contracting $\CTop$ and $\CBot$ (each to a single point).  Two
\emph{non-crossing} walks $\walk$ and $\walk'$ are homotopic if and
only if $\CTop \concat \CBot \concat \walk \concat \walk'$ contains no tall
vertices.  It is straightforward to check that homotopy is an
equivalence relation.  So it partitions $(\CTop, \CBot)$-paths into
\emphi{homotopy classes}; we call each class a $\tau$-homotopy class
or simply a homotopy class (given that $\tau$ is fixed).

For a homotopy class $\iClass$, let $\leftC{\iClass}$
(resp. $\rightC{\iClass}$) be the \emphi{left geodesic}
(resp. \emphi{right geodesic}); that is, $\leftC{\iClass}$ denotes the
shortest path in $\iClass$ from $\vTopL$ to $\vBotL$ (resp. from
$\vTopR$ to $\vBotR$).

Let $\walk$ be any walk in $\iClass$ from $\pBot \in \CBot$ to
$\pTop \in \CTop$.  The \emphi{left tall set} of $\iClass$, denoted by
$\TLeft{\iClass} = \TLeft{\walk}$, is the set of all \tall-tall
vertices to the left of $\walk$. Namely, $\TLeft{\iClass}$ is the set
of tall vertices inside the disk with boundary
\begin{math}
    \CLeft \concat \CTop[\vTopL, \pTop] \concat \walk \concat
    \CBot[\vBotL,\pBot].
\end{math}
where $\CLeft$ is the ``left'' portion of the boundary of $\Disk$,
having endpoints $\vTopL$ and $\vBotL$.  We similarly define the
\emphi{right tall set} of $\iClass$,
$\TRight{\iClass} =\TRight{\walk}$, to be the set of all \tall-tall
vertices to the right of $\walk$.  See \figref{vallery}~(II).

Note that the sets $\TLeft{\iClass}$ and $\TRight{\iClass}$ do not
depend on the particular choice of $\walk$, since all paths in
$\iClass$ are homotopic and so have the same set of $\tall$-tall
vertices to their left and right side.  However, we emphasize that the
left and right tall sets do not identify homotopy classes.
\figref{vallery}~(III) demonstrates two non-homotopic paths with
identical left and right tall sets.

The set $\iClass$ is \emphi{\tall-extendable} from the left if and
only if $\lenX{\leftC{\iClass}} \leq \tall$ and there is a homotopy
class $\iClass'$, such that $\lenX{\leftC{\iClass'}} \leq \tall$ and
$\TLeft{\iClass} \subset \TLeft{\iClass'}$.  In particular, $\iClass$
is \emphi{\tall-saturated} if it is not \tall-extendable and
$\lenX{\leftC{\iClass}} \leq \tau$.

\subsubsection{On the left and right geodesics}

\begin{lemma}%
    \lemlab{saturated:prop}%
    Let $\iClass$ be a $\tall$-saturated homotopy class, where
    $\tall \geq \distFrH{\CTop}{\CBot}$.  Then
    $\lenX{\rightC{\iClass}} \leq 4\tall$.
\end{lemma}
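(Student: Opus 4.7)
The plan is to invoke the hypothesis $\tall \geq \distFrH{\CTop}{\CBot}$ to obtain an optimum width-$\tall$ homotopy $h$ between $\CTop$ and $\CBot$, and from its leashes extract a short representative of $\iClass$ from $\vTopR$ to $\vBotR$. Let $\leashX{t}$ denote the leash of $h$ at time $t$. Each $\leashX{t}$ has length at most $\tall$, joins a point of $\CTop$ to a point of $\CBot$, and hence every point on it is within $\tall$ of both curves; consequently no leash passes through a $\tall$-tall vertex. In particular $\leashX{0}$ is a curve from $\vTopL$ to $\vBotL$ of length at most $\tall$, $\leashX{1}$ is a curve from $\vTopR$ to $\vBotR$ of length at most $\tall$, and both avoid all tall vertices.

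The first step is to use saturation to show that $\TRight{\iClass} = \emptyset$. Suppose on the contrary that some tall vertex $v$ lies in $\TRight{\iClass}$. Because the left tall set $T_t = \TLeft{[\leashX{t}]}$ grows from the empty set at $t=0$ to the full set of tall vertices at $t=1$, there is a first time $t_v$ at which $v$ enters $T_{t_v}$. Using the leash $\leashX{t_v}$ (which has length at most $\tall$ and passes close to $v$) together with $\leftC{\iClass}$, I construct a $\vTopL$-to-$\vBotL$ curve of length at most $\tall$ whose homotopy class $\iClass'$ has left tall set $\TLeft{\iClass} \cup \{v\} \supsetneq \TLeft{\iClass}$. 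This witnesses $\tall$-extendability of $\iClass$ from the left and contradicts its saturation; hence $\TRight{\iClass} = \emptyset$.

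The second step bounds $\lenX{\rightC{\iClass}}$. Since $\TRight{\iClass} = \emptyset$, the leash $\leashX{1}$ has every tall vertex on its left, so $\TLeft{[\leashX{1}]} = \TLeft{\iClass}$. If $[\leashX{1}] = \iClass$, then immediately $\lenX{\rightC{\iClass}} \leq \lenX{\leashX{1}} \leq \tall \leq 4\tall$. Otherwise the classes $[\leashX{1}]$ and $\iClass$ differ only by some winding around tall vertices (the phenomenon illustrated in \figref{vallery}~(III)). In this case I correct the winding locally by modifying $\leashX{1}$ using $\leashX{0}$ and $\leftC{\iClass}$, each of length at most $\tall$. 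The resulting $\vTopR$-to-$\vBotR$ representative of $\iClass$ has length bounded by $\lenX{\leashX{1}} + \lenX{\leashX{0}} + 2\lenX{\leftC{\iClass}} \leq 4\tall$.

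The main obstacle is the winding-correction step: a priori, two classes with the same left tall set can differ by loops of unbounded length around tall vertices. The correction must be carried out using only the three short curves $\leashX{0}$, $\leashX{1}$, and $\leftC{\iClass}$ (of length $\leq \tall$ each), so as to avoid incurring length from the potentially long boundary arcs $\CTop$ and $\CBot$; this is the delicate combinatorial core of the argument.
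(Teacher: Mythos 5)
There is a genuine gap, and it sits in your very first step. Your argument hinges on the claim that saturation forces $\TRight{\iClass} = \emptyset$, but this is false. Saturation only asserts that no homotopy class $\iClass'$ with $\TLeft{\iClass} \subsetneq \TLeft{\iClass'}$ has $\lenX{\leftC{\iClass'}} \leq \tall$; it does not assert that every tall vertex can be absorbed into the left side. A tall vertex sitting near $\CRight$, far from $\vTopL$ and $\vBotL$, may be unreachable by any short left geodesic, in which case a saturated class legitimately keeps it in its right tall set (indeed $\iClassOpt$ itself can be saturated with $\TRight{\iClassOpt} \neq \emptyset$, and the lemma must still hold for it). The supporting claim you invoke --- that $\TLeft{\pbrc{\leashX{t}}}$ grows from $\emptyset$ at $t=0$ to all of $\TVertices^\tall$ at $t=1$ --- is also wrong: since no leash of the optimal homotopy passes through a tall vertex and the leashes deform continuously, all of them lie in the single class $\iClassOpt$ and their left tall set is \emph{constant} in $t$. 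Finally, even where your construction applies, a witness against saturation must have $\lenX{\leftC{\iClass'}} \leq \tall$, whereas the curve you assemble from $\leashX{t_v}$ and $\leftC{\iClass}$ has length up to $2\tall$, so it would not contradict saturation anyway.

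What saturation actually yields --- and all the paper extracts from it --- is that $\TLeft{\iClass}$ is not a \emph{proper subset} of $\TLeft{\iClassOpt}$ (otherwise $\iClassOpt$, whose left geodesic has length at most $\tall$, would witness extendability). From this one deduces that either $\iClass = \iClassOpt$ (done), or $\leftC{\iClass}$ crosses $\rightC{\iClassOpt}$; in the latter case the paper reroutes $\CTop$ and $\CBot$ through the crossing points into curves $\CTop'$ and $\CBot'$ contained in $\leftC{\iClassOpt} \cup \rightC{\iClassOpt} \cup \leftC{\iClass}$, argues that $\rightC{\iClass}$ is homotopic to $\CTop' \cdot \leftC{\iClass} \cdot \CBot'$, and concludes since $\rightC{\iClass}$ is the shortest path in its class. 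That rerouting is exactly the ``winding-correction'' your third step defers as the delicate core. As written, your proof both rests on a false structural claim and leaves the essential construction uncarried out.
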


\begin{proof}
    Let $\iClassOpt$ be the homotopy class of the leashes in the
    optimum solution.  Of course, no leash in the optimum solution
    contains a \tall-tall vertex.  Further, all leashes in the optimal
    solution are homotopic because there is a homotopy that contains
    all of them by definition.
    
    Since $\iClass$ is saturated the set $\TLeft{\iClass}$ is not a
    proper subset of $\TLeft{\iClassOpt}$.  It follows that either
    $\TLeft{\iClass} = \TLeft{\iClassOpt}$ or $\TLeft{\iClass}$
    intersects
    $\TVertices^\tall\backslash \TLeft{\iClassOpt} =
    \TRight{\iClassOpt}$
    
    If $\TLeft{\iClass} = \TLeft{\iClassOpt}$ then either
    $\iClass = \iClassOpt$, and in particular
    $\lenX{\rightC{\iClass}} = \lenX{\rightC{\iClassOpt}} \leq \tall$,
    or $\leftC{\iClass}$ crosses $\rightC{\iClassOpt}$.

    \begin{figure}[t]
        \centerline{%
        \begin{tabular}{cc}
          \IncludeGraphics{figs/frechet_l_r}%
          \qquad%
          &
            \qquad\IncludeGraphics[page=2]{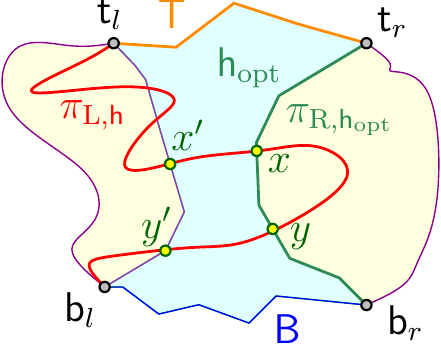}
          \\
          (I) & (II)
        \end{tabular}%
     }
        \caption{}
        \figlab{l_r}%
    \end{figure}

    Otherwise, the set $\TLeft{\iClass} \cap \TRight{\iClassOpt}$ is
    not empty.  Again, it follows that $\leftC{\iClass}$ crosses
    $\rightC{\iClassOpt}$.
    
    Therefore, we only need to address the case that $\leftC{\iClass}$
    crosses $\rightC{\iClassOpt}$.
    
    Let $x$ be the first intersection point between $\leftC{\iClass}$
    and $\rightC{\iClassOpt}$, as one traverses $\leftC{\iClass}$ from
    $\vTopL$ to $\vBotL$. Let $x'$ be the last intersection point of
    $\leftC{\iClass}[\vTopL, x]$ with $\leftC{\iClassOpt}$.
    Similarly, $y$ is the last intersection point between
    $\leftC{\iClass}$ and $\rightC{\iClassOpt}$, and $y'$ is the first
    intersection of $\leftC{\iClass}[y, \vBotL]$ and
    $\leftC{\iClassOpt}$.  Observe that the interiors of
    $\leftC{\iClass}[x', x]$ and $\leftC{\iClass}[y, y']$ do not
    intersect the curves $\leftC{\iClassOpt}$ and
    $\rightC{\iClassOpt}$. See \figref{l_r} (I).

    As the curves $\leftC{\iClass}$ and $\rightC{\iClass}$ are
    homotopic (by definition), the disk with the boundary
    $\CTop \cdot \leftC{\iClass} \cdot \CBot \cdot \rightC{\iClass}$
    does not contain any tall vertex, and
    $\CTop \cdot \leftC{\iClass} \cdot \CBot$ is homotopic to
    $\rightC{\iClass}$.
    
    Consider the walk
    $\CTop'= \rightC{\iClassOpt}[\vTopR, x] \concat \leftC{\iClass}[x,
    x'] \concat \leftC{\iClassOpt}[x', \vTopL]$, see
    \figref{l_r} (II).
    The walk $\CTop'$ is homotopic to $\CTop$.  Similarly,
    $\CBot' = \leftC{\iClassOpt}[\vBotL, y'] \cdot \leftC{\iClass}[y',
    y] \cdot \rightC{\iClassOpt}[y, \vBotR]$
    is homotopic to $\CBot$.  It follows that $\rightC{\iClass}$ is
    homotopic to $\CTop' \cdot \leftC{\iClass} \cdot \CBot'$. As
    $\rightC{\iClass}$ is the shortest path in its homotopy class with
    these endpoints, it follows that
    \begin{align*}
        \lenX{\rightC{\iClass}} \leq \lenX{\CTop' \cdot
           \leftC{\iClass} \cdot \CBot'}%
        \leq%
        \lenX{\leftC{\iClass} } + \pth{ \lenX{\leftC{\iClassOpt}} +
           \lenX{\leftC{\iClass} } + \lenX{\rightC{\iClassOpt}}}%
        \leq %
        4\tall,
    \end{align*}
    as $\CTop'$ and $\CBot'$ are disjoint, and
    \begin{math}
        \CTop' \cup \CBot'%
        \subseteq%
        \rightC{\iClassOpt} \cup \leftC{\iClassOpt} \cup
        \leftC{\iClass}.
    \end{math}
    \InDCGVer{\qed} %
\end{proof}

A region that contains no $\tall$-tall vertices can still,
potentially, contain $\tall$-tall points (that are not vertices) on
its edges or faces. We next prove that this does not happen in our
setting.

\newcommand{\SQED}{%
   \InDCGVer{\qed}%
   \InNotDCGVer{ %

      \xspace%

   }%
}

\begin{lemma}%
    \lemlab{short:boundaries:c}%
    For any $\tau\geq 0$, let $\iClass$ be a $\tall$-homotopy class,
    such that
    $\max\pth{ \lenX{\leftC{\iClass}}, \lenX{\rightC{\iClass}} } \leq
    x$,
    where $x \geq \tall \geq \distFrH{\CTop}{\CBot}$.  Let $\Disk'$ be
    the disk with boundary
    $\CTop \cdot \rightC{\iClass} \cdot \CBot \cdot \leftC{\iClass}$.
    Then all the points inside $\Disk'$ are within distance $O(x)$ to
    both $\CTop$ and $\CBot$ in $\Disk'$.
\end{lemma}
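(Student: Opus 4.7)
The plan is to establish the bound $d_{\overline{\Disk'}}(\pnt,\CTop)=O(x)$ for every $\pnt\in \Disk'$; the symmetric argument would give the analogous bound for $\CBot$. I would begin by handling vertices of the triangulation and then extend to arbitrary points by a local geometric argument.

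For the vertex case, pick any vertex $v$ of the triangulation lying in $\overline{\Disk'}\setminus(\CTop\cup\CBot)$. By definition of a $\tau$-homotopy class, no $\tau$-tall vertex lies in $\overline{\Disk'}$, so $v$ is not $\tau$-tall and $d_{\Disk}(v,\CTop)\le\tau\le x$. Let $\sigma$ be a shortest $(v,\CTop)$-path in $\Disk$. If $\sigma\subseteq\overline{\Disk'}$, we are done with a bound of $x$. Otherwise, since $\CTop$ and $\CBot$ lie on $\bd\Disk$, the path $\sigma$ must exit $\overline{\Disk'}$ through $\leftC{\iClass}\cup\rightC{\iClass}$; let $y$ be its first such crossing point, and assume without loss of generality that $y\in\leftC{\iClass}$. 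Then $\sigma[v,y]\subseteq\overline{\Disk'}$, and concatenating it with the subarc of $\leftC{\iClass}$ from $y$ to $\vTopL\in\CTop$ yields a path in $\overline{\Disk'}$ of length at most $\tau+\lenX{\leftC{\iClass}}\le 2x$.

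To pass from vertices to arbitrary points, I would use the observation that $\leftC{\iClass}$ and $\rightC{\iClass}$, being geodesics, cross every triangle $T$ of the triangulation as a single straight segment. Consequently, $T\cap\overline{\Disk'}$ is a convex polygon, and because any chord of a triangle must leave at least one vertex of $T$ on each side, this polygon contains at least one vertex $v$ of $T$. The straight segment $\pnt v$ therefore lies in $T\cap\overline{\Disk'}\subseteq\overline{\Disk'}$, and combining this with the vertex case would yield $d_{\overline{\Disk'}}(\pnt,\CTop)\le |\pnt v|+2x$.

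The remaining delicate step, and the main obstacle I anticipate, is bounding $|\pnt v|$ — or equivalently the diameter of $T\cap\overline{\Disk'}$ — by $O(x)$. Triangles in the continuous setting may have arbitrarily long edges, but the convex polygon $T\cap\overline{\Disk'}$ is bounded by segments of two types: sub-segments of $\leftC{\iClass}\cup\rightC{\iClass}$ of total length at most $\lenX{\leftC{\iClass}}+\lenX{\rightC{\iClass}}\le 2x$, and sub-segments of edges of $T$ whose endpoints lie either at non-tall vertices of $T$ or on $\leftC{\iClass}\cup\rightC{\iClass}$. I would bound each such edge sub-segment by applying the redirection argument from the vertex case to both of its endpoints, observing that any such endpoint reaches $\CTop$ in $\overline{\Disk'}$ via a path of length $O(x)$; combining these by the triangle inequality gives the desired $O(x)$ bound on the perimeter, and hence the diameter, of $T\cap\overline{\Disk'}$, completing the proof.
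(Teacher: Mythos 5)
Your opening step --- every vertex of $\Disk'$ is non-\tall-tall, hence within $\tall\le x$ of $\CTop$ in $\Disk$, and its shortest path to $\CTop$ can be rerouted along $\leftC{\iClass}$ or $\rightC{\iClass}$ at its first exit from $\Disk'$, giving a path of length at most $2x$ inside $\Disk'$ --- is correct and matches the paper's first observation. The gap is in the passage from vertices to arbitrary points: your plan hinges on the diameter of $T\cap\Disk'$ (or of the cell containing $\pnt$) being $O(x)$, and that is false. Take $\Disk$ to be a long thin band of length $L\gg x$ and width at most $x$, with $\CTop$ and $\CBot$ its two long sides, triangulated by skinny triangles spanning its whole length; there are no \tall-tall vertices, $\leftC{\iClass}$ and $\rightC{\iClass}$ are the two short ends, and $\Disk'=\Disk$. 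A spanning triangle $T$ satisfies the lemma's conclusion (every point of $T$ is within $x$ of $\CTop$ straight across the band), yet $T\cap\Disk'=T$ has diameter about $L$ and a point $\pnt$ in its middle is at distance about $L/2$ from every vertex of $T$. The step that concretely breaks is your triangle-inequality bound on an edge sub-segment: knowing that each endpoint reaches $\CTop$ by a path of length $O(x)$ does not bound the endpoints' distance from each other, because the two paths may land on points of $\CTop$ that are arbitrarily far apart along $\CTop$, and $\lenX{\CTop}$, $\lenX{\CBot}$ are not controlled by $x$ --- only $\leftC{\iClass}$ and $\rightC{\iClass}$ are short.

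The paper avoids this by not treating $\Disk'$ uniformly: it first decomposes $\Disk'$ into strips, chunks and pockets (\secref{decomposition}). In a strip or chunk every point lies on a leash of length $O(x)$ joining $\CTop$ to $\CBot$, so the conclusion is immediate there with no bound on edge lengths --- this is exactly what handles the long-band example. The edge-length argument is run only inside a pocket $\Pocket$, where it is sound precisely because $\bd\Pocket$ is a closed curve of length $O(x)$: both endpoints of an edge are within $x$ of $\bd\Pocket$, so one can travel from one to the other by going to $\bd\Pocket$, around it, and back, for a total of $O(x)$. Your proof lacks a substitute for this short enclosing curve. (Two smaller problems in the same step: $\leftC{\iClass}$ and $\rightC{\iClass}$ are shortest only in their homotopy class, so they may cross a triangle in several segments and $T\cap\Disk'$ need not be convex; and even a convex cell cut out by two chords of a triangle can avoid all three vertices, so the cell containing $\pnt$ need not contain a vertex of $T$. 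Also, the discrete half of the lemma, which you do not address, is the easy part: long edges were pre-subdivided so that their midpoints are \tall-tall.)
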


\begin{proof}
    We first consider the continuous case.  By the definition of
    \tall-homotopy, the disk $\Disk'$ has no \tall-tall
    vertices. Furthermore, by the definition of $x$, we have that the
    distance of any point on $\CTop$ to $\CBot$, restricted to paths
    in $\Disk'$ is at most $\delta_1$, where
    $\delta_1 = x + \distFr{\CTop}{\CBot} \leq 2x$. Indeed, the
    shortest path from any point on $\CTop$ to $\CBot$ in $\Disk$,
    either stays inside $\Disk'$, or alternatively intersects either
    $\leftC{\iClass}$ or $\rightC{\iClass}$.
    
    We can now deploy the decomposition of $\Disk'$ into strips,
    pockets and chunks as done in \secref{decomposition}. Every strip
    (or a chunk) is being swept by a leash of length at most
    $\delta_2 = 2\delta_1 \leq 4x$ (the factor two is because a strip
    might rise out of a delta), and therefore the claim trivially
    holds for points inside such regions.
    
    Every pocket $\Pocket$ has perimeter of length at most
    $\lenX{\bd \Pocket} \leq \delta_3 = 2\delta_2=8x$ (the perimeter
    also contains two points of $\CTop$ and $\CBot$ and they are in
    distance at most $\delta_2$ from each other in either direction
    along the perimeter).  So, consider such a pocket $\Pocket$. Since
    $\Disk'$ contains no \tall-tall vertices, $\Pocket$ does not
    contain any tall vertex. Let $\edge$ be an edge in $\Pocket$ (or a
    subedge if it intersects the boundary of $\Pocket$). The two
    endpoints of $\edge$ are in $\Pocket$, and such an endpoint is
    either a (not tall) vertex or it is contained in $\bd \Pocket$. In
    either case, these endpoints are in distance at most $x$ from
    $\bd \Pocket$, and so they are in distance at most
    $\delta_4 = 2x + \lenX{\bd \Pocket}/2 = 2x + \delta_2 \leq 6x$
    from each other even if the geodesic distance is restricted to
    $P$.  We conclude that $\lenX{\edge} \leq \delta_4$, and
    consequently, any point in $\edge$ is in distance at most
    $\delta_5 = \lenX{\edge}/2 + x + \delta_2 \le 3x + x +8x \leq 12x$
    from $\CTop$ and $\CBot$.
    
    Now, consider any point $\pnt$ in $\Pocket$, and consider the face
    $\face$ that contains it. Since the surface is triangulated,
    $\face$ is a triangle. Clipping $\face$ to $\Pocket$ results in a
    planar region $\face'$ that has perimeter at most
    $\delta_6 = 3\delta_4 + \lenX{\bd \Pocket} \leq 3\cdot 6x +
    \delta_3 \leq (18+8)x \leq 26x$
    (note, that an edge might be fragmented into several subedges, but
    the distance between the furthest two points along a single edge
    is at most $\delta_4$ using the same argument as above).  Thus,
    the furthest a point of $\Pocket$ can be from an edge of $\Pocket$
    is at most $\delta_7 = \delta_6/2\pi \leq 5 x $. Hence, the
    maximum distance of a point of $\Pocket$ from either $\CTop$ or
    $\CBot$ (inside $\Disk'$) is at most
    $\delta_5 + \delta_7 \leq 12x + 5x = 17x$.
    
    \medskip
    
    The discrete case is easy. Any edge of length $\geq 2\tall$ was
    split, by introducing a middle vertex, which must be
    \tall-tall. So the claim immediately holds.%
    \InDCGVer{\qed} %
\end{proof}

\subsubsection{The decision algorithm}

\begin{lemma}%
    \lemlab{try:vertex}%
    Let $\Disk, n, \CTop, \CLeft, \CBot, \CRight, \vTopL, \vBotL$ as
    in the first paragraph of \secref{HFD} and $\tall$ as in the
    previous subsection, and let $X \subseteq \Vertices{\Disk}$ be a
    set of \tall-tall vertices. Consider the shortest path $\PathL$
    (between $\vTopL$ and $\vBotL$) that belongs to any homotopy class
    $\iClass$ such that $X \subseteq \TLeft{\iClass}$. Then the path
    $\PathL$ can be computed in $O\pth{n^4 \log n}$ (resp.
    $O\pth{n \log n}$) time in the continuous (resp. discrete) case.
\end{lemma}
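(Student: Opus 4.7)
I aim to reduce the topologically constrained shortest path to an unconstrained shortest path on a cut surface. First, I compute a shortest-path tree $T$ rooted at $\CLeft$ in $\Disk$: in the continuous case the edge-source adaptation of Mitchell--Mount--Papadimitriou from \secref{s:p:polytope} builds $T$ in $O(n^2 \log n)$ time, while in the discrete case the planar SSSP algorithm of Henzinger \etal builds $T$ in $O(n)$ time. For each $v \in X$, let $\sigma_v$ denote the unique tree path from $v$ to $\CLeft$; these are pairwise non-crossing geodesics in $\Disk$, and their union $T_X = \bigcup_{v \in X} \sigma_v$ is a subtree of $T$.

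Next, I cut $\Disk$ along $T_X$, obtaining a simply connected surface $\widetilde{\Disk}$ in which each $v \in X$ appears as the interior tip of an inward boundary slit emanating from $\CLeft$. In the continuous case each geodesic $\sigma_v$ crosses $O(n)$ faces of $\Disk$, so after re-triangulation $\widetilde{\Disk}$ has complexity $O(n^2)$; in the discrete case, duplicating the $O(n)$ tree edges gives a cut graph of complexity $O(n)$. I then compute the shortest $(\vTopL,\vBotL)$-path $\PathL$ in $\widetilde{\Disk}$: applying Mitchell--Mount--Papadimitriou to the $O(n^2)$-size continuous cut surface runs in $O(n^4 \log n)$ time, while Dijkstra on the $O(n)$-size discrete cut graph runs in $O(n \log n)$.

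For correctness, because every slit attaches a vertex of $X$ to $\CLeft$, the projection of $\PathL$ back to $\Disk$ encloses all of $T_X$ -- and in particular $X$ -- inside the region it bounds with $\CLeft$, so $X \subseteq \TLeft{\PathL}$ and $\PathL$ is feasible. For optimality, take any feasible $(\vTopL,\vBotL)$-path $\PathL^*$ in $\Disk$ with $X \subseteq \TLeft{\PathL^*}$. For each consecutive pair of transverse crossings of $\PathL^*$ with one of the $\sigma_v$'s, replace the intervening sub-arc of $\PathL^*$ with the corresponding sub-arc of $\sigma_v$. Because $\sigma_v$ is a geodesic, each of its sub-arcs is itself a shortest path between its endpoints, so each swap weakly shortens $\PathL^*$; after finitely many swaps the resulting path is disjoint from $T_X$ and is thus realized in $\widetilde{\Disk}$.

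The main obstacle is certifying that these swaps preserve the constraint $X \subseteq \TLeft{\PathL^*}$: the lens cut off by a swap is bounded by a sub-arc of $\sigma_v$ and an arc of $\PathL^*$, and a would-be vertex $u \in X$ lying inside it would force its own tree path $\sigma_u$ to enter and exit the lens, producing a nested crossing pair that one handles first. Processing the swaps in an innermost-first order, guided by the tree structure of $T_X$, ensures that no vertex of $X$ is ever moved to the right of the current path, completing the argument.
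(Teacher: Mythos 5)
Your algorithm is the same as the paper's: cut $\Disk$ along shortest paths from the vertices of $X$ to $\CLeft$, then compute a shortest $(\vTopL,\vBotL)$-path in the resulting disk of complexity $O(n^2)$ (resp.\ $O(n)$), with the same running-time accounting. The only divergence is in the correctness argument. The paper argues directly that the \emph{optimal} constrained path $\PathL$ cannot cross any cut path in its interior (an odd number of crossings would put the corresponding vertex of $X$ on the wrong side, and two or more crossings would allow shortening one of the two shortest paths), so $\PathL$ survives the cut and is found by the unconstrained search; you instead uncross an \emph{arbitrary} feasible path from the cut tree by an exchange argument. Note that your last paragraph --- the innermost-first ordering of swaps to preserve $X \subseteq \TLeft{\cdot}$ --- is both the least rigorous step and unnecessary: you do not need the uncrossed path to remain feasible, only to be no longer than the original and to lift to the cut surface, since feasibility of the computed path is established separately in your first correctness paragraph. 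With that simplification (or with the paper's direct argument applied to $\PathL$ itself), the proof is complete.
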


\begin{proof}
    For each vertex of $v \in X$, compute its shortest path $\PathB_v$
    to $\CLeft$ in $\Disk$. Cut the disk $\Disk$ along these
    paths. The result is a topological disk $\Disk'$. Compute the
    shortest path $\PathA$ in $\Disk'$ between $\vTopL$ and $\vBotL$.
    
    We claim that $\PathA=\PathL$. To this end, consider $\PathL$ and
    any path $\PathB_v$ computed by the algorithm. We claim that
    $\PathL$ and $\PathB_v$ do not cross in their interior. Indeed, if
    $\PathL$ cross $\PathB_v$ an odd number of times, then $v$ is
    inside the disk $\PathL \cdot \CTop \cdot \CRight \cdot \CBot$,
    which contradicts the condition that
    $v \in X \subseteq \TLeft{\iClass}$. Clearly, $\PathL$ and
    $\PathB_v$ cannot cross in their interiors more than once, because
    otherwise, one can shorten one of them, which is a contradiction
    as they are both shortest paths. Thus, $\PathL$ is a path in
    $\Disk'$ connecting $\vTopL$ to $\vBotL$, thus implying that
    $\PathA$ is $\PathL$.
    
    As for the running time, each shortest path computation takes time
    $O(n^2 \log n )$, in the continuous (resp. discrete) case. The
    resulting disk has complexity $O\pth{n^2}$, and computing a
    shortest path in it takes $O\pth{n^4 \log n}$ time in the
    continuous case. In the discrete case, computing the paths can be
    done by collapsing $\CLeft$ to a vertex, forbid the shortest path
    tree edges, and run a shortest path algorithm in the remaining
    graph. Clearly, this takes $O(n \log n)$ time.%
    \InDCGVer{\qed} %
\end{proof}

\begin{lemma}%
    \lemlab{saturated:compute}%
    Let $\Disk$ be a triangulated topological disk with $n$ faces, and
    $\CTop$ and $\CBot$ be two internally disjoint walks on $\Disk$'s
    boundary.  Given $\tall > 0$, one can compute a \tall-saturated
    homotopy class, in $O(n^5\log n)$ (resp. $O(n^2\log n)$) time, in
    the continuous (resp.  discrete) case.
\end{lemma}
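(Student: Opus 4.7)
The plan is a single-pass greedy algorithm that maintains a set $X \subseteq \TVertices^\tall$ of tall vertices we are committing to keep on the left, together with the shortest $\vTopL$-to-$\vBotL$ path whose class has $X$ on its left, computed via \lemref{try:vertex}. The key observation is monotonicity: for $X \subseteq Y$, the shortest path whose class has $Y$ on its left is at least as long as the one whose class has $X$ on its left, because every candidate for the former problem is also a candidate for the latter.

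First I would initialize $X = \emptyset$ and invoke \lemref{try:vertex}; if the path returned already has length exceeding $\tall$, then no class with short left geodesic exists. Otherwise, I fix an arbitrary ordering $v_1, \ldots, v_k$ of the $\tall$-tall vertices (where $k = O(n)$) and process them in a single pass. For each $v_i$, I invoke \lemref{try:vertex} on $X \cup \{v_i\}$; if the returned path has length at most $\tall$, I update $X \leftarrow X \cup \{v_i\}$ and replace the stored path, otherwise I discard $v_i$ permanently. At the end I output the homotopy class of the stored path.

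For correctness, let $\iClass$ be the final class; by construction $\lenX{\leftC{\iClass}} \leq \tall$ and $X \subseteq \TLeft{\iClass}$. Suppose for contradiction that some class $\iClass'$ witnesses $\tall$-extendability, so $\TLeft{\iClass} \subsetneq \TLeft{\iClass'}$ and $\lenX{\leftC{\iClass'}} \leq \tall$, and fix $v \in \TLeft{\iClass'} \setminus \TLeft{\iClass}$. When the algorithm considered $v$, let $X_0$ denote the value of $X$ at that moment; then $X_0 \cup \{v\} \subseteq \TLeft{\iClass} \cup \{v\} \subseteq \TLeft{\iClass'}$. Monotonicity forces \lemref{try:vertex} applied to $X_0 \cup \{v\}$ to return a path of length at most $\lenX{\leftC{\iClass'}} \leq \tall$, so the algorithm would have actually added $v$ to $X$, contradicting $v \notin \TLeft{\iClass}$. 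Hence $\iClass$ is $\tall$-saturated.

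The running time follows immediately: there are $O(n)$ invocations of \lemref{try:vertex}, each costing $O(n^4 \log n)$ in the continuous setting and $O(n \log n)$ in the discrete setting, yielding the claimed $O(n^5 \log n)$ and $O(n^2 \log n)$ bounds. The main delicacy is justifying that a single pass suffices rather than iterating until a fixed point; this rests precisely on the monotonicity property, together with the observation that a vertex rejected at an earlier moment would be even more out of reach as $X$ grows.
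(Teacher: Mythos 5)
Your proposal is correct and follows essentially the same single-pass greedy scheme as the paper's proof, including the same rejection-is-permanent observation based on monotonicity of the constrained shortest path under set inclusion; your saturation argument is in fact spelled out in more detail than the paper's. The only cosmetic difference is that the paper updates $X$ to the full left tall set of the returned path's class rather than to $X \cup \{v_i\}$, which does not affect correctness or the running-time bound.
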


\begin{proof}
    Start with an empty initial set $X = \emptyset$. At each
    iteration, try adding one of the \tall-tall vertices
    $v \in \TVertices^\tall$ of $\Disk$ to $X$, by using
    \lemref{try:vertex}. The algorithm of \lemref{try:vertex} outputs
    a path $\BPath$ between $\vTopL$ and $\vBotL$ and a set
    $X' \supset X\cup \{v\}$.
    
    If $\BPath$ is of length at most $\tall$ update $X$ to be the new
    set $X'$, otherwise reject $v$.  If $v$ is rejected then the left
    geodesic of any superset of $X\cup \{v\}$ has length larger than
    $\tall$.  It follows that $v$ cannot be accepted in any later
    iteration, so we do not need to reinspect it.  Clearly, after
    trying all the vertices of $\TVertices^\tall$, the set $X$ defines
    the desired saturated class, which can be computed by using the
    algorithm of \lemref{try:vertex}.%
    \InDCGVer{\qed} %
\end{proof}

\begin{lemma}%
    \lemlab{decider}%
    Let $\Disk$ be a triangulated topological disk with $n$ faces, and
    $\CTop$ and $\CBot$ be two internally disjoint walks on the
    boundary of $\Disk$. Given a real number $x > 0$, one can either:
    \begin{compactenum}[\quad(A)]
        \item Compute a homotopy from $\CTop$ to $\CBot$ of width
        $O\pth{ x \log n }$, or
        \item Return that $x < \distFrH{\CTop}{\CBot}$.
    \end{compactenum}
    The running time of this procedure is $O\pth{n^5 \log n }$
    (resp. $O\pth{n^2\log n}$) in the continuous (resp. discrete)
    case.
\end{lemma}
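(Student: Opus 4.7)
The plan is to set $\tall = x$ and use $x$ as a guess for $\distFrH{\CTop}{\CBot}$: the algorithm will either produce a homotopy that certifies the guess from above (up to $O(\log n)$), or fail to produce one and thereby certify that $x < \distFrH{\CTop}{\CBot}$. The test is one-sided, as promised by the lemma.

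First, I would mark the set $\TVertices^\tall$ of $\tall$-tall vertices (in the discrete case, I would first subdivide every edge of length $>2\tall$, introducing a new $\tall$-tall middle vertex, so that ``tall points'' are exactly tall vertices). Next, I would run the greedy saturation procedure of \lemref{saturated:compute} with threshold $\tall=x$, which either returns a $\tall$-saturated class $\iClass$ with $\lenX{\leftC{\iClass}}\leq x$, or reports no such class exists; in the latter case I stop and output that $x<\distFrH{\CTop}{\CBot}$, since the optimum homotopy class $\iClassOpt$ would satisfy $\lenX{\leftC{\iClassOpt}} \leq \distFrH{\CTop}{\CBot}$ and would certainly be a valid candidate class for the greedy procedure. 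I would then compute $\rightC{\iClass}$ and check that $\lenX{\rightC{\iClass}}\leq 4x$; if this fails I again report $x<\distFrH{\CTop}{\CBot}$, justified by \lemref{saturated:prop}, which guarantees the $4\tall$ bound precisely when $\tall\geq\distFrH{\CTop}{\CBot}$.

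Assuming both geodesics are short, I would cut $\Disk$ along $\leftC{\iClass}$ and $\rightC{\iClass}$ to obtain the subdisk $\Disk'$ with boundary $\CTop \cdot \rightC{\iClass} \cdot \CBot \cdot \leftC{\iClass}$. By construction, $\Disk'$ contains no $\tall$-tall vertex, so \lemref{short:boundaries:c}, applied with the parameter $x$ replaced by $4x$, certifies that every point of $\Disk'$ lies within distance $O(x)$ of both $\CTop$ and $\CBot$ inside $\Disk'$. This is exactly the hypothesis needed to invoke \lemref{no:tall:h:f:d} on $\Disk'$ with the curves $\CTop$ and $\CBot$, producing a reparametrization, i.e.\ a continuous $(\CTop,\CBot)$-leash function, of width $O(x\log n)$. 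The starting and ending leashes of this homotopy are $\leftC{\iClass}$ and $\rightC{\iClass}$, both of length $O(x)$, so the overall width remains $O(x\log n)$, as desired.

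The main obstacle is to justify the correctness of the failure branch: namely, that if either the saturation procedure cannot produce a class with short left geodesic, or the resulting right geodesic is longer than $4x$, then necessarily $x < \distFrH{\CTop}{\CBot}$. This follows from the contrapositive of \lemref{saturated:prop} together with the observation that the optimum homotopy class has left and right geodesics of length at most $\distFrH{\CTop}{\CBot}$ and is a valid candidate (or is extended by one) in the greedy saturation process. For the running time, the bottleneck is the saturation step, which costs $O(n^5\log n)$ in the continuous case and $O(n^2\log n)$ in the discrete case by \lemref{saturated:compute}; the invocation of \lemref{no:tall:h:f:d} costs $O(n^4\log n)$ (resp.\ $O(n^2)$), which is subsumed. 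Thus the total time matches the claimed bounds.
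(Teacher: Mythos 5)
Your proposal follows essentially the same route as the paper: compute an $x$-saturated class via \lemref{saturated:compute}, bound its right geodesic by $4x$ via \lemref{saturated:prop}, cut out $\Disk'$, and apply \lemref{short:boundaries:c} followed by \lemref{no:tall:h:f:d}, with the failure branches justified by contrapositives. The only (easily repaired) omission is that passing your intermediate checks does not by itself guarantee the hypotheses of \lemref{short:boundaries:c} when $x<\distFrH{\CTop}{\CBot}$, so, as in the paper, one should also measure the width of the homotopy actually produced against the explicit threshold $Z=O(x\log n)$ and report failure if it is exceeded.
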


\begin{proof}
    Assume $x \geq \delta_H = \distFrH{\CTop}{\CBot}$, and we use $x$
    as a guess for this value $\delta_H$.  Using
    \lemref{saturated:compute}, one can compute a $x$-saturated
    homotopy class, $\iClass$.  \lemref{saturated:prop} implies that
    both $\leftC{\iClass}$ and $\rightC{\iClass}$ are at most $4x$.
    Let $\Disk' \subseteq \Disk$ be the disk with boundary   
    \begin{math}
        \CTop \concat \leftC{\iClass} \concat \CBot \concat
        \rightC{\iClass}.
    \end{math}
    By \lemref{short:boundaries:c}, all the vertices in $\Disk'$ are
    in distance $O\pth{x}$ from $\CTop$ and $\CBot$ (this holds for
    all points in $\Disk'$ in the continuous case). That is, there are
    no $O(x)$-tall vertices in $\Disk'$.  Finally,
    \lemref{no:tall:h:f:d} implies that a continuous leash sequence of
    height $\leq Z = O\pth{x \log n}$ between $\CTop$ and $B$, inside
    $\Disk'$, can be computed.
    
    Thus, if $x$ is larger than $\distFrH{\CTop}{\CBot}$ then this
    algorithm returns the desired approximation; that is, a homotopy
    of width $\leq Z$. If the width of the generated homotopy is
    however larger than $Z$ (a value that can be computed directly
    from $x$), then the value of $x$ was too small. That is, the
    algorithm fails in this case only if $x <
    \distFrH{\CTop}{\CBot}$.
    In the case of such failure, return that $x$ is too small.%
    \InDCGVer{\qed} %
\end{proof}

\subsection{A strongly polynomial approximation algorithm}%

For a vertex $v \in \Vertices{\Disk}$, define $\CostL{v}$ to be the
length of the shortest path between $\vTopL$ and $\vBotL$ that has $v$
on its left side.  Similarly, for a set of vertices
$X \subseteq \Vertices{\Disk}$, let $\CostSetL{X}$ be the length of
the shortest path between $\vTopL$ and $\vBotL$ that has $X$ on its
left side. For a specific $v$ or $X$, one can compute $\CostL{v}$ and
$\CostSetL{X}$ by invoking the algorithm of \lemref{try:vertex} once.

\subsubsection{The algorithm}

\begin{compactenum}[(I)]
    \item \textbf{Identifying the tall vertices.} %
    Observe that using the algorithm of \lemref{decider}, we can
    decide given a candidate value $\delta_H$ for
    $\distFrH{\CTop}{\CBot}$ if it is too large, too small, or leads
    to the desired approximation.  Indeed, if the algorithm returns an
    approximation of values $O\pth{\delta_H \log n}$ but fails for
    $\delta_H/2$, we know it is the desired approximation.
    
    For each vertex $v \in \Vertices{\Disk}$ let $\alpha_v$ be the
    maximum distance of $v$ to either $\CTop$ or $\CBot$.  Note that
    $v$ cannot be $a$-tall for any $a\geq \alpha_v$.  Sort these
    values, and using binary search, compute the vertex $w$, with the
    minimum value $\alpha_w$, such that \lemref{decider} returns a
    parameterization with homotopic \Frechet distance
    $O( \alpha_w \log n)$. If the algorithm of \lemref{decider}
    returns that $\alpha_w/n$ is too small of a guess, then
    $[\alpha_w/n,\alpha_w \log n]$ contains $\delta_H$.  In this case,
    we can use binary search to find an interval $[\gamma/2, \gamma]$
    that contains $\delta_H$ and use \lemref{decider} to obtain the
    desired approximation.  Similarly, if $v$ is the tallest vertex
    shorter than $w$, then we can assume that $\alpha_v n$ is too
    small of a guess, otherwise we are again done as
    $[\alpha_v,\alpha_v n]$ contains $\delta_H$.
    
    Therefore, in the following, we know that the desired distance
    $\delta_H$ lies in the interval $[x,y]$ where $x=\alpha_v n$ and
    $y=\alpha_w/n$, and for every vertex $u$ of $\Disk$ it holds that
    (i) $\alpha_u \leq x /n$, or (ii) $\alpha_u \geq y n$.  Naturally,
    we consider all the vertices that satisfy (ii) as tall vertices,
    by setting $\tau = 2x/ n$. In the following, let $\TVertices$
    denote the set of these \tall-tall vertices.
    
    \item \textbf{Computing candidate homotopy classes.} %
    Start with $X_0 = \emptyset$. In the $i$\th iteration, the
    algorithm computes the vertex
    $v_i \in\TVertices \setminus X_{i-1}$, such that
    $\CostSetL{X_{i-1} \cup \brc{v_i}}$ is minimized, and set
    $X_{i} = X_{i-1} \cup \brc{v_i}$. Let $\iClass_i$ be the homotopy
    class having $X_i$ on its left side, and
    $\TVertices \setminus X_i$ on its right side.
    
    \item \textbf{Binary search over candidates.}  We approximate the
    homotopic \Frechet width of each one of the classes
    $\iClass_1, \ldots, \iClass_n$. Let $x$ be the minimum homotopic
    \Frechet width computed among these $n$ candidates.
    
    Next, do a binary search in the interval $[x/n^2, x]$ for the
    homotopic \Frechet distance. We return the smallest width
    reparametrization computed as the desired approximation.
\end{compactenum}

\begin{figure}[t]
    \centerline{\IncludeGraphics{figs/union}}%
    \caption{}
    \figlab{union}
\end{figure}

\subsubsection{Analysis}

\begin{lemma}%
    \lemlab{union}%
    \begin{inparaenum}[(i)]
        \item For any $X' \subseteq X \subseteq \Vertices{\Disk}$, we
        have $\CostSetL{ X' } \leq \CostSetL{X}$.
        
        \item \itemlab{monotone:single} For any
        $x \in X \subseteq \Vertices{\Disk}$, we have
        $\CostL{ x } \leq \CostSetL{X}$.
        
        \item \itemlab{union} For $X, Y \subseteq \Vertices{\Disk}$,
        we have that
        $\CostSetL{ X \cup Y} \leq \CostSetL{ X } + \CostSetL{Y}$.
    \end{inparaenum}
\end{lemma}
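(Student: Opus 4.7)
Parts (i) and (ii) I would dispatch by unpacking definitions. For (i), any $(\vTopL,\vBotL)$-path that realizes $\CostSetL{X}$ already has every subset $X'\subseteq X$ on its left, so it is a valid competitor for $\CostSetL{X'}$; this immediately gives $\CostSetL{X'} \leq \CostSetL{X}$. Part (ii) is the special case $X' = \{x\}$, since $\CostL{x} = \CostSetL{\brc{x}}$.

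The real work is (iii). The plan is to exhibit, out of shortest paths $\PathA_X$ and $\PathA_Y$ witnessing $\CostSetL{X}$ and $\CostSetL{Y}$, a single $(\vTopL,\vBotL)$-walk $\PathA$ that has $X \cup Y$ on its left and satisfies $\lenX{\PathA} \leq \lenX{\PathA_X} + \lenX{\PathA_Y}$. The construction I would use is a ``left-envelope'' of the two paths: let $L_X \subset \Disk$ be the open component of $\Disk \setminus \PathA_X$ containing $\CLeft$, and define $L_Y$ analogously. Since both regions contain $\CLeft$, their union $L = L_X \cup L_Y$ is open, connected, and contains every vertex of $X \cup Y$ (because $X \subset L_X$ and $Y \subset L_Y$). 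I would then take $\PathA$ to be the boundary arc of $L$ that runs from $\vTopL$ to $\vBotL$ inside $\Disk$, i.e.~avoiding $\CTop \cup \CRight \cup \CBot$ (this is exactly the configuration illustrated in \figref{union}).

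Once $\PathA$ is defined, I would verify three things. (a) $\PathA$ is a $(\vTopL,\vBotL)$-walk contained in $\PathA_X \cup \PathA_Y$. (b) Each edge of $\PathA_X$ (resp.~$\PathA_Y$) is traversed at most once by $\PathA$, so $\lenX{\PathA} \leq \lenX{\PathA_X} + \lenX{\PathA_Y} = \CostSetL{X} + \CostSetL{Y}$. (c) Since $X \cup Y \subseteq L$ and $\PathA$ bounds $L$ on the $\CRight$-side, $X \cup Y$ lies to the left of $\PathA$. If $\PathA$ happens not to be simple, I would shortcut any repeated vertex; this only decreases the length and preserves the left/right decomposition, producing a legitimate competitor for $\CostSetL{X \cup Y}$ and closing the argument.

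The hard part, I expect, is step (b) together with the definition of $L$ when $\PathA_X$ and $\PathA_Y$ are not in general position: if the two shortest paths share sub-arcs or meet tangentially rather than crossing transversally, ``the component of $\Disk \setminus \PathA_X$ containing $\CLeft$'' is well-defined but ``the boundary arc of $L$'' requires a consistent tie-breaking rule at the overlaps so that each shared edge is counted only once. I would handle this either by an infinitesimal perturbation of edge weights (permissible thanks to \asmref{non:degenerate}) in the continuous case, or by fixing a combinatorial left-right orientation of shared segments in the discrete case, which reduces the situation to the transverse-crossing case where planarity makes (b) immediate.
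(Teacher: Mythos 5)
Your proposal is correct and takes essentially the same route as the paper: parts (i) and (ii) by observing that the minimization for a subset is over a larger set of competitors, and part (iii) by forming the union of the two disks enclosed by $\BPath_X \concat \CLeft$ and $\BPath_Y \concat \CLeft$ and taking the outer boundary arc, which lies in $\BPath_X \cup \BPath_Y$ and hence has length at most $\CostSetL{X} + \CostSetL{Y}$. The additional care you devote to overlapping sub-arcs and non-transverse intersections is detail the paper elides, not a different argument.
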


\begin{proof}
    (i) Observe that the path realizing $\CostSetL{X'}$ is less
    constrained than the path realizing $\CostSetL{X}$, therefore it
    might only be shorter.

    (ii) Follows immediately from (i).
    
    \smallskip
    
    (iii) Consider the disk $\Disk$ and the two paths $\BPath_X$ and
    $\BPath_Y$ realizing $\CostSetL{ X }$ and $\CostSetL{Y}$,
    respectively.  The close curves $\BPath_x \concat \CLeft$ and
    $\BPath_Y \concat \CLeft$ encloses two topological disks. Consider
    the union of these two disks, and its connected outer boundary
    $\BPath_{X \cup Y} \cup \CLeft$. Clearly, $\BPath_{X \cup Y}$
    connects $\vTopL$ and $\vBotL$, and it has all the points of $X$
    and $Y$ on one side of it, and finally
    $\lenX{\BPath_{X \cup Y}} \leq \lenX{\BPath_X} + \lenX{\BPath_Y}$
    as $\BPath_{X \cup Y} \subseteq \BPath_X \cup \BPath_Y$. See 
    \figref{union}.
\end{proof}

\begin{lemma}%
    \lemlab{range}%
    The cheapest homotopic \Frechet parameterization computed among
    $\iClass_1, \ldots, \iClass_n$ has width
    $O \pth{ \distFrH{\CTop}{\CBot} n \log n }$.
\end{lemma}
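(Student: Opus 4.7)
My plan is to identify an index $k \leq n$ for which the greedy class $\iClass_k$ is simultaneously close to the optimal class and has both of its bounding geodesics short, and then invoke the earlier machinery to certify that its homotopic \Frechet{} width is $O(\delta_H n \log n)$, where $\delta_H = \distFrH{\CTop}{\CBot}$. Let $\iClassOpt$ be the homotopy class of the optimal leashes and set $X^* = \TLeft{\iClassOpt}$. Since $\leftC{\iClassOpt}$ has $X^*$ on its left, $\CostSetL{X^*} \leq \lenX{\leftC{\iClassOpt}} \leq \delta_H$, and consequently $\CostL{v} \leq \delta_H$ for every $v \in X^*$ by \lemref{union}\itemref{monotone:single}.

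The first key step is a greedy potential argument showing that the greedy cost grows by at most $\delta_H$ per iteration until $X^*$ is fully absorbed. Concretely, as long as $X^* \not\subseteq X_{i-1}$, pick any $v \in X^* \setminus X_{i-1}$; \lemref{union}\itemref{union} gives $\CostSetL{X_{i-1} \cup \{v\}} \leq \CostSetL{X_{i-1}} + \CostL{v} \leq \CostSetL{X_{i-1}} + \delta_H$, and the greedy rule forces $\CostSetL{X_i}$ to be no larger. Letting $k$ be the smallest index with $X^* \subseteq X_k$, telescoping and $k \leq |\TVertices| \leq n$ yield $\CostSetL{X_k} = O(n \delta_H)$, and hence $\lenX{\leftC{\iClass_k}} = O(n \delta_H)$.

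The second step, which I expect to be the main obstacle, is to bound $\lenX{\rightC{\iClass_k}}$ without $\iClass_k$ actually being saturated. My plan is to revisit the proof of \lemref{saturated:prop} and verify that it only uses the hypothesis that $\TLeft{\iClass}$ is not a proper subset of $\TLeft{\iClassOpt}$ in order to produce a crossing between $\leftC{\iClass}$ and $\rightC{\iClassOpt}$; the ensuing surgery then gives a curve homotopic to $\rightC{\iClass}$ of length at most $2\lenX{\leftC{\iClass}} + \lenX{\leftC{\iClassOpt}} + \lenX{\rightC{\iClassOpt}}$. Since $X^* \subseteq X_k$, that non-proper-subset hypothesis holds for $\iClass_k$, so substituting our length bounds yields $\lenX{\rightC{\iClass_k}} \leq 2\lenX{\leftC{\iClass_k}} + 2\delta_H = O(n \delta_H)$.

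Finally, with both boundary geodesics of $\iClass_k$ of length $O(n \delta_H)$, I would apply \lemref{short:boundaries:c} with $x = O(n \delta_H)$ to the disk $\Disk_k$ bounded by $\CTop \concat \rightC{\iClass_k} \concat \CBot \concat \leftC{\iClass_k}$, concluding that every point of $\Disk_k$ lies within $O(n \delta_H)$ of both $\CTop$ and $\CBot$. Invoking \lemref{no:tall:h:f:d} inside $\Disk_k$ then produces a reparametrization of width $O(n \delta_H \log n)$ certifying $\iClass_k$, and since the algorithm returns the cheapest width among $\iClass_1, \ldots, \iClass_n$, this upper bound transfers to the output, proving the lemma.
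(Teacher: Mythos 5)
Your proposal is correct and follows essentially the same route as the paper: the same greedy telescoping bound $\CostSetL{X_i}\leq i\cdot\CostSetL{Y}\leq n\,\delta_H$, the same use of \lemref{saturated:prop} to control the right geodesic, and the same final appeal to \lemref{short:boundaries:c} and \lemref{no:tall:h:f:d}. Your only deviation is that, where the paper simply asserts that $X_i$ is $\tall$-saturated for $\tall=\CostSetL{X_i}$, you instead observe that the proof of \lemref{saturated:prop} only needs $\TLeft{\iClass}$ to not be a proper subset of $\TLeft{\iClassOpt}$ (which holds since $Y\subseteq X_i$) --- a slightly more careful reading that reaches the same bound.
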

\begin{proof}
    Consider the set $Y$ that is the subset of tall vertices on the
    left side of the optimal solution. Let $i$ be the first index such
    that $Y \subseteq X_i$ and $Y \not \subseteq X_{i-1}$. Let $v$ be
    any vertex in $Y \setminus X_{i-1}$.  By construction, we have
    that $\CostSetL{ X_{i} } \leq \CostSetL{ X_{i-1} \cup \brc{v}}$,
    and furthermore, for all $j \leq i$, we have that
    $\CostSetL{ X_{j} } \leq \CostSetL{ X_{j-1} \cup \brc{v}}$, by the
    greediness in the construction of $X_1, \ldots, X_i$.  Now, we
    have
    \begin{align*}
        \CostSetL{ X_{i} }%
        &\leq%
        \CostSetL{ X_{i-1} \cup \brc{v}}%
        & \text{(by construction of $X_i$)}%
        \\&\leq%
        \CostSetL{ X_{i-1} } + \CostL{v} %
        & \text{(by \lemref{union} \itemref{union})}%
        \\%
        &\leq %
        \CostSetL{ X_{i-1} } + \CostSetL{Y}%
        &%
        \text{(by \lemref{union} \itemref{monotone:single})}%
        \\%
        &%
        \leq%
        \pth{\CostSetL{ X_{i-2} } + \CostSetL{Y}} + \CostSetL{Y}%
        &%
        \text{(applying same argument to $X_{i-1}$)}
        \\
        &%
        =%
        \CostSetL{ X_{i-2} } + 2\CostSetL{Y}%
        \\&%
        \leq \cdots \leq%
        i\CostSetL{Y}%
        \leq%
        n \CostSetL{Y}.%
    \end{align*}
    Now, setting $\tall = \CostSetL{ X_{i} }$, it follows that $X_i$
    is $\tall$-saturated. Applying \lemref{saturated:prop}, implies
    that $\lenX{\rightC{\iClass_i}} \leq 4\tall$.  Observe, that the
    disk defined by $\CTop$, $\leftC{\iClass_i}$, $\CBot$,
    $\rightC{\iClass_i}$ cannot contain any tall vertex (by
    construction).
    
    Now, plugging this into \lemref{no:tall:h:f:d} implies the
    homotopic \Frechet width of $\iClass_i$ (starting with
    $\leftC{\iClass_i}$ and ending up with $\rightC{\iClass_i}$, so
    $\Disk$ in \lemref{no:tall:h:f:d} is bounded by
    $\CTop,\CBot,\leftC{\iClass_i},\rightC{\iClass_i}$) is
    $O\pth{ \tall \log n}$, which implies the claim since
    $\CostSetL{X_i}\leq n\CostSetL{Y}\leq n\distFrH{\CTop}{\CBot}$.
    \InDCGVer{\qed} %
\end{proof}

\subsubsection{The algorithm}

\begin{theorem}%
    \thmlab{h:f:d:main}%
    Let $\Disk$ be a triangulated topological disk with $n$ faces, and
    $\CTop$ and $\CBot$ be two internally disjoint walks on the
    boundary of $\Disk$. One can compute a homotopic \Frechet
    parameterization of $\CTop$ and $\CBot$ of width
    $O\pth{ \distFrH{\CTop}{\CBot} \log n}$, where
    $\distFrH{\CTop}{\CBot}$ is the homotopic \Frechet distance
    between $\CTop$ and $\CBot$ in $\Disk$.
    
    The running time of this procedure is $O\pth{n^6 \log n }$
    (resp. $O\pth{n^3\log n}$) in the continuous (resp. discrete)
    case.
\end{theorem}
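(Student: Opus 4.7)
My plan is to assemble the three ingredients already developed -- the decision procedure of \lemref{decider}, the greedy candidate construction, and the quality guarantee of \lemref{range} -- into a two-phase search that first localizes $\distFrH{\CTop}{\CBot}$ to within a polynomial factor, and then tightens this to the claimed $O(\log n)$ factor by binary search.

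First I would carry out the vertex-height preprocessing described in part (I) of the algorithm. For each $v \in \Vertices{\Disk}$ compute $\alpha_v = \max(\distChar(v,\CTop), \distChar(v,\CBot))$, sort these values, and binary search over them using \lemref{decider}: each call either produces a homotopy of width $O(\alpha \log n)$ or certifies $\alpha < \distFrH{\CTop}{\CBot}$. This locates an interval $[x,y]$ with $y/x = \mathrm{poly}(n)$ that contains $\delta_H = \distFrH{\CTop}{\CBot}$, and (equally importantly) partitions the vertices into those of height $\leq x/n$ and those of height $\geq yn$; the latter can safely be treated as $\tall$-tall for $\tall = 2x/n$, because any optimal leash must avoid them while the short ones are irrelevant to the approximation factor.

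Next I would run the greedy phase: starting from $X_0 = \emptyset$, at step $i$ pick the tall vertex $v_i$ that minimizes $\CostSetL{X_{i-1} \cup \{v_i\}}$ (each such cost evaluated via \lemref{try:vertex}), forming a sequence of candidate homotopy classes $\iClass_1,\dots,\iClass_n$. By \lemref{range}, at least one $\iClass_i$ has left/right geodesics of length $O(\delta_H \cdot n)$, and then \lemref{saturated:prop} and \lemref{short:boundaries:c} guarantee that the corresponding subdisk has no $O(\delta_H n)$-tall vertices, so \lemref{no:tall:h:f:d} produces a homotopy of width $O(\delta_H \, n \log n)$. Evaluating each $\iClass_i$ with \lemref{decider} and taking the minimum yields a value $x$ lying in the interval $[\delta_H, c\,\delta_H n \log n]$ for some constant $c$.

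Finally, I would perform a binary search over $[x/n^2, x]$ using \lemref{decider} as the oracle: this narrows down to a value $\widetilde x$ with $\delta_H \leq \widetilde x \leq 2\delta_H$, so the accompanying reparametrization has width $O(\widetilde x \log n) = O(\delta_H \log n)$ as claimed. For the running time, each call to \lemref{decider} costs $O(n^5 \log n)$ (resp.\ $O(n^2 \log n)$); the initial and final binary searches together with the evaluation of the $n$ candidate classes add only a multiplicative factor $O(n \log n)$, yielding the stated bounds $O(n^6 \log n)$ and $O(n^3 \log n)$. The main subtlety is justifying that after the preprocessing step we may legitimately freeze the tall/short classification at threshold $\tau = 2x/n$: this requires observing that any vertex with $\alpha_v \leq x/n$ contributes length at most $O(\delta_H/n)$ to any shortest geodesic whose homotopy class we consider, so it cannot affect either the saturation argument of \lemref{saturated:prop} or the distance bound of \lemref{short:boundaries:c} by more than lower-order terms, and hence the $O(\log n)$ approximation guarantee of \lemref{no:tall:h:f:d} is preserved.
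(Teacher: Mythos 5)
Your proposal is correct and follows essentially the same route as the paper: the three-phase algorithm (binary search over the sorted vertex heights $\alpha_v$ via \lemref{decider} to isolate the tall vertices, the greedy construction of the candidate classes $\iClass_1,\dots,\iClass_n$ analyzed via \lemref{range}, and the final binary search over $[x/n^2,x]$) is exactly the algorithm of the preceding subsection, and your correctness and running-time arguments invoke the same lemmas in the same way. The only cosmetic difference is in the running-time accounting: the dominant term comes from the $O(n^2)$ calls to \lemref{try:vertex} inside the greedy phase rather than from a multiplicative $O(n\log n)$ overhead on the decider, but both tallies land on the stated bounds.
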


\begin{proof}
    Consider the algorithm described in the previous subsection.  For
    correctness, observe that the algorithm either found the desired
    value, or identified correctly the tall vertices. Next, by
    \lemref{range}, the range the algorithm searches over contains the
    desired value.
    
    The algorithm requires $O(n^2)$ calls to \lemref{try:vertex},
    which takes $O\pth{n^6 \log n}$ (resp. $O\pth{n^3 \log n}$) time
    in the continuous (resp. discrete) case.  Then the algorithm
    requires the method of \lemref{no:tall:h:f:d} to compute the
    homotopic \Frechet distance of the classes
    $\iClass_1, \ldots, \iClass_n$. The algorithm also performs
    $O( \log n)$ calls to the algorithm of \lemref{decider}.
    \InDCGVer{\qed} %
\end{proof}

\section{Conclusions}
\seclab{conclusions}

We presented a $O( \log n)$ approximation algorithm for approximating
the homotopy height and the homotopic \Frechet distance between curves
on piecewise linear surfaces. It seems quite believable that the
approximation quality can be further improved, and we leave this as
the main open problem. Since our algorithm works both for the
continuous and discrete cases, it seems natural to conjecture that
this algorithm should also work for more general surfaces and metrics.

Another problem for further research is to solve our main problem
without the restriction that the two curves lie on the boundary of the
disk.

\paragraph{Connection to planar separator.}
Our basic algorithm (\thmref{discrete}) is inspired to some extent by
the proof of the planar separator theorem \cite{lt-stpg-79}. In
particular, our result implies sufficient conditions to having a
separator that can continuously deform from enclosing nothing in a
planar graph, till it encloses the whole graph, without being too long
at any point in time. As a result, our work can be viewed as extending
the planar separator theorem.  A natural open problem is to extend our
work to graphs with higher genus.

\paragraph{Acknowledgments}
The authors thank Jeff Erickson and Gary Miller for their comments and
suggestions. The authors also thank the anonymous referees for their
detailed and insightful reviews.


\InDCGVer{%
\newcommand{\etalchar}[1]{$^{#1}$}
 \providecommand{\CNFX}[1]{ {\em{\textrm{(#1)}}}}
  \providecommand{\tildegen}{{\protect\raisebox{-0.1cm}{\symbol{'176}\hspace{-0.03cm}}}}
  \providecommand{\SarielWWWPapersAddr}{http://sarielhp.org/p/}
  \providecommand{\SarielWWWPapers}{http://sarielhp.org/p/}
  \providecommand{\urlSarielPaper}[1]{\href{\SarielWWWPapersAddr/#1}{\SarielWWWPapers{}/#1}}
  \providecommand{\Badoiu}{B\u{a}doiu}
  \providecommand{\Barany}{B{\'a}r{\'a}ny}
  \providecommand{\Bronimman}{Br{\"o}nnimann}  \providecommand{\Erdos}{Erd{\H
  o}s}  \providecommand{\Gartner}{G{\"a}rtner}
  \providecommand{\Matousek}{Matou{\v s}ek}
  \providecommand{\Merigot}{M{\'{}e}rigot}
  \providecommand{\Hastad}{H\r{a}stad\xspace}  \providecommand{\CNFSoCG}{
  {\em{\textrm{(SoCG)}}}} \providecommand{\CNFCCCG}{\CNFX{CCCG}}
  \providecommand{\CNFFOCS}{\CNFX{FOCS}}
  \providecommand{\CNFSODA}{\CNFX{SODA}}  \providecommand{\CNFSTOC}{
  {\em{\textrm{(STOC)}}}}  \providecommand{\CNFBROADNETS}{\CNFX{BROADNETS}}
  \providecommand{\CNFESA}{\CNFX{ESA}}
  \providecommand{\CNFFSTTCS}{\CNFX{FSTTCS}}
  \providecommand{\CNFIJCAI}{\CNFX{IJCAI}}
  \providecommand{\CNFINFOCOM}{\CNFX{INFOCOM}}
  \providecommand{\CNFIPCO}{\CNFX{IPCO}}
  \providecommand{\CNFISAAC}{\CNFX{ISAAC}}
  \providecommand{\CNFLICS}{\CNFX{LICS}}
  \providecommand{\CNFPODS}{\CNFX{PODS}}
  \providecommand{\CNFSWAT}{\CNFX{SWAT}}
  \providecommand{\CNFWADS}{\CNFX{WADS}}

}%
\InNotDCGVer{%
\newcommand{\etalchar}[1]{$^{#1}$}
 \providecommand{\CNFX}[1]{ {\em{\textrm{(#1)}}}}
  \providecommand{\tildegen}{{\protect\raisebox{-0.1cm}{\symbol{'176}\hspace{-0.03cm}}}}
  \providecommand{\SarielWWWPapersAddr}{http://sarielhp.org/p/}
  \providecommand{\SarielWWWPapers}{http://sarielhp.org/p/}
  \providecommand{\urlSarielPaper}[1]{\href{\SarielWWWPapersAddr/#1}{\SarielWWWPapers{}/#1}}
  \providecommand{\Badoiu}{B\u{a}doiu}
  \providecommand{\Barany}{B{\'a}r{\'a}ny}
  \providecommand{\Bronimman}{Br{\"o}nnimann}  \providecommand{\Erdos}{Erd{\H
  o}s}  \providecommand{\Gartner}{G{\"a}rtner}
  \providecommand{\Matousek}{Matou{\v s}ek}
  \providecommand{\Merigot}{M{\'{}e}rigot}
  \providecommand{\Hastad}{H\r{a}stad\xspace}  \providecommand{\CNFSoCG}{
  {\em{\textrm{(SoCG)}}}} \providecommand{\CNFCCCG}{\CNFX{CCCG}}
  \providecommand{\CNFFOCS}{\CNFX{FOCS}}
  \providecommand{\CNFSODA}{\CNFX{SODA}}  \providecommand{\CNFSTOC}{
  {\em{\textrm{(STOC)}}}}  \providecommand{\CNFBROADNETS}{\CNFX{BROADNETS}}
  \providecommand{\CNFESA}{\CNFX{ESA}}
  \providecommand{\CNFFSTTCS}{\CNFX{FSTTCS}}
  \providecommand{\CNFIJCAI}{\CNFX{IJCAI}}
  \providecommand{\CNFINFOCOM}{\CNFX{INFOCOM}}
  \providecommand{\CNFIPCO}{\CNFX{IPCO}}
  \providecommand{\CNFISAAC}{\CNFX{ISAAC}}
  \providecommand{\CNFLICS}{\CNFX{LICS}}
  \providecommand{\CNFPODS}{\CNFX{PODS}}
  \providecommand{\CNFSWAT}{\CNFX{SWAT}}
  \providecommand{\CNFWADS}{\CNFX{WADS}}

}%

\appendix

\section{Sweeping a convex polytope, star %
   unfolding, and banana peels}
\apndlab{sweep:p}


Consider a convex polytope $\Polytope$ in three dimensions, a base
point $\bpnt$ on its boundary, and the problem of finding the minimum
length leash needed for a guard that walks on the polytope such that
the leash sweeps over all the points on the surface of the
polytope. Specifically, at any point in time, the guard maintains a
connection to the base point $\bpnt$ via a path (i.e., the leash)
connecting it to the base point, and the leash has to move
continuously as the guard moves around.

For a point $\pnt$ on the boundary of $\Polytope$, let $\distP{\pnt}$
be the geodesic distance from $\bpnt$ to $\pnt$ (i.e., the length
shortest path $\PathA$ that lies on the boundary of $\Polytope$
connecting $\bpnt$ to $\pnt$). Let $\MAxis$ be the \emphi{medial axis}
of this distance -- formally, a point $\pnt$ is on the medial axis if
there are two distinct shortest paths $\PathA$ and $\PathB$ from
$\bpnt$ to $\pnt$, such that
\begin{math}
    \lenX{\PathA} = \lenX{\PathB} = \distP{\pnt}.
\end{math}
It is known that $\MAxis$ is a tree in this case \cite{aaos-supa-97}.

Now, let $\PathSet$ be the union of all the shortest paths from
$\bpnt$ to the vertices of $\Polytope$ (we assume that no vertex is on
the medial axis, which holds under general position assumption). The
set $\PathSet$ is also a tree. Surprisingly, if you cut
$\bd{\Polytope}$ along $\PathSet$, then the resulting polygon can be
flattened on the plane. Maybe even more surprisingly, this even holds
if one cuts $\bd{\Polytope}$ along $\MAxis$. This is known as
\emph{star unfolding} of a polytope, see Agarwal
\etal~\cite{aaos-supa-97} for details.

Consider cutting $\bd{\Polytope}$ along both $\MAxis$ and
$\PathSet$. This breaks $\Polytope$ into a collection of polygons
$\PolySet$, where each polygon $\Polygon \in \PolySet$, has no
vertices of $\Polytope$ in its interior, and has $\bpnt$ as a
vertex. As such, one can unfold this $\Polygon$ into the plane.  Here,
the two paths of $\PathSet$ adjacent to $\bpnt$ that belong to the
boundary of $\Polygon$ maps in this unfolding to two straight
edges. The rest of the boundary $\Polygon$ is a closed connected
portion of $\MAxis$.  One can think about $\Polygon$ as being a
``leaf'' in a decomposition of $\bd{\Polytope}$ (i.e., think about the
sides of a banana peel). Here, shortest paths from $\bpnt$ to any
point on $\pnt \in \bd{\Polytope}$ that belongs to $\Polygon$ results
in a straight segment in (the planar embedded version of) the polygon
$\Polygon$. As such, the polygons of $\PolySet$ completely capture the
structure of all the shortest paths on $\bd{\Polytope}$ to $\bpnt$.

Back to the problem of sweeping $\bd \Polytope$. For the points of
$\MAxis_\Polygon = \MAxis \cap \bd \Polygon$, we can sweep the region
of $\bd \Polygon$ that corresponds to $\Polygon$, by walking along the
curve $\MAxis_\Polygon$ (say counterclockwise), and the leash being
the shortest path in $\bd \Polytope$ (that lies inside
$\Polygon$). This completely sweeps over the region of $\Polygon$. We
then continue this sweeping in the next polygon of $\PolySet$ adjacent
to $\Polygon$ around $\bpnt$. We continue in this fashion till all the
boundary of the polytope is swept over. Note, that the leash is moving
continuously, and during this motion, the maximum length of the leash
is the distance to the furthest point on $\bd \Polytope$ from
$\bpnt$. We conclude that this is an optimal solution and using the
known algorithms for computing shortest paths \cite{aaos-supa-97}.

Let us recap the algorithm: We compute the medial axis $\MAxis$ of
$\bpnt$ on $\bd \Polytope$, under the shortest path distance on the
boundary of the polytope. Next, the parameterize a point $\pnt(t)$ to
move continuously around the tree $\MAxis$ (i.e., traversing along
each edge twice, in both direction).  At each point in time, the leash
is connected via the shortest path to the base point $\bpnt$.

\begin{lemma}
    Given a convex polytope $\Polytope$ in three dimensions, and a
    base point $\pnt \in \bd \Polytope$, one can compute in polynomial
    time, a continuous motion of a point $\pnt(t)$, $t \in [0,1]$, and
    an associated leash $\leashX{t}$ connecting $\pnt(t)$ with
    $\bpnt$, such that
    \begin{inparaenum}[(i)]
        \item the leash sweeps over all the points of $\bd \Polytope$,
        \item the leash moves continuously,
        \item a point of $\bd \Polytope$ get swept over only once
        during this motion, and
        \item the maximum length of the leash is
        $\max_{ \pnt \in \bd \Polytope} \distP{\pnt}$, which is
        optimal.
    \end{inparaenum}
\end{lemma}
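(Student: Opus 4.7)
The plan is to verify that the algorithm already sketched in the paragraphs leading up to the lemma statement meets all four requirements, while filling in the geometric details. First I would construct the decomposition: compute the medial axis $\MAxis$ of $\bpnt$ (a tree by the result of Agarwal \etal \cite{aaos-supa-97}), compute the shortest path tree $\PathSet$ from $\bpnt$ to every vertex of $\Polytope$, and cut $\bd \Polytope$ along $\MAxis \cup \PathSet$. Under general position, this yields the collection $\PolySet$ of ``banana leaves,'' each of which unfolds isometrically to a planar polygon having $\bpnt$ as an apex, two straight ``side'' edges coming from $\PathSet$, and an outer boundary arc lying on $\MAxis$. The key property I will use is that inside any single $\Polygon \in \PolySet$, the shortest path on $\bd \Polytope$ from $\bpnt$ to any point $\pnt \in \Polygon$ is precisely the straight segment $\bpnt\pnt$ in the planar unfolding; in particular no such geodesic exits and re-enters $\Polygon$.

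Next I would define the motion. Fix an Euler-style traversal of the tree $\MAxis$ that visits each edge exactly twice (once in each direction); this induces a continuous parameterization $\pnt:[0,1] \to \MAxis$. For each $t$, let $\Polygon(t) \in \PolySet$ be a leaf whose boundary contains $\pnt(t)$ (choosing one side consistently at branching times of the Euler tour), and define the leash $\leashX{t}$ to be the straight segment from $\bpnt$ to $\pnt(t)$ in the planar unfolding of $\Polygon(t)$, reinterpreted as a geodesic on $\bd \Polytope$. By the property above this is exactly the surface geodesic from $\bpnt$ to $\pnt(t)$, whose length is $\distP{\pnt(t)} \le \max_{\pnt \in \bd \Polytope} \distP{\pnt}$, giving condition (iv). Optimality is immediate: any leash connecting $\bpnt$ to a furthest point must have length at least $\max_\pnt \distP{\pnt}$.

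For condition (i), the sweep property, I would argue that as $\pnt(t)$ traverses the boundary arc $\MAxis \cap \bd \Polygon$ of a single leaf $\Polygon$, the family of straight segments from $\bpnt$ to $\pnt(t)$ foliates the planar region $\Polygon$ and therefore covers the corresponding surface region exactly once. Summing over all leaves in the Euler order covers $\bd \Polytope$ and yields (iii) as well, since leaves only meet along $\MAxis \cup \PathSet$, which are lower-dimensional and can be traversed at ``instantaneous'' transition times without double-sweeping any open region. Condition (ii), continuity, splits into two cases: within a single leaf, continuity of $\leashX{\cdot}$ follows because the straight segment in the unfolded polygon depends continuously on the endpoint; at a transition between two adjacent leaves $\Polygon,\Polygon'$, the common boundary is either a medial-axis edge (where both leaves agree on $\pnt(t)$ and the geodesic limits coincide, since both definitions produce the same surface segment) or a shortest-path edge of $\PathSet$ (which is itself a geodesic from $\bpnt$, and the limiting leashes from both sides both equal the initial portion of this geodesic up to $\pnt(t)$).

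The main obstacle I anticipate is the continuity at transitions across $\MAxis$. By definition of the medial axis, a point on $\MAxis$ has at least two distinct geodesics to $\bpnt$ of equal length, one in each adjacent leaf; the argument must show that as $\pnt(t)$ approaches $\MAxis$ from inside $\Polygon$, the surface image of the segment $\bpnt\pnt(t)$ converges to one of these two geodesics, and that the Euler traversal is scheduled so that the leaf we switch into uses that same geodesic as its limiting leash. This is exactly the reason we cut along both $\MAxis$ and $\PathSet$ simultaneously and then traverse $\MAxis$ as a tree in an Euler tour: each oriented traversal of a medial edge corresponds to one of the two adjacent leaves, and the leash entering the edge from one side equals the leash leaving it into the same side. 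Running time is polynomial because $|\MAxis|$, $|\PathSet|$, and hence $|\PolySet|$ are polynomial in the combinatorial complexity of $\Polytope$, and each leaf can be processed in time proportional to its size using the star-unfolding algorithm of \cite{aaos-supa-97}.
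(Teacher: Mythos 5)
Your proposal follows essentially the same route as the paper: cut $\bd \Polytope$ along both the medial axis $\MAxis$ and the shortest-path tree $\PathSet$ into star-unfoldable ``banana-peel'' leaves, traverse $\MAxis$ by an Euler tour of the tree (each edge twice, once per adjacent leaf), and take the leash to be the geodesic from $\bpnt$, which is a straight segment in each unfolded leaf. The additional care you take with continuity at transitions across medial-axis and $\PathSet$ edges is exactly the detail the paper leaves implicit, so the argument is correct and matches the paper's.
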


The above is in sharp contrast to our original problem of computing
the homotopy height, as the two ends of the leash must move along two
prespecified curves $\CLeft$ and $\CRight$. Furthermore, because of
that we no longer have the property that the leashes do not jump, as
the leash head no longer moves along the medial axis, as the resulting
paths might be too long (compared to the optimal morph). Nevertheless,
the above captures our basic strategy of breaking the input disk into
smaller slivers, induced by shortest paths, and solving the problem in
each sliver separately, and gluing the solutions together.

\end{document}